\numberwithin{equation}{section}
\newtheorem{theorem}{Theorem}[section]
\newtheorem{corollary}[theorem]{Corollary}
\newtheorem{lemma}[theorem]{Lemma}
\newtheorem{proposition}[theorem]{Proposition}
\theoremstyle{definition}
\newtheorem{remark}[theorem]{Remark}
\newtheorem*{notation}{Notation}
\newcommand{\ind}{1\hspace{-2.1mm}{1}}
\newcommand{\D}{\mathrm{d}}
\newcommand{\E}{\mathrm{e}}
\newcommand{\sgn}{\mathrm{sgn}}
\newcommand{\atanh}{\mathrm{arctanh}}
\def\equalDistrib{\,{\buildrel \Delta \over =}\,}
\begin{document}

\title{Large deviations for the extended Heston model: the large-time case}
\author{Antoine Jacquier}
\address{Department of Mathematics, Imperial College London}
\email{ajacquie@imperial.ac.uk}
\author{Aleksandar Mijatovi\'c}
\address{Department of Statistics, University of Warwick}
\email{a.mijatovic@warwick.ac.uk}
\thanks{The authors would like to thank Jim Gatheral and Claude Martini for useful discussions.}
\date{\today}
\begin{abstract}
We study here the large-time behaviour of all continuous affine stochastic volatility models (in the sense of~\cite{KellerRessel})
and deduce a closed-form formula for the large-maturity implied volatility smile.
Based on refinements of the G\"artner-Ellis theorem on the real line, 
our proof reveals pathological behaviours of the asymptotic smile.
In particular, we show that the condition assumed in~\cite{GJ10} 
under which the Heston implied volatility converges to the SVI parameterisation 
is necessary and sufficient. 
\end{abstract}
\maketitle
%%\tableofcontents

%%%%%%%%%%%%%%%%%%%%%%%%%%%%%%%%%%%%%%%%%%%%%%%%
%%%%%%%%%%%%%%%%%%%%%%%%%%%%%%%%%%%%%%%%%%%%%%%%
\section{Introduction}
We are interested here in the large-time behaviour of the process $\left(t^{-1}X_t\right)_{t>0}$,
where $X$ is defined via the system of stochastic differential equations
$$
\left.
\begin{array}{rll}
\D X_t & =\displaystyle -\frac{1}{2}\left(a+V_t\right)\D t+\rho\sqrt{V_t}\,\D W^1_t+\sqrt{a+\left(1-\rho^2\right)V_t}\,\D W^2_t ,
\quad & X_0=x\in\mathbb{R},\\
\D V_t & = \displaystyle \left(b+\beta V_t\right)\D t+\sqrt{\alpha V_t}\,\D W^1_t,
\quad & V_0=v\in(0,\infty),
\end{array}
\right.
$$
with
$a,b\geq 0$, $\alpha>0$, $\beta\in\mathbb{R}$, $\rho\in\left[-1,1\right]$
and $\left(W_t^1,W_t^2\right)_{t\geq 0}$ is a two-dimensional standard Brownian motion.
The couple $(X_t,V_t)_{t\geq 0}$ represents the restriction to continuous paths of the whole class of affine stochastic volatility models with jumps (ASVM), introduced by Keller-Ressel~\cite{KellerRessel}.
In particular it encompasses the popular Heston stochastic volatility model~\cite{Heston},
in which $b>0$ and $\beta<0$.
The weak convergence of the process $\left(t^{-1}X_t\right)_{t>0}$ has been studied 
in~\cite{FJ09,FJM} for the Heston model and 
in~\cite{JKRM} for ASVM, via the G\"artner-Ellis theorem from large deviations theory.
This convergence is the main ingredient needed to obtain the large-maturity behaviour of the implied volatility in these models.
However the authors have imposed technical conditions on the parameters, 
which ensures that the assumptions of the G\"artner-Ellis theorem are met:
(i) the limiting cumulant generating function $\Lambda$ is essentially smooth inside a domain $\mathcal{D}$
and (ii) the interior $\mathcal{D}$ contains the origin.

Even though these conditions are usually satisfied in practice, 
they can actually be broken when calibrating the model for volatile markets. 
In terms of the parameters these two conditions---assumed in~\cite{FJ09,FJM}---read $\beta<0$ and $\beta+\rho\sqrt{\alpha}<0$. 
The second assumption makes sense on equity markets where the correlation is usually negative.
However, on FX markets, the correlation between the asset and its volatility is not necessarily so (see~\cite{Janek} for instance), 
and a large value of the variance of volatility parameter $\alpha$ can violate this assumption.
In~\cite{Andersen}, Andersen and Piterbarg studied the moment explosions of the Heston model (and other stochastic volatility models).
They assume $\beta<0$, but it appears that the restriction $\beta+\rho\sqrt{\alpha}<0$ may also be needed.
In~\cite{Zeliade} the authors highlighted the importance of this latter condition by proving that the Heston model remains of Heston form
under the Share measure (i.e. taking the share price as the numeraire) with new mean-reversion speed $-(\beta+\rho\sqrt{\alpha})$.
This in particular implies that the left wing of the smile could be deduced from the right wing automatically by symmetry. 
This may not be true however when this condition fails.
Reversing the symmetry, the case where the mean-reversion $-\beta$ (in the original measure) is positive becomes interesting to study as well.

We show here that a large deviations principle still holds (as $t$ tends to infinity) for the process
 $\left(t^{-1}X_t\right)_{t>0}$ when the two conditions (i) and (ii) above fail, i.e. without the technical assumptions of~\cite{FJ09, FJM, JKRM}.
As an application, we translate this asymptotic behaviour into asymptotics of the implied volatility, 
corresponding to European vanilla options with payoff $\left(\E^{X_t}-\E^{xt}\right)_+$, 
for any real number $x$.
In~\cite{GJ10}, the authors proved that the so-called \textit{Stochastic Volatility Inspired} (SVI) 
parametric form---first proposed in~\cite{GatheralSVI}---of the implied volatility was the genuine limit 
(as the maturity tends to infinity) of the Heston implied volatility under the same technical conditions as in~\cite{FJ09, FJM, JKRM}. 
We extend the scope of this result by proving that it remains partially true---i.e. on some subsets of the real line---without the technical conditions mentioned above.

In Section~\ref{sec:LDPHeston}, we study the limiting behaviour of the limiting cumulant generating function of the process $\left(t^{-1}X_t\right)_{t>0}$ and state the main result of the paper (Theorem~\ref{thm:LDPAffineThm}), 
i.e. a large deviations principle for this process.
In Section~\ref{sec:RateFunctionsAndOptionPrices}, we translate this LDP into option price and implied volatility asymptotics.
Section~\ref{sec:ProofLDPThm} contains the proof of the main theorem 
and Section~\ref{app:TechLemma} contains some technical results needed in the proof of the main theorem.

%%%%%%%%%%%%%%%%%%%%%%%%%%%%%%%%%%%%%%%%%%%%%%%%
%%%%%%%%%%%%%%%%%%%%%%%%%%%%%%%%%%%%%%%%%%%%%%%%
\section{LDP for continuous affine stochastic volatility models} \label{sec:LDPHeston}

\subsection{The model and its effective domain}
Throughout this paper we work on a probability space $\left(\Omega,\mathcal{F},\mathbb{P}\right)$ 
equipped with a filtration $\left(\mathcal{F}_t\right)_{t\geq 0}$ supporting two independent Brownian motions $W^1$ and $W^2$.
We consider affine stochastic volatility models in the sense of~\cite{KellerRessel} with continuous paths.
Let $\left(X_t,V_t\right)_{t\geq 0}$ be an affine process 
with state-space $\mathbb R\times \mathbb R_+$
which satisfies the following SDE
\begin{equation}\label{eq:DefAffine}
\left.
\begin{array}{rll}
\D X_t & =\displaystyle -\frac{1}{2}\left(a+V_t\right)\D t+\rho\sqrt{V_t}\,\D W^1_t+\sqrt{a+\left(1-\rho^2\right)V_t}\,\D W^2_t ,
\quad & X_0=x\in\mathbb{R},\\
\D V_t & = \displaystyle \left(b+\beta V_t\right)\D t+\sqrt{\alpha V_t}\,\D W^1_t,
\quad & V_0=v\in(0,\infty),
\end{array}
\right.
\end{equation}
where the admissible parameter values are given by
\begin{align}
\label{eq:Params}
a,b\geq 0,\quad \alpha>0,\quad \beta\in\mathbb{R}\quad\text{and}\quad\rho\in\left[-1,1\right].
\end{align}
The process $\left(V_t\right)_{t\geq 0}$ is a square-root diffusion process and 
the Yamada-Watanabe conditions~\cite{KarSh} ensure that 
a unique non-negative strong solution exists.
The share price process $S=(S_t)_{t\geq0}$,
defined by $S_t:=\exp\left(X_t\right)$, is a local martingale
with respect to the filtration $\left(\mathcal{F}_t\right)_{t\geq0}$,
and~\cite[Theorem~2.5]{KellerRessel} implies that $S$ is a true  martingale.
The Heston model~\cite{Heston} with mean-reversion rate
$\kappa$, positive long-time variance level $\theta$, 
volatility of volatility $\sigma$ and correlation $\rho$, is in the class of models
given by the SDE in~\eqref{eq:DefAffine}
(take $a=0$, $b=\kappa\theta>0$, $\beta=-\kappa<0$, $\alpha=\sigma^2$;
the correlation parameter $\rho$ has the same role as in~\eqref{eq:DefAffine}).

\begin{remark}\text{ }
\begin{itemize}
\item[(i)] The class of models defined by~\eqref{eq:DefAffine} 
coincides with the class of affine stochastic volatility models with continuous sample paths.
\item[(ii)] The parameter $a$ adds modelling flexibility.
\item[(iii)] The general form of the instantaneous variance of a continuous affine stochastic volatility 
log-stock process $X$ is given by 
$a+\widetilde{\alpha}V$
for some $\widetilde{\alpha}>0$. 
A simple scaling of the process $V$ in~\eqref{eq:DefAffine}
maps the class of models given by~\eqref{eq:DefAffine} to the general case.
Without loss of generality we therefore assume
$\widetilde{\alpha}=1$. 
\item[(iv)] The process $U=\left(U_t\right)_{t\geq 0}$ defined by $U_t:=a+V_t$ for all $t\geq 0$ follows
the shifted square-root dynamics (see~\cite{Linetsky} for applications of 
the shifted square-root process in pricing theory).
\end{itemize}
\end{remark}

Let us define the cumulant generating function
\footnote{We will use here the terms ``logarithmic moment generating function'' and ``cumulant generating function'' as synonyms.}
$\Lambda_t$ of the random variable $X_t$, where $X_0=0$, by
\begin{equation}\label{eq:DefLaplace}
\Lambda_t\left(u\right):=\log\mathbb{E}\left(\exp\left(u X_t\right)\right),\quad
\text{for any}\quad
u\in\mathbb R,\quad t\geq0,
\end{equation}
as an extended real number in $(-\infty,\infty]$.
The effective domain of $\Lambda_t$ is defined by
$\mathcal{D}_t:=\left\{u\in\mathbb{R}:\Lambda_t\left(u\right)<\infty\right\}$. 
Note that by the H\"older inequality the function $\Lambda_t$ is convex on $\mathcal{D}_t.$
In order to give the structure of $\Lambda_t(u)$ explicitly we need to define 
\begin{equation}\label{eq:DefChi}
\chi\left(u\right):=\beta+u\rho\sqrt{\alpha},
\end{equation}
as well as
\begin{equation}\label{eq:DefGamma}
\gamma\left(u\right)  := 
\left(\chi\left(u\right)^2+\alpha u\left(1-u\right)\right)^{1/2}
\quad\text{and}\quad 
f_t\left(u\right) := 
\cosh\left(\frac{\gamma\left(u\right)t}{2}\right)-\frac{\chi\left(u\right)}
{\gamma\left(u\right)}\sinh\left(\frac{\gamma\left(u\right)t}{2}\right).
\end{equation}

In Proposition~\ref{prop:LaplaceAffine} 
we show how to express the cumulant generating function 
of $X$ in terms of the logarithmic moment generating function of
model~\eqref{eq:DefAffine} with $a=0$.

\begin{proposition}\label{prop:LaplaceAffine}
The logarithmic moment generating function $\Lambda_t$ 
defined in~\eqref{eq:DefLaplace} reads
$$\Lambda_t\left(u\right)=\Lambda^H_t\left(u\right)+\frac{a}{2}u\left(u-1\right)t,
\quad\text{for all }t\geq 0\text{ and } u\in\mathcal{D}_t,$$
where $\Lambda^H_t$ is given by~\eqref{eq:DefLaplace}
for the process $X$ in~\eqref{eq:DefAffine} with $a=0$.
Furthermore we have
$$\mathcal D_t=\{u\in\mathbb R\>:\>\Lambda_t^H(u)<\infty\}$$
and the following formula holds
\begin{equation}\label{eq:LambdaT}
\Lambda_t^H\left(u\right)=-\frac{2b}{\alpha}\left(
\frac{\chi\left(u\right)t}{2}+\log f_t\left(u\right)\right)+
\frac{u\left(u-1\right)}{f_t\left(u\right)\gamma(u)}\sinh\left(\frac{\gamma\left(u\right)t}{2}\right)v,
\qquad\text{for all}\quad u\in\mathcal D_t.
\end{equation}
\end{proposition}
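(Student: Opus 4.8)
The plan is to derive the formula for $\Lambda_t^H$ by the standard Riccati/affine machinery and then obtain the statement about $\Lambda_t$ and its effective domain by comparison. The key observation is that for an affine process the conditional Laplace transform factorises exponentially in the state variables. Concretely, writing $X_0=0$, $V_0=v$ and fixing $u\in\mathbb{R}$, I would look for functions $\phi(\cdot,u)$ and $\psi(\cdot,u)$ such that
\[
\mathbb{E}\bigl(\exp(uX_t)\bigr)=\exp\bigl(\phi(t,u)+\psi(t,u)v\bigr),
\]
at least on the set of $u$ where the left-hand side is finite. Applying Itô's formula to $\exp(uX_s+\psi(t-s,u)V_s)$ (in the $a=0$ case) and using the SDE~\eqref{eq:DefAffine} together with the martingale property, one reads off the system of Riccati ODEs: $\partial_s\psi=\tfrac12\alpha\psi^2+\chi(u)\psi+\tfrac12 u(u-1)$ with $\psi(0,u)=0$, and $\partial_s\phi=b\,\psi$ with $\phi(0,u)=0$, where $\chi$ is defined in~\eqref{eq:DefChi}. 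The first is a scalar Riccati equation with constant coefficients, whose discriminant is exactly $\chi(u)^2+\alpha u(1-u)=\gamma(u)^2$; solving it in closed form and then integrating $\phi$ yields, after simplification, precisely the expression~\eqref{eq:LambdaT} with the auxiliary function $f_t$ of~\eqref{eq:DefGamma}. This is essentially the classical Heston computation, so I would present the ODEs, quote that the solution is standard, and verify the boundary behaviour rather than grind through every hyperbolic identity.

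Having the formula on the interior of the domain, the next step is to handle the effective domain itself and make the heuristic rigorous. The clean way is to argue by a localisation/monotone-convergence scheme: define stopping times $\tau_n=\inf\{s:V_s\geq n\}$, show that $\exp(uX_{s\wedge\tau_n}+\psi(t-s\wedge\tau_n,u)V_{s\wedge\tau_n})$ is a genuine (bounded, hence true) martingale for each $n$, take expectations at $s=t$, and then let $n\to\infty$ using Fatou on one side and the explicit candidate on the other; positivity of the integrand and the fact that $\psi(0,u)=0$ make both bounds go through. This simultaneously proves $\Lambda_t^H(u)=\phi(t,u)+\psi(t,u)v$ whenever the right-hand side is finite and that $\Lambda_t^H(u)=\infty$ otherwise, i.e. that $\mathcal{D}_t=\{u:\Lambda_t^H(u)<\infty\}$ coincides with the set on which~\eqref{eq:LambdaT} is finite (equivalently, on which $f_t(u)>0$, since $f_t$ has at most one sign change for each $t$).

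For the relation $\Lambda_t(u)=\Lambda_t^H(u)+\tfrac{a}{2}u(u-1)t$ I would exploit the structure of the SDE for $X$ directly: with $U_t=a+V_t$ as in Remark~(iv), the quadratic variation of $X$ is $\int_0^t(a+V_s)\,\mathrm{d}s$ and the drift is $-\tfrac12\int_0^t(a+V_s)\,\mathrm{d}s$, so the extra $a$ enters only through a deterministic shift of the integrator. Either one redoes the Itô/Riccati computation with general $a$ — the term $\tfrac12 a\,u(u-1)$ simply adds to $\partial_s\phi$ and integrates to $\tfrac{a}{2}u(u-1)t$, leaving $\psi$ unchanged — or one writes $X_t=X_t^H-\tfrac{a}{2}t+\sqrt{a}\,B_t$ for a Brownian motion $B$ suitably correlated with $W^2$ and independent of $V$, and computes the Gaussian contribution explicitly. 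In particular the exponential moment is finite for $X_t$ exactly when it is finite for $X_t^H$ (the added Gaussian piece has moments of all orders), which gives $\mathcal{D}_t=\{u:\Lambda_t^H(u)<\infty\}$ as claimed.

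The main obstacle is the rigorous justification of the martingale property and the passage to the limit in the localisation argument on the maximal domain of $u$ — i.e. showing that the formal Riccati solution really equals $\Lambda_t^H(u)$ up to and including the boundary of $\mathcal{D}_t$, and that no finiteness is lost or spuriously gained. Everything else (solving the Riccati equation, the hyperbolic bookkeeping leading to $f_t$ and $\gamma$, the additive $a$-correction) is routine once that foundation is in place; alternatively, if~\cite{KellerRessel} or~\cite{JKRM} already records the affine transform formula in the required generality, I would simply cite it and restrict the work to re-expressing it in the stated closed form.
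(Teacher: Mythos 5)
Your proposal is correct and follows essentially the same route as the paper: invoke the affine transform ansatz $\mathbb{E}(e^{uX_t})=\exp(\phi_t(u)+\psi_t(u)v)$, write down the Riccati system, solve the $\psi$-equation in closed form (discriminant $\gamma(u)^2$, leading to $f_t$), and then obtain the $\frac{a}{2}u(u-1)t$ correction by observing that in the $\phi$-equation, $\partial_t\phi_t=F(u,\psi_t)=\frac{a}{2}u(u-1)+b\,\psi_t$, so $a$ contributes only the deterministic additive polynomial; the paper goes exactly this way and simply cites \cite{KellerRessel} for the validity of the transform formula on the effective domain, whereas you sketch a localization argument to supply that foundation. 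Two small inaccuracies in your alternatives: the distributional identity $X_t\stackrel{d}{=}X_t^H-\tfrac{a}{2}t+\sqrt{a}\,B_t$ requires $B$ to be a fresh Brownian motion \emph{independent} of $(W^1,W^2,V)$ — one must pass to the SDE representation with diffusion $\sqrt{(1-\rho^2)V_t}\,\mathrm{d}W^2_t+\sqrt{a}\,\mathrm{d}W^3_t$, since $\sqrt{a+(1-\rho^2)V_t}$ does not split additively — rather than "correlated with $W^2$"; and your parenthetical that $f_t$ "has at most one sign change" is false in case (ii)(b) of Proposition~\ref{prop:DomainHeston}, where $f_t$ vanishes at both $\underline{u}(t)<0$ and $\overline{u}(t)>1$, so the domain is the maximal interval containing $[0,1]$ on which $f_t>0$.
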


\begin{proof}
\begin{comment}
The generator $\mathcal{L}$ of the Markov process defined by SDE~\eqref{eq:DefAffine}, has the form
$$\mathcal{L}f(x,v)
=\left(\frac{1}{2}\left(a+v\right)\partial_{xx}f+\alpha v\partial_{vv}f+\rho\sqrt{\alpha}v\partial_{xv}f+\left(b+\beta v\right)\partial_vf-
\frac{1}{2}\left(a+v\right)\partial_xf\right)(x,v)$$
for any smooth function $f$.
\end{comment}

It is well known that the logarithmic moment generating function of an affine process $X$
given as a solution of SDE~\eqref{eq:DefAffine} is of the form
$$\Lambda_t\left(u\right)=\phi_t\left(u\right)+\psi_t\left(u\right)v\quad
\text{for all }t\geq 0\text{ and } u\in\mathcal{D}_t,$$
where the functions 
$\phi_t,\psi_t:\mathcal D_t\to\mathbb R$ 
satisfy the system of Riccati equations (see e.g.~\cite{KellerRessel})
\begin{equation}\label{eq:Riccati}
\left.
\begin{array}{rl}
\partial_t\phi_t\left(u\right) & = F\left(u,\psi_{t}\left(u\right)\right), \qquad \phi_0\left(u\right)=0,\\
\partial_t\psi_t\left(u\right) & = R\left(u,\psi_{t}\left(u\right)\right), \qquad \psi_0\left(u\right)=0,
\end{array}
\right.
\end{equation}
with
$$R\left(u,w\right):=\frac{1}{2}u\left(u-1\right)+\frac{\alpha}{2}w^2+u w \rho\sqrt{\alpha}+\beta w
\qquad\text{and}\qquad
F\left(u,w\right):=\frac{a}{2}u\left(u-1\right)+bw.
$$
The Riccati equation equation for $\psi_t$ can be solved in closed form
$$\psi_t\left(u\right)=\sinh\left(\frac{\gamma\left(u\right)t}{2}\right)
\frac{u\left(u-1\right)}{\gamma\left(u\right)f_t\left(u\right)},$$
where the functions $\gamma$ and $f_t$ are defined in~\eqref{eq:DefGamma}.
The function $\phi_t$ can be determined by noting that equation~\eqref{eq:Riccati} is equivalent to
$\phi_t\left(u\right)=\int_{0}^{t}F\left(u,\psi_s\left(u\right)\right)\D s$.
Therefore 
$\phi_t\left(u\right)=au\left(u-1\right)t/2+b\int_{0}^{t}\psi_s\left(u\right)\D s$.
The function $\Lambda_t^H$ can be constructed in an analogous way on the set
$\{u\in\mathbb R\>:\>\Lambda_t^H(u)<\infty\}$
with $R$ and $F$ as above and $a=0$.
This concludes the proof.
\end{proof}

In order to analyse the effective domain $\mathcal D_t$ we need to introduce the quantities $u_-$ and $u_+$ given by
\begin{equation}\label{eq:DefUM}
u_-:=
\left\{
\begin{array}{ll}
\displaystyle \frac{1}{2\sqrt{\alpha}}\frac{2\beta\rho+\sqrt{\alpha}-\sqrt{\left(2\beta\rho+\sqrt{\alpha}\right)^2+4\beta^2\left(1-\rho^2\right)}}{1-\rho^2},
\quad & \text{if }\left|\rho\right|<1,\\
-\infty,
\quad & \text{if }\left|\rho\right|=1\text{ and }2\beta\rho+\sqrt{\alpha}\leq 0,\\
\displaystyle -\beta^2/\left(2\beta\rho\sqrt{\alpha}+\alpha\right),
\quad & \text{if }\left|\rho\right|=1\text{ and }2\beta\rho+\sqrt{\alpha}>0,
\end{array}
\right.
\end{equation}
and
\begin{equation}\label{eq:DefUP}
u_+:=\left\{
\begin{array}{ll}
\displaystyle \frac{1}{2\sqrt{\alpha}}\frac{2\beta\rho+\sqrt{\alpha}+\sqrt{\left(2\beta\rho+\sqrt{\alpha}\right)^2+4\beta^2\left(1-\rho^2\right)}}{1-\rho^2},
\quad & \text{if }\left|\rho\right|<1,\\
\infty,
\quad & \text{if }\left|\rho\right|=1\text{ and }2\beta\rho+\sqrt{\alpha}\geq 0,\\
\displaystyle -\beta^2/\left(2\beta\rho\sqrt{\alpha}+\alpha\right),
\quad & \text{if }\left|\rho\right|=1\text{ and }2\beta\rho+\sqrt{\alpha}<0.
\end{array}
\right.
\end{equation}
Note that the inequalities $u_-\leq0$ and $u_+\geq1$ hold for all admissible values of the parameters and that in the case
$|\rho|<1$ the parabola $\gamma(u)^2$ is strictly positive on the interior of the interval $[u_-,u_+]$ between its distinct zeros. 
In the case $|\rho|=1$ the graph of the function $\gamma(u)^2$ is a line and either $u_-$ or $u_+$ are infinite.
For notational convenience we shall understand the interval 
$\left[x,y\right]\subset\mathbb R$ 
as $[x,\infty)$ if $y=\infty$ and as $(-\infty,y]$ if $x=-\infty$.
Proposition~\ref{prop:DomainHeston} analyses the structure of the effective domain $\mathcal{D}_t$ 
of the function $\Lambda_t$. 

\begin{proposition}
\label{prop:DomainHeston}
The effective domain $\mathcal D_t$ of the cumulant generating function $\Lambda_t$
(defined in~\eqref{eq:DefLaplace}) satisfies $[0,1]\subset\mathcal D_t$
for all $t\geq0$ and any set of admissible parameter values from~\eqref{eq:Params}.
Furthermore the following statements hold.
\begin{itemize}
\item[(i)] If $\chi(0)\leq0$ %the inclusion $\left[u_-,1\right]\subset\mathcal{D}_t$ holds. Furthermore 
we have:
\begin{itemize}
\item[(a)] if $\chi\left(1\right) \leq 0$ then $\left[u_-,u_+\right]\subset\mathcal{D}_t$ for any $t>0$; 
\item[(b)] if $\chi\left(1\right)>0$ then for all $t$ large enough there exists $\overline{u}(t)\in(1,u_+)$
such that
$$\lim_{t\to\infty}\overline{u}\left(t\right)=1 \quad\text{and}\quad
[u_-,\overline{u}(t))\subset\mathcal{D}_t\subset\left(-\infty,\overline{u}\left(t\right)\right).$$ 
\end{itemize}
\item[(ii)] If $\chi(0)>0$ we have:
\begin{itemize}
\item[(a)] if $\chi\left(1\right)\leq 0$ then for all large $t$ there exists 
$\underline{u}(t)\in(u_-,0)$ such that
$$\lim_{t\to\infty}\underline{u}\left(t\right)=0 \quad\text{and}\quad
(\underline{u}\left(t\right),u_+]\subset\mathcal{D}_t\subset\left(\underline{u}\left(t\right),\infty\right);$$ 
\item[(b)] if $\chi\left(1\right)>0$ then for large $t$ there exist 
$\underline{u}(t)\in(u_-,0)$ and $\overline{u}(t)\in(1,u_+)$ such that 
$$\lim_{t\to\infty}\underline{u}\left(t\right)=0,\quad
\lim_{t\to\infty}\overline{u}\left(t\right)=1\quad\text{and}\quad
\mathcal{D}_t=\left(\underline{u}\left(t\right),\overline{u}\left(t\right)\right).$$
\end{itemize}
\end{itemize}
\end{proposition}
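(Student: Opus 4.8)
The plan is to read the effective domain directly off the closed-form expression~\eqref{eq:LambdaT}. First, $\Lambda_t$ is convex (H\"older), so $\mathcal D_t$ is an interval; since $\Lambda_t(0)=0$ and $\Lambda_t(1)=\log\mathbb E(S_t)=0$ (the share price $S=\E^X$ is a true martingale started at $S_0=1$ because $X_0=0$), we get $[0,1]\subset\mathcal D_t$ for all $t\ge0$. For the rest I would use the affine moment-explosion characterisation: $u\in\mathcal D_t$ if and only if $T^*(u):=\sup\{s\ge0:u\in\mathcal D_s\}>t$, and this explosion time equals the first zero of the map $s\mapsto f_s(u)$ (with $\inf\emptyset=\infty$), because $\psi_s(u)=\sinh(\gamma(u)s/2)\,u(u-1)/\big(\gamma(u)f_s(u)\big)$, and hence $\Lambda_s^H(u)$, blows up exactly where $f_s(u)$ first vanishes. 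Thus the whole statement reduces to tracking the zeros of the curves $s\mapsto f_s(u)$ as $u$ varies.

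On $[u_-,u_+]$ one has $\gamma(u)^2\ge0$, so $\gamma(u)\ge0$ is real and
$$f_t(u)=\frac{1}{2\gamma(u)}\Big[(\gamma(u)-\chi(u))\,\E^{\gamma(u)t/2}+(\gamma(u)+\chi(u))\,\E^{-\gamma(u)t/2}\Big],$$
understood by continuity as $1-\chi(u)t/2$ when $\gamma(u)=0$. I would then split according to the sign of $\chi(u)$ relative to $\pm\gamma(u)$. If $|\chi(u)|\le\gamma(u)$ — which, since $\gamma(u)^2-\chi(u)^2=\alpha u(1-u)$, occurs exactly on $[0,1]$ — both bracketed coefficients are non-negative and $f_t(u)>0$ for all $t$. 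If $\chi(u)<-\gamma(u)<0$, then $s\mapsto f_s(u)$ is strictly increasing from $1$, hence again positive for all $t$. If $\chi(u)>\gamma(u)\ge0$, then $s\mapsto f_s(u)$ is strictly decreasing from $1$ to $-\infty$, with unique zero $t^*(u)=\gamma(u)^{-1}\log\big((\chi(u)+\gamma(u))/(\chi(u)-\gamma(u))\big)$. The next ingredient is the sign lemma $\sgn\chi(u_-)=\sgn\chi(0)$ and $\sgn\chi(u_+)=\sgn\chi(1)$, which I would prove by viewing $u_-$ and $u_+$ as the intersection points of the affine graph of $\chi$ with the graphs of $u\mapsto\pm\sqrt{\alpha u(u-1)}$ over $(-\infty,0]$ and $[1,\infty)$, where $u\mapsto\sqrt{\alpha u(u-1)}$ is strictly concave, so that a simple count of intersections forces $u_\pm$ onto the branch matching the sign of $\chi$ at $0$, resp. $1$. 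Since $\chi$ is affine, the sign of $\chi$ is then constant on $[u_-,0]$ (equal to $\sgn\chi(0)$) and on $[1,u_+]$ (equal to $\sgn\chi(1)$). Hence the ``wing'' $[u_-,0]$ lies in $\mathcal D_t$ for all $t$ when $\chi(0)\le0$ and is in the ``explosive'' regime otherwise, and likewise $[1,u_+]$ according to the sign of $\chi(1)$.

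To extract the boundary functions in the explosive regimes I would compute the limits of $t^*$ along the relevant wing: $t^*(u)\to\infty$ as $u\to0^-$ (resp. $u\to1^+$), because $\chi(u)-\gamma(u)\to0^+$ there, and $t^*(u)\to2/\chi(u_-)<\infty$ (resp. $2/\chi(u_+)<\infty$) as $u$ tends to the finite root. With the interval property of $\mathcal D_t$ and continuity of $t^*$, this produces, for all $t$ large, a $\underline u(t)\in(u_-,0)$ (resp. $\overline u(t)\in(1,u_+)$) with $f_t(\underline u(t))=0$, and letting $t\to\infty$ forces $\underline u(t)\to0$ (resp. $\overline u(t)\to1$) because each fixed $u$ in the wing has finite explosion time. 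For $u\notin[u_-,u_+]$ one has $\gamma(u)^2<0$; writing $\widehat\gamma(u):=(-\gamma(u)^2)^{1/2}>0$, the curve $f_t(u)=\cos(\widehat\gamma(u)t/2)-\chi(u)\widehat\gamma(u)^{-1}\sin(\widehat\gamma(u)t/2)$ oscillates and has a finite first zero, so such $u$ leave $\mathcal D_t$ once $t$ is large; by the interval property this yields the one-sided bounds $\mathcal D_t\subset(-\infty,\overline u(t))$ and $\mathcal D_t\subset(\underline u(t),\infty)$, and, when $\chi(u_\pm)>0$ (so that this first zero tends to $2/\chi(u_\pm)<\infty$ as $u$ approaches $u_\pm$ from outside $[u_-,u_+]$), that $\mathcal D_t$ does not reach $u_\pm$. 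Assembling the four combinations of signs of $\chi(0)$ and $\chi(1)$ gives (i)(a), (i)(b), (ii)(a), (ii)(b). The case $|\rho|=1$ runs identically, with $\gamma(u)^2$ affine in $u$ rather than quadratic and one of $u_\pm$ infinite, the behaviour at the infinite end following from $|\chi(u)|/\gamma(u)\to\infty$ as $|u|\to\infty$.

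The step I expect to be most delicate is the explosion-time characterisation itself — namely that positivity of the closed-form expression genuinely certifies $\mathbb E(\E^{uX_t})<\infty$ and that this moment is infinite at and past the first zero of $f_\cdot(u)$ — together with the bookkeeping of the degenerate configurations ($\rho=0$; $\chi(0)=\beta=0$, where $u_-=0$; $\chi(1)=0$, where $u_+=1$; and $|\rho|=1$ with $2\beta\rho+\sqrt\alpha=0$), each of which collapses part of the picture above and must be checked separately.
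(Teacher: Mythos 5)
Your proposal is correct and pivots on the same objects as the paper's proof: positivity of $f_t$ on $[u_-,1]$ and $[1,u_+]$, the sign of $\chi$ on those wings, the first zeros of $f_t$ near $0$ and $1$ as $t\to\infty$, and the interval property from convexity. The one genuine difference is the justification for reducing membership in $\mathcal D_t$ to the sign of $f_t(u)$: you invoke the affine moment-explosion characterisation (first blow-up of the Riccati flow $\psi_\cdot(u)$, hence of $\Lambda_\cdot^H(u)$, occurring at the first zero of $f_\cdot(u)$), whereas the paper argues by analytic continuation of both sides of~\eqref{eq:LambdaT} from the interval $[0,1]$ (where finiteness is known by martingality) across a complex neighbourhood of $[u_-,1]$, and then uses convexity of $\Lambda_t$ to exclude everything at and beyond the first zero. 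Both are legitimate; the explosion-time route is arguably the more standard one and makes the ``iff'' transparent, but note that your definition $T^*(u)=\sup\{s\ge 0:u\in\mathcal D_s\}$ is circular as written unless you separately know $t\mapsto\mathcal D_t$ is nested, which is exactly what the Riccati-blowup picture gives you; you should define $T^*(u)$ as the blow-up time of $s\mapsto\psi_s(u)$ and cite the affine moment-explosion theorem (Keller-Ressel, Andersen--Piterbarg) for the identity with the boundary of $\mathcal D_t$. Your sign lemma $\sgn\chi(u_\pm)=\sgn\chi(0)$, $\sgn\chi(1)$ also admits a one-line algebraic proof (if $\chi$ vanished in $(u_-,0)$ or $(1,u_+)$ then $\gamma^2=\alpha u(1-u)<0$ there, a contradiction), which is closer to the paper's ``elementary algebra'' and avoids the geometric intersection-counting, but your version is fine.
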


\begin{remark}
\label{rem:SimpleFacts}
The following elementary facts are useful in the proof of Proposition~\ref{prop:DomainHeston}.
\begin{enumerate}
\item[(I)] Note that 
$u_-=-\infty$ and $u_+=\infty$ if and only if the conditions 
$|\rho|=1$ and $\sqrt{\alpha}+2\rho\beta=0$ hold.
\item[(II)] The condition $\chi(1)\neq 0$ implies that $u_+>1$ since $u_+$ is the largest root of the quadratic
$\gamma(u)^2$ in~\eqref{eq:DefGamma}.
In particular in (i)(b) and (ii)(b) of Proposition~\ref{prop:DomainHeston} 
the interval $(1,u_+)$ is not empty.
\item[(III)]  The condition $\chi(0)\neq 0$ implies that $u_-<0$.
In particular in (ii) we have $\chi(0)=\beta>0$ and hence the interval $(u_-,0)$ is not empty.
\item[(IV)] The interval $[0,1]$ is contained in $\mathcal D_t$ for all $t\geq0$ 
since the stock price process $(S_0\exp(X_t))_{t\geq0}$ is a true martingale.
\item[(V)] If $\chi(0)=0$ then $u_-=0$ and $u_+=1/(1-\rho^2)$ for $|\rho|<1$ and $u_+=\infty$ for $|\rho|=1$. 
\end{enumerate}
\end{remark}

\begin{remark}
The variance process $\left(V_t\right)_{t\geq 0}$ in~\eqref{eq:DefAffine} 
is a time-changed squared Bessel process (see~\cite{Shirakawa}):
$$\left(V_{t}\right)_{t\geq 0}\equalDistrib \E^{\beta t}R^{2}_{\delta,\tau_t},$$
where $\tau_t:=\alpha^{4}\left(1-\E^{-\beta t}\right)/\left(4\beta\right)$, 
and $\left(R^{2}_{\delta,t}\right)_{t\geq 0}$ is a squared Bessel process of dimension
$\delta:=4b/\alpha^{4}$, i.e. $\D R_{t}^{2}=2R_{t}\,\D W_{t}+\delta\,\D t$ and $R^{2}_{\delta,0}=0$.
The sign of $\chi(0)=\beta$ changes the convexity of the time-change $\tau_t$.
\end{remark}

\begin{proof}
Proposition~\ref{prop:LaplaceAffine} implies that it is enough to study the effective domain
of the cumulant generating function $\Lambda^H_t$ of the Heston model.
It is clear that the function $f_t$, defined 
in~\eqref{eq:DefGamma}
by
$$f_t\left(u\right)=\cosh\left(\frac{\gamma\left(u\right)t}{2}\right)
-\frac{\chi\left(u\right)}{\gamma\left(u\right)}\sinh\left(\frac{\gamma\left(u\right)t}{2}\right),$$
will play a key role in in understanding the set $\mathcal D_t$.

\noindent \textbf{Case (i):} 
If we can prove that 
\begin{equation}\label{eq:f_positive}
f_t(u)  > 0, \quad \text{for all } u\in\left[u_-,1\right],
\end{equation}
then Proposition~\ref{prop:LaplaceAffine}
implies that $[u_-,1]\subset\mathcal D_t$
since the functions on both sides of~\eqref{eq:LambdaT} can be analytically extended to a neighbourhood of 
$[u_-,1]$ in the complex plane and hence coincide on the interval.

We now prove~\eqref{eq:f_positive}.
It follows from the definition of $\gamma$ in~\eqref{eq:DefGamma} that
$|\chi(u)/\gamma(u)|\leq1$ for all $u\in[0,1]$ and hence~\eqref{eq:f_positive} holds on $[0,1]$.
It is easy to see that $\lim_{u\searrow u_-}\chi(u)\leq0$.
Since $\chi(0)=\beta\leq0$ we have $\chi(u)\leq0$ for all $u\in[u_-,0]$ which implies~\eqref{eq:f_positive}.

In case (i)(a) assume first that $u_+<\infty$.
Then elementary algebra shows that $\chi(u_+)\leq0$.
Therefore  $\chi(u)\leq0$, and hence $f_t(u)>0$, for all $u\in[1,u_+]$.
If $u_+=\infty$ the condition $\chi(1)\leq0$ implies that $\rho=-1$ and therefore $\chi(u)<0$ for all $u\geq1$.
Hence $f_t(u)\in(0,\infty)$ for all $u\in[1,\infty)=[1,u_+]$.
Proposition~\ref{prop:LaplaceAffine} and the analytic continuation argument as above imply $[u_-,u_+]\subset\mathcal D_t$.

Recall that in case (i)(b) we have $u_+>1$ (see Remark~\ref{rem:SimpleFacts}~(II)).
Let $\overline u(t)$ be the smallest solution of the equation $f_t(u)=0$ in the interval $(1,u_+)$.
Note that, since $\gamma$ is strictly positive on the interval $(1,u_+)$, for a fixed $t$ the equation $f_t(u)=0$ can be rewritten as
\begin{equation}\label{eq:DefF}
t=F(u),\qquad\text{where}\quad 
F(u):=\frac{2}{\gamma(u)}\>\atanh\left(\frac{\gamma(u)}{\chi(u)}\right).
\end{equation}
This equation has a solution in $(1,u_+)$ for large  $t$ since the continuous function  $F$ tends to infinity
as $u$ decreases to $1$ (since $\lim_{u\searrow1}\gamma(u)/\chi(u)=1$).
This also implies that the smallest solution $\overline u(t)$ decreases to one.
The functions on both sides of~\eqref{eq:LambdaT} coincide on $[u_-,1]$,
are analytic on some neighbourhood of this interval in the complex plane and the right-hand side in~\eqref{eq:LambdaT}
is real and finite on $[u_-,\overline u(t))$. 
They must therefore also coincide on $[u_-,\overline u(t))$,
which in particular implies $[u_-,\overline u(t))\subset\mathcal D_t$.
Formula~\eqref{eq:LambdaT} implies that $\overline u(t)$ is not an element of $\mathcal D_t$ and the convexity of
$\Lambda_t$ yields that $\mathcal D_t\cap[\overline u(t),\infty)=\emptyset$.

\noindent \textbf{Case (ii):} 
In case (ii)(a) the condition $\chi(1)\leq0$ implies $\rho<0$ and hence $\chi(u)\leq0$ for all $u\in[1,u_+]$.
Therefore $f_t(u)>0$ on $[1,u_+]$ and hence $[0,u_+]\subset\mathcal D_t$.
Let $\underline u(t)$ be the largest solution of the equation $f_t(u)=0$ in the interval $(u_,0)$.
Since $\lim_{u\nearrow0}(\gamma(u)/\chi(u))=1$, an analogous argument as in the proof of (i)(b) shows that 
$\underline u(t)$ is well defined and the limit in the proposition holds.
The proof for the inclusions follows the same steps as in the proof of (i)(b).

In case (ii)(b) we have $\chi(0)=\beta>0$ and $\chi(1)>0$.
Therefore the definition of $\gamma$, given in~\eqref{eq:DefGamma}, implies
$$
\lim_{u\nearrow0}\frac{\gamma(u)}{\chi(u)}=1\qquad\text{and}\qquad
\lim_{u\searrow1}\frac{\gamma(u)}{\chi(u)}=1
$$
and hence, by~\eqref{eq:DefF}, there exist solutions to the equation $f_t(u)=0$ in both intervals $(u_-,0)$ and $(1,u_+)$.
Let $\underline u(t)$ be the largest solution in $(u_-,0)$ and $\overline u(t)$ the smallest solution in $(1,u_+)$.
An analogous argument to the one in the proofs of (i)(b) and (ii)(a) gives the form of $\mathcal D_t$.
\end{proof}

%%%%%%%%%%%%%%%%%%%%%%%%%%%%%%%%%%%%%%%%%%%%%%%%
%%%%%%%%%%%%%%%%%%%%%%%%%%%%%%%%%%%%%%%%%%%%%%%%
\subsection{Large deviation principles and the G\"artner-Ellis theorem}
We review here the key concepts of large deviations for a family of real random variables $(Z_t)_{t\geq1}$
and state the G\"artner-Ellis theorem (Theorem~\ref{thm:GartnerEllis}).
A general reference for all the concepts in this section is~\cite[Section 2.3]{DemboZeitouni}.

Assume that the cumulant generating function 
$\Lambda_t^Z(u):=\log \mathbb{E}\left(\E^{uZ_t}\right)$
is finite on some neighbourhood of the origin 
and that for every $u\in\mathbb{R}$ the following limit exists as an extended real number 
\begin{equation}\label{eq:LDP_Assumption}
\Lambda(u) := \lim_{t\to \infty}t^{-1}\Lambda_t^Z(ut).
\end{equation}
Let
$\mathcal D_\Lambda:=\{u\in\mathbb R\>:\>|\Lambda(u)|<\infty\}$
be the effective domain of $\Lambda$ and assume that
\begin{equation}
\label{eq:0_in_Domain_Assumption}
 0  \in  \mathcal D_\Lambda^o,
\end{equation}
where
$\mathcal D_\Lambda^o$ is the interior of $\mathcal D_\Lambda$ (in $\mathbb{R}$).
Since $\Lambda_t^Z$ is convex for every $t$ by H\"older's inequality,
the limit $\Lambda$ is also convex and the set $\mathcal D_\Lambda$ is an interval.
Since $\Lambda(0)=0$, convexity implies  that for any $u\in\mathbb{R}$ we have $\Lambda(u)>-\infty$.
The function $\Lambda:\mathbb{R}\to(-\infty,\infty]$ is said essentially smooth if 
(a)~it is differentiable in $\mathcal D^o_\Lambda$ and 
(b)~it satisfies $\lim_{n\to\infty}|\Lambda'(u_n)|=\infty$ for every sequence 
$(u_n)_{n\in\mathbb N}$ in $\mathcal D^o_\Lambda$ that converges to a boundary point of $\mathcal D^o_\Lambda$.
A cumulant generating function  $\Lambda$ which satisfies (b) is called steep.
The Fenchel-Legendre transform $\Lambda^*$ of $\Lambda$ is defined by the formula
\begin{equation}\label{eq:DefFenchelLegendreTransf}
\Lambda^*(x) := \sup\{ux-\Lambda(u)\>:\>u\in\mathbb{R}\},
\quad\text{for all } x\in\mathbb{R}
\end{equation}
with an effective domain $\mathcal D_{\Lambda^*}:=\{x\in\mathbb R\>:\>\Lambda^*(x)<\infty\}$.
Under certain assumptions $\Lambda^*$ is a good rate function, i.e.
is lower semicontinuous (since it is a supremum of continuous functions),
satisfies $\Lambda^*(\mathbb{R})\subset[0,\infty]$ (since $\Lambda(0)=0$) and the level sets
$\{x\>:\>\Lambda^*(x)\leq y\}$ are compact for all $y\geq0$ (see~\cite[Lemma 2.3.9(a)]{DemboZeitouni}).
In general $\Lambda^*$ can be discontinuous and $\mathcal D_{\Lambda^*}$ can be strictly contained in $\mathbb{R}$
(see~\cite[Section 2.3]{DemboZeitouni} for elementary examples of such rate functions).
We say that the family of random variables $(Z_t)_{t\geq1}$ satisfies the large deviations principle (LDP)
with the good rate function $\Lambda^*$ if for every Borel measurable set $B$ in $\mathbb{R}$
the following inequalities hold 
\begin{equation}
\label{eq:DefLDP}
-\inf\{\Lambda^*(x):x\in B^o\}\leq\liminf_{t\to\infty}\frac{1}{t}\log \mathbb{P}\left[Z_t\in B\right]
\leq
\limsup_{t\to\infty}\frac{1}{t}\log \mathbb{P}\left[Z_t\in B\right]\leq-\inf\{\Lambda^*(x):x\in \overline B\},
\end{equation}
where the interior $B^o$ and the closure $\overline B$ of the set $B$
are taken in the topology of $\mathbb{R}$ and $\inf\emptyset =\infty$.
It is clear from definition~\eqref{eq:DefLDP} that if $(Z_t)_{t\geq1}$ satisfies the LDP and 
$\Lambda^*$ is continuous on $\overline B$,
then 
$\lim_{t\to \infty}t\log \mathbb{P}\left[Z_t\in B\right]
=-\inf\{\Lambda^*(x)\>:\>x\in B\}$.
An element $y\in\mathbb{R}$ is an \textit{exposed point} of $\Lambda^*$ if there exists 
$u_y\in\mathbb{R}$ such that 
\begin{equation} \label{eq:ExposedCondition}
yu_y -\Lambda^*(y)>xu_y-\Lambda^*(x)
\qquad\text{for all}\quad x\in\mathbb{R}\backslash\{y\}.
\end{equation}
Intuitively the exposed points are those at which $\Lambda^*$ is strictly convex (e.g. the second derivative is continuous and strictly positive).
The segments over which $\Lambda^*$ is affine are not exposed. 
Note that~\eqref{eq:ExposedCondition} can only hold for $y\in\mathcal D_\Lambda$ and, 
if $\Lambda$ is differentiable in $\mathcal D_\Lambda^o$, than  $u_y$ is the unique solution of $\Lambda'(u)=y$.
We now state the G\"artner-Ellis theorem the proof of which can be found in~\cite[Section 2.3]{DemboZeitouni}.

\begin{theorem}
\label{thm:GartnerEllis}
Let $(Z_t)_{t\geq1}$ be a family of random variables for which the function 
$\Lambda:\mathbb{R}\to(-\infty,\infty]$
in~\eqref{eq:LDP_Assumption} satisfies~\eqref{eq:0_in_Domain_Assumption}.
Let $F$ be a closed and $G$ an open set in $\mathbb{R}$.
Then the following inequalities hold
\begin{align*}
\limsup_{t\to\infty} t^{-1}\mathbb{P}\left[Z_t\in F\right] & \leq -\inf\{\Lambda^*(x)\>:\>x\in F\},\\
\liminf_{t\to\infty} t^{-1}\mathbb{P}\left[Z_t\in G\right] & \geq -\inf\{\Lambda^*(x)\>:\>x\in G\cap\mathcal E\},
\end{align*}
where 
$\mathcal E:=\{y\in\mathbb{R}\>:\>y\text{ satisfies~\eqref{eq:ExposedCondition} with } u_y\in\mathcal D_\Lambda^o\}$.
Furthermore if $\Lambda$ is essentially smooth and lower semicontinuous, 
then the LDP holds for $(Z_t)_{t\geq1}$ with the good rate function $\Lambda^*$.
\end{theorem}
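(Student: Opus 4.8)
The plan is to follow the two classical steps behind the G\"artner-Ellis theorem. First I would establish the upper bound for every closed set by an exponential Chebyshev estimate; then I would derive the lower bound near exposed points through an exponential change of measure centred at the exposing slope, combined with an appeal to the upper bound just obtained; finally, when $\Lambda$ is essentially smooth, I would show that $\mathcal E$ is large enough for the lower and upper bounds to coincide, which is exactly the full LDP.

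\emph{Upper bound.} Fix a closed set $F$. Since $0\in\mathcal D_\Lambda^o$, the function $\Lambda^*$ is coercive and, being a supremum of affine functions, lower semicontinuous; hence it is a good rate function attaining its minimum value $0$ at some point $x_0$ (see~\cite[Lemma~2.3.9(a)]{DemboZeitouni}), and by convexity it is nonincreasing on $(-\infty,x_0]$ and nondecreasing on $[x_0,\infty)$. If $x_0\in F$ there is nothing to prove; otherwise put $x_+:=\inf\bigl(F\cap(x_0,\infty)\bigr)$ and $x_-:=\sup\bigl(F\cap(-\infty,x_0)\bigr)$, interpreted as $\pm\infty$ when the corresponding set is empty, so that $F\subset(-\infty,x_-]\cup[x_+,\infty)$ and $\inf\{\Lambda^*(x):x\in F\}=\min\bigl(\Lambda^*(x_-),\Lambda^*(x_+)\bigr)$. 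For $u\geq0$ the Markov inequality gives $\mathbb{P}\bigl[Z_t\geq x_+\bigr]\leq\exp\bigl(-t(ux_+-t^{-1}\Lambda_t^Z(ut))\bigr)$, so, letting $t\to\infty$ and optimising over $u\geq0$, one gets $\limsup_{t\to\infty}t^{-1}\log\mathbb{P}\bigl[Z_t\geq x_+\bigr]\leq-\Lambda^*(x_+)$, the constrained supremum coinciding with the unconstrained one because $x_+\geq x_0$. Treating $\mathbb{P}\bigl[Z_t\leq x_-\bigr]$ symmetrically with $u\leq0$ and adding the two probabilities yields $\limsup_{t\to\infty}t^{-1}\log\mathbb{P}\bigl[Z_t\in F\bigr]\leq-\inf\{\Lambda^*(x):x\in F\}$.

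\emph{Lower bound.} Fix $y\in\mathcal E$ with exposing slope $u_y\in\mathcal D_\Lambda^o$ and fix $\delta>0$. Since $u_y$ is interior to $\mathcal D_\Lambda$, the number $\rho_t:=\Lambda_t^Z(u_yt)$ is finite for all large $t$; I would introduce the tilted measures $\D\widetilde{\mathbb{P}}_t:=\exp\bigl(u_ytZ_t-\rho_t\bigr)\,\D\mathbb{P}$, write $\mathbb{P}\bigl[|Z_t-y|<\delta\bigr]$ as a $\widetilde{\mathbb{P}}_t$-expectation, and use $\exp(-u_ytZ_t)\geq\exp(-u_yty-|u_y|t\delta)$ on the event $\{|Z_t-y|<\delta\}$ to obtain
\[
t^{-1}\log\mathbb{P}\bigl[|Z_t-y|<\delta\bigr]\geq t^{-1}\rho_t-u_yy-|u_y|\delta+t^{-1}\log\widetilde{\mathbb{P}}_t\bigl[|Z_t-y|<\delta\bigr].
\]
Here $t^{-1}\rho_t\to\Lambda(u_y)$ and $\Lambda(u_y)-u_yy\geq-\Lambda^*(y)$ trivially, so everything reduces to showing that $t^{-1}\log\widetilde{\mathbb{P}}_t\bigl[|Z_t-y|<\delta\bigr]\to0$. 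Under $\widetilde{\mathbb{P}}_t$ the variable $Z_t$ has limiting cumulant generating function $\widetilde\Lambda(\mu)=\Lambda(u_y+\mu)-\Lambda(u_y)$, finite near $\mu=0$, with conjugate $\widetilde\Lambda^*(x)=\Lambda^*(x)-u_yx+\Lambda(u_y)$; the exposedness condition~\eqref{eq:ExposedCondition} states precisely that $y$ is the unique minimiser of the good rate function $\widetilde\Lambda^*$, so applying the upper bound just proved (its proof uses only the convergence of the rescaled cumulant generating functions, so it applies verbatim to this family) to the closed set $\{x:|x-y|\geq\delta\}$ under $\widetilde{\mathbb{P}}_t$ forces $\widetilde{\mathbb{P}}_t\bigl[|Z_t-y|\geq\delta\bigr]\to0$ exponentially fast. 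Substituting back gives $\liminf_{t\to\infty}t^{-1}\log\mathbb{P}\bigl[|Z_t-y|<\delta\bigr]\geq-\Lambda^*(y)-|u_y|\delta$; for an open set $G$ and $y\in G\cap\mathcal E$ one then chooses $\delta$ with $(y-\delta,y+\delta)\subset G$, lets $\delta\downarrow0$, and takes the supremum over $y\in G\cap\mathcal E$ to arrive at $\liminf_{t\to\infty}t^{-1}\log\mathbb{P}\bigl[Z_t\in G\bigr]\geq-\inf\{\Lambda^*(x):x\in G\cap\mathcal E\}$.

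\emph{Full LDP.} Suppose now that $\Lambda$ is essentially smooth and lower semicontinuous. Then $\Lambda=\Lambda^{**}$, so for every $u\in\mathcal D_\Lambda^o$ the point $y:=\Lambda'(u)$ satisfies $\Lambda^*(y)=uy-\Lambda(u)$ and, since $\partial\Lambda(u)=\{\Lambda'(u)\}$, the strict inequality~\eqref{eq:ExposedCondition} holds with $u_y=u$; hence $\mathcal F:=\Lambda'(\mathcal D_\Lambda^o)\subset\mathcal E$. Steepness guarantees that $\mathcal F$ and $\mathcal D_{\Lambda^*}$ differ at most by the endpoints of $\mathcal D_{\Lambda^*}$, where the convex lower semicontinuous $\Lambda^*$ is continuous from the inside; consequently $\inf\{\Lambda^*(x):x\in G\cap\mathcal E\}=\inf\{\Lambda^*(x):x\in G\}$ for every open $G$, so the lower bound becomes $\liminf_{t\to\infty}t^{-1}\log\mathbb{P}\bigl[Z_t\in G\bigr]\geq-\inf\{\Lambda^*(x):x\in G\}$, which together with the upper bound is exactly the LDP~\eqref{eq:DefLDP} with good rate function $\Lambda^*$. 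The step I expect to be the main obstacle is the lower bound: the change of measure is routine, but the real point --- and the place where the exposedness of $y$, rather than mere differentiability of $\Lambda$, is genuinely needed --- is to verify that the tilted laws concentrate at $y$; the endpoint analysis in the essential-smoothness step is a secondary, more technical, matter.
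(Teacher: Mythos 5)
The paper does not prove this theorem; it is the classical G\"artner--Ellis theorem and the authors simply cite \cite[Section 2.3]{DemboZeitouni} for the proof. Your sketch faithfully reproduces that standard argument --- exponential Chebyshev plus the one-dimensional split of a closed set into two half-lines for the upper bound, exponential tilting at an exposed point and concentration of the tilted laws (which is exactly where exposedness, not mere differentiability, is needed) for the lower bound, and the essential-smoothness/lower-semicontinuity step to upgrade to the full LDP --- and it is correct.
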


%%%%%%%%%%%%%%%%%%%%%%%%%%%%%%%%%%%%%%%%%%%%%%%%
%%%%%%%%%%%%%%%%%%%%%%%%%%%%%%%%%%%%%%%%%%%%%%%%
\subsection{LDP in affine stochastic volatility models} 
In this section we analyse the large deviations behaviour of the family of random variables $Z_t:=X_t/t$ for $t\geq 1$.
Corollary~\ref{cor:LimitLambda}---which follows from Propositions~\ref{prop:LaplaceAffine} and~\ref{prop:DomainHeston}---describes 
the properties of the cumulant generating function $\Lambda$ defined in~\eqref{eq:LDP_Assumption}, 
and its Fenchel-Legendre transform $\Lambda^*$ is studied in Proposition~\ref{prop:RateFunction}.
The main result of this section, Theorem~\ref{thm:LDPAffineThm},
states that the family $(Z_t)_{t\geq 1}$ satisfies a large deviations principle with rate function $\Lambda^*$.

\begin{corollary}
\label{cor:LimitLambda}
The limiting cumulant generating function~\eqref{eq:LDP_Assumption} for the family of random variables 
$(X_t/t)_{t\geq1}$,
where $(X_t)_{t\geq0}$ is defined by SDE~\eqref{eq:DefAffine},is given by
$$\Lambda\left(u\right)=-\frac{b}{\alpha}\left(\chi\left(u\right)+\gamma\left(u\right)\right)+\frac{a}{2}u\left(u-1\right)
\quad\text{for all }u\in\mathcal{D}_\Lambda,$$
with the functions $\chi$ and $\gamma$ given in~\eqref{eq:DefChi} and~\eqref{eq:DefGamma} respectively.
The function $\Lambda$ is infinitely differentiable on the interior $\mathcal D_\Lambda^o$ of its effective domain.
The boundary points $u_-$ and $u_+$, defined in~\eqref{eq:DefUM} and~\eqref{eq:DefUP}, 
can be used to describe the effective domain $\mathcal{D}_\Lambda$ as follows.
\begin{itemize}
\item[(i)] If $\chi\left(0\right)\leq 0$ we have:
\begin{itemize}
\item[(a)] if $\chi\left(1\right)\leq 0$ then 
$\mathcal{D}_\Lambda=\left[u_-,u_+\right]$;
\item[(b)] if $\chi\left(1\right)>0$ then
$\mathcal{D}_\Lambda=\left[u_-,1\right]$.
\end{itemize}
\item[(ii)] If $\chi\left(0\right)>0$ we have:
\begin{itemize}
\item[(a)] if $\chi\left(1\right)\leq 0$ then
$\mathcal{D}_\Lambda=\left[0,u_+\right]$;
\item[(b)] if $\chi\left(1\right)>0$ then
$\mathcal{D}_\Lambda=\left[0,1\right]$.
\end{itemize}
\end{itemize}
\end{corollary}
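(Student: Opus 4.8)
The plan is to read $\Lambda$ off the exact formula of Proposition~\ref{prop:LaplaceAffine} and to pin down $\mathcal D_\Lambda$ using the description of the finite-horizon domains $\mathcal D_t$ in Proposition~\ref{prop:DomainHeston}. Since $Z_t=X_t/t$, the limit in~\eqref{eq:LDP_Assumption} is $\Lambda(u)=\lim_{t\to\infty}t^{-1}\Lambda_t(u)$ (any non-zero initial value contributes only $O(1/t)$), and by Proposition~\ref{prop:LaplaceAffine} this equals $\lim_{t\to\infty}t^{-1}\Lambda_t^H(u)+\frac{a}{2}u(u-1)$; so everything reduces to the large-$t$ behaviour of~\eqref{eq:LambdaT}.

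First I would treat $u$ in the interior of the interval claimed to be $\mathcal D_\Lambda$. In each of the four cases Proposition~\ref{prop:DomainHeston} guarantees $u\in\mathcal D_t$ for all large $t$ (e.g.\ in case (i)(b), $(u_-,1)\subset[u_-,\overline u(t))\subset\mathcal D_t$ since $\overline u(t)>1$), so~\eqref{eq:LambdaT} applies. Such $u$ lies strictly between the zeros $u_\pm$ of $\gamma^2$, so $\gamma(u)>0$; moreover $\gamma(u)>\chi(u)$, since $\gamma^2-\chi^2=\alpha u(1-u)>0$ for $u\in(0,1)$, while if $u\le0$ (resp.\ $u\ge1$) is interior one is in case (i) (resp.\ (i)(a) or (ii)(a)) and the sign information on $\chi$ established in the proof of Proposition~\ref{prop:DomainHeston} gives $\chi(u)\le0<\gamma(u)$. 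Therefore
$$
f_t(u)=\tfrac12\E^{\gamma(u)t/2}\Bigl(1-\tfrac{\chi(u)}{\gamma(u)}\Bigr)+\tfrac12\E^{-\gamma(u)t/2}\Bigl(1+\tfrac{\chi(u)}{\gamma(u)}\Bigr)=\frac{\gamma(u)-\chi(u)}{2\gamma(u)}\,\E^{\gamma(u)t/2}\bigl(1+o(1)\bigr),
$$
so $t^{-1}\log f_t(u)\to\gamma(u)/2$ while $\sinh(\gamma(u)t/2)/f_t(u)\to\gamma(u)/(\gamma(u)-\chi(u))$ stays bounded, which makes the $v$-term of~\eqref{eq:LambdaT} negligible after division by $t$. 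Substituting gives $t^{-1}\Lambda_t^H(u)\to-\frac{2b}{\alpha}\bigl(\tfrac{\chi(u)}2+\tfrac{\gamma(u)}2\bigr)$, i.e.\ the asserted formula for $\Lambda$ on $\mathcal D_\Lambda^o$. Smoothness there is then automatic: $\chi$ is affine, $u\mapsto\frac{a}{2}u(u-1)$ is a polynomial, and $\gamma=(\chi^2+\alpha u(1-u))^{1/2}$ is the square root of a polynomial (a downward parabola with roots $u_\pm$ if $|\rho|<1$, or an affine map if $|\rho|=1$) that is strictly positive on $(u_-,u_+)\supseteq\mathcal D_\Lambda^o$.

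Next I would identify $\mathcal D_\Lambda$ exactly. It is an interval because $\Lambda$, a pointwise limit of the convex functions $t^{-1}\Lambda_t$, is convex, and by the previous step it contains the open interval claimed. For $u$ strictly to the right of that interval I would show $u\notin\mathcal D_t$ for all large $t$, so $\Lambda_t(u)=+\infty$ eventually and $\Lambda(u)=+\infty$: in cases (i)(b),(ii)(b) because $u>\overline u(t)$ eventually while $\mathcal D_t\subset(-\infty,\overline u(t))$; in cases (i)(a),(ii)(a) one must have $u>u_+$, and then $\gamma(u)^2<0$, so $t\mapsto f_t(u)$ oscillates, vanishes at a finite time, and the Heston moment at $u$ explodes past that time (cf.~\cite{Andersen}). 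The left side is treated symmetrically, using $\underline u(t)\to0$ in case (ii) and a (negative) moment explosion for $u<u_-$ in case (i). Finally the endpoints belong to $\mathcal D_\Lambda$: at $u_-$ (and $u_+$ in (i)(a),(ii)(a)) one has $\gamma=0$, and redoing the computation with $\sinh(\gamma(u)t/2)/\gamma(u)\to t/2$ and $f_t(u_-)=1-\tfrac12\chi(u_-)t>0$ (using $\chi(u_-)\le0$, again from the proof of Proposition~\ref{prop:DomainHeston}) gives the finite value $-\frac{b}{\alpha}\chi(u_-)+\frac{a}{2}u_-(u_--1)$, which is exactly the closed form evaluated there; at $u=0$ (when $\chi(0)>0$) and $u=1$ (when $\chi(1)>0$) one simply uses $\Lambda_t(0)=\Lambda_t(1)=0$, the martingale property of $S=\E^X$, to get $\Lambda(0)=\Lambda(1)=0$.

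The hard part is the endpoint bookkeeping in the middle two steps: one must track, case by case, whether $\gamma(u)-\chi(u)$ is positive or zero (this sets the exponential rate of $f_t$) and, just outside the limiting domain, whether a point escapes $\mathcal D_t$ through the moving cutoff $\overline u(t)$/$\underline u(t)$ or through a genuine finite-time moment explosion. One should also flag the harmless single-point mismatch at the two martingale endpoints: the stated closed form agrees with $\Lambda$ on all of $\mathcal D_\Lambda^o$ and at $u_\pm$, but at $u=0$ when $\chi(0)>0$ and at $u=1$ when $\chi(1)>0$ the true value of $\Lambda$ is $0$ rather than what the formula returns.
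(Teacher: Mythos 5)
The paper offers no explicit proof of Corollary~\ref{cor:LimitLambda}; it only states that the corollary ``follows from Propositions~\ref{prop:LaplaceAffine} and~\ref{prop:DomainHeston}.'' Your proposal fills in exactly that argument and the calculations are correct: writing $f_t(u)=\tfrac12(1-\chi/\gamma)\E^{\gamma t/2}+\tfrac12(1+\chi/\gamma)\E^{-\gamma t/2}$, the interior of each claimed interval consists of $u$ with $\gamma(u)>0$ and $\gamma(u)>\chi(u)$ (via $\gamma^2-\chi^2=\alpha u(1-u)>0$ on $(0,1)$ and $\chi\le0<\gamma$ elsewhere, sign facts established in the proof of Proposition~\ref{prop:DomainHeston}), so $t^{-1}\log f_t\to\gamma/2$ and the $v$-term in~\eqref{eq:LambdaT} is $O(1)$; the finite endpoints $u_\pm$ follow from the degenerate limit $f_t(u_\pm)=1-\chi(u_\pm)t/2$; and points strictly outside the claimed set eventually escape $\mathcal D_t$, either through the moving cutoffs $\overline u(t),\underline u(t)$ of Proposition~\ref{prop:DomainHeston} or through a finite-time moment explosion (which, by the standard propagation property of moment explosions for affine processes, forces $\Lambda_t(u)=+\infty$ for all large $t$). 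This is precisely the route the paper intends.

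Your flagged ``single-point mismatch'' is real, and worth stating: when $\chi(0)>0$ (resp.\ $\chi(1)>0$) and $b>0$, the closed form evaluates to $-\tfrac{2b}{\alpha}\chi(0)<0$ at $u=0$ (resp.\ $-\tfrac{2b}{\alpha}\chi(1)<0$ at $u=1$), whereas $\Lambda(0)=\Lambda(1)=0$ by the martingale property. The paper implicitly concedes this: the paragraph preceding~\eqref{eq:DefLimits} notes that $\Lambda$ is discontinuous at $0$ (resp.\ $1$) when $u_-$ (resp.\ $u_+$) is excluded from $\mathcal D_\Lambda$, and Remark~\ref{rem:Values01} records the one-sided limits $\Lambda_+(0),\Lambda_-(1)$ which are exactly what the closed form returns at those points. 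So the phrase ``for all $u\in\mathcal D_\Lambda$'' in the corollary is slightly imprecise---accurate on $\mathcal D_\Lambda^o$ and at the endpoints $u_\pm$ but not at the martingale endpoints $0,1$ when they form the boundary---although, as you note, this is harmless for the Fenchel--Legendre transform since $\sup_u\{ux-\Lambda(u)\}$ is insensitive to the value of $\Lambda$ at those two isolated points (the interior one-sided limits already dominate).
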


\begin{figure}[ht]
\centering
\subfigure[Case (i)(a)]{\includegraphics[height=20mm, width=30mm]{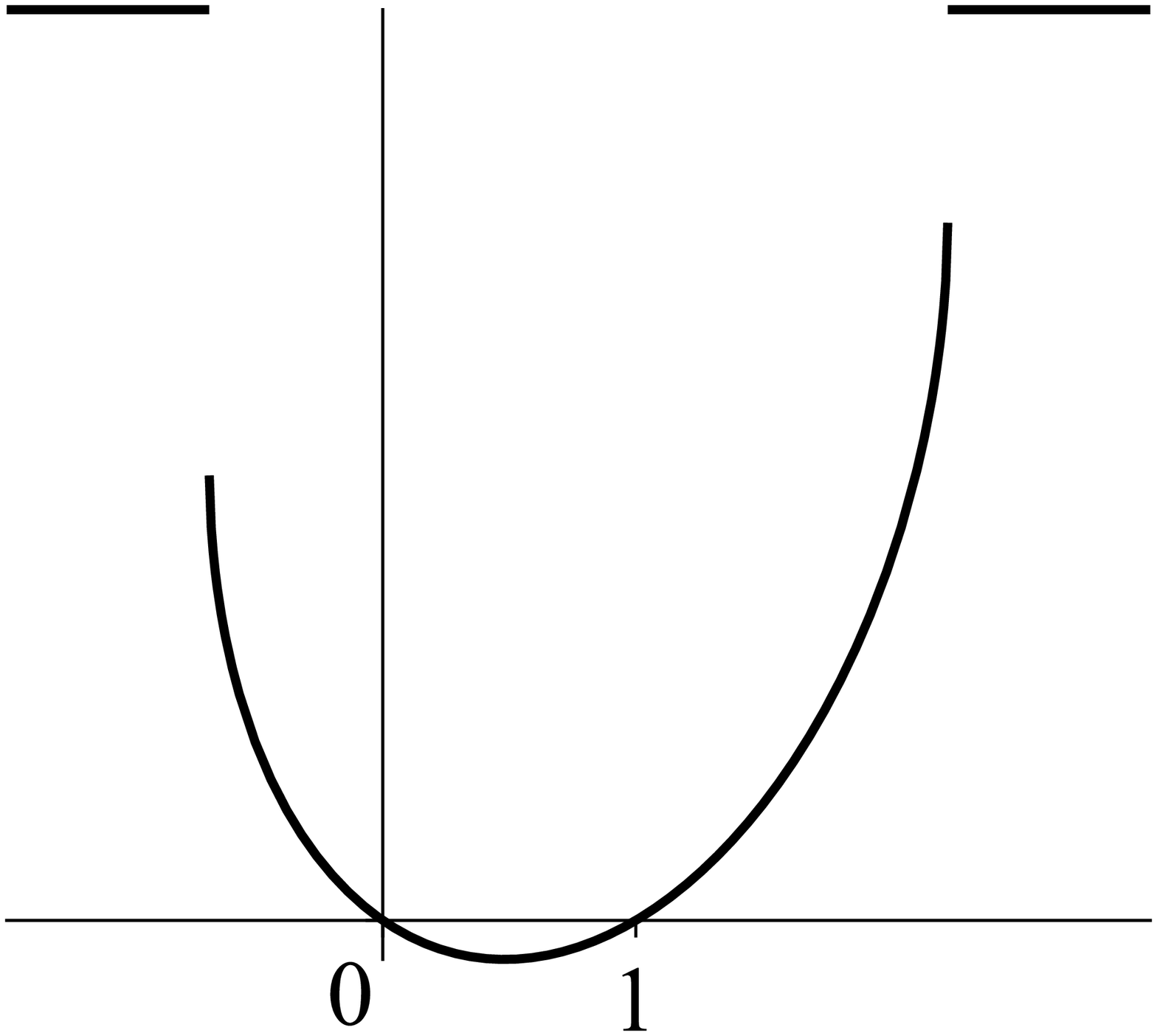}}
\hspace{15pt}
\subfigure[Case (i)(b)]{\includegraphics[height=20mm, width=30mm]{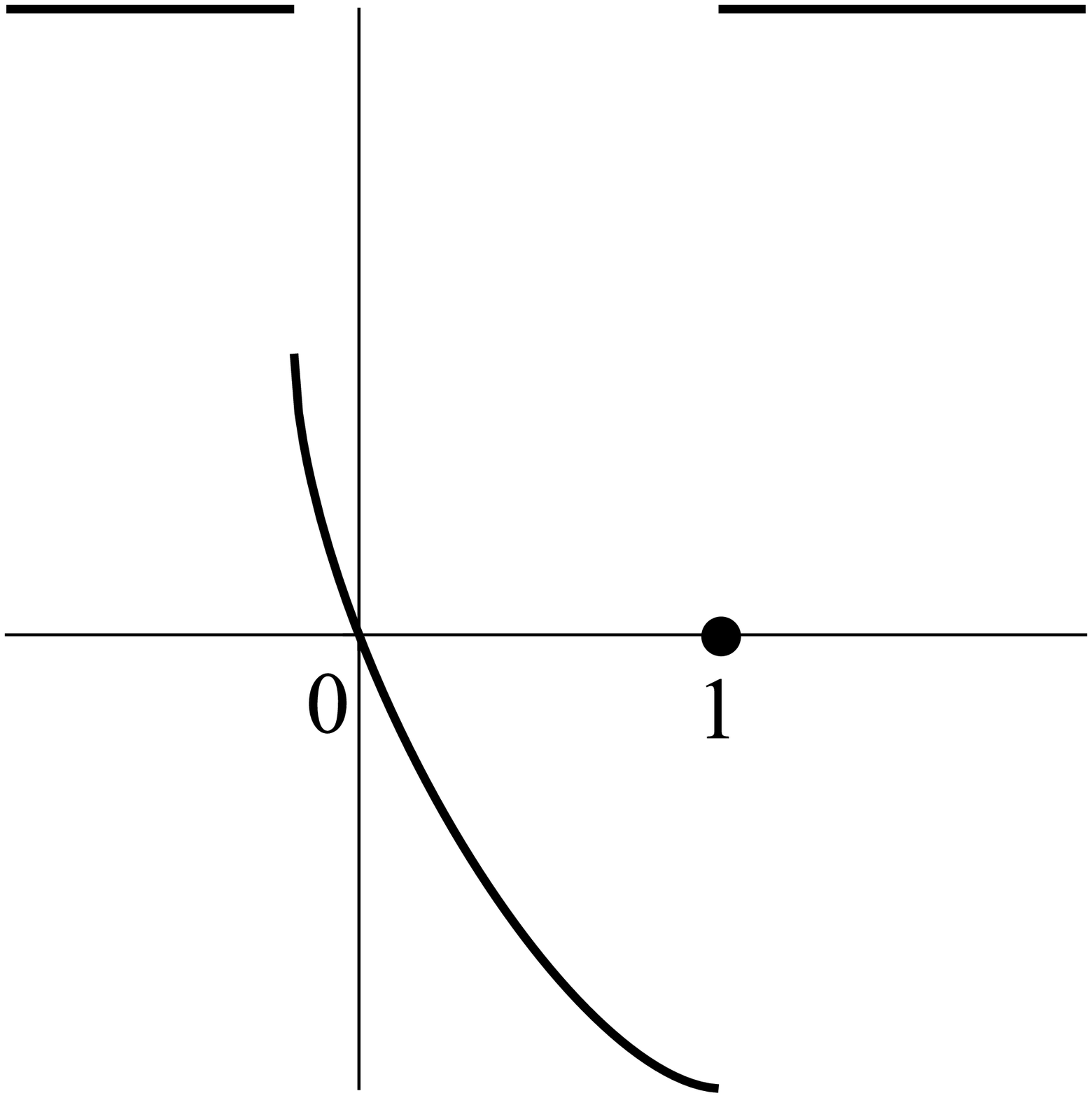}}
\hspace{15pt}
\subfigure[Case (i)(a)]{\includegraphics[height=20mm, width=30mm]{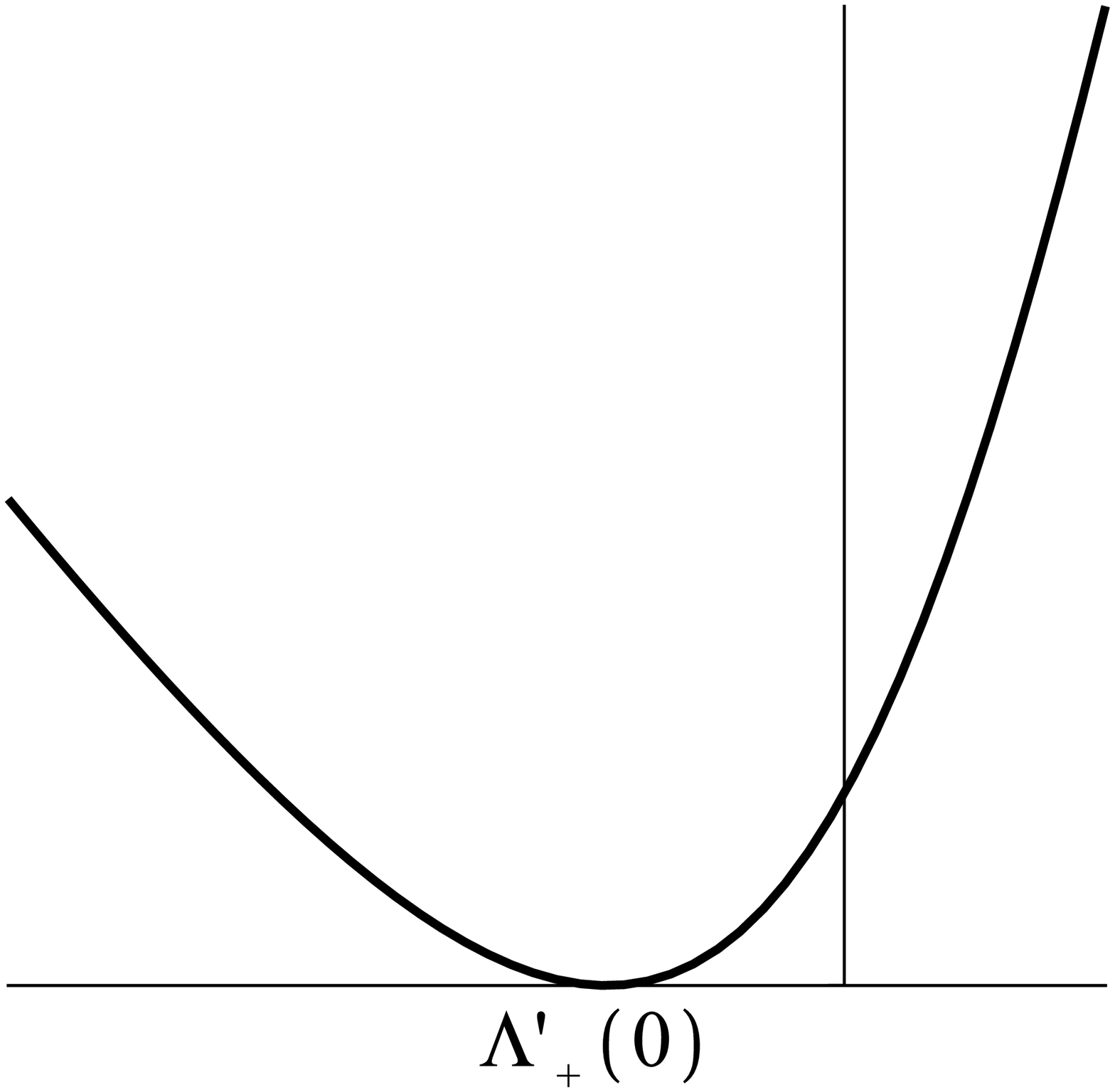}}
\hspace{15pt}
\subfigure[Case (i)(b)]{\includegraphics[height=20mm, width=30mm]{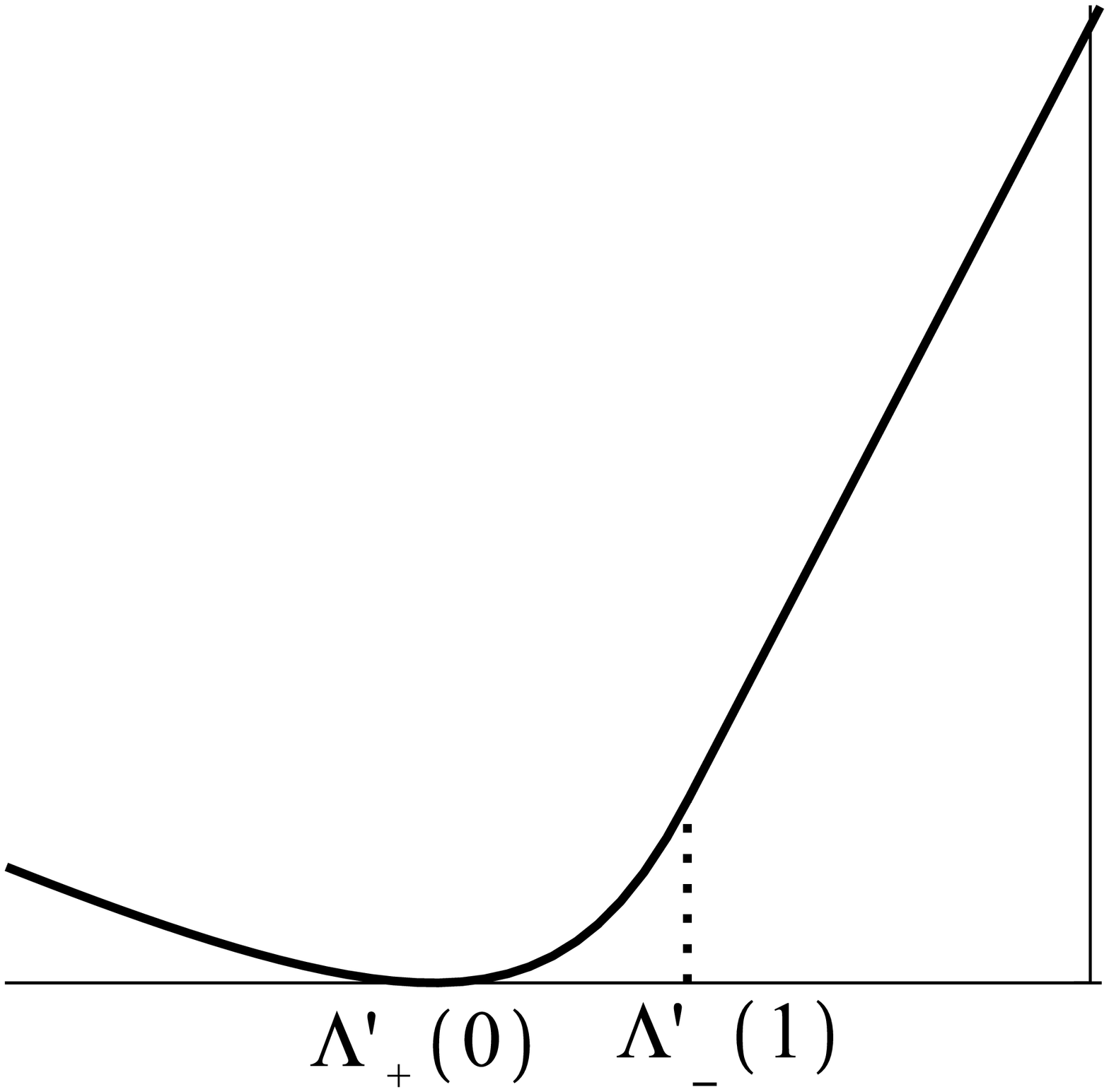}}
\subfigure[Case (ii)(a)]{\includegraphics[height=20mm, width=30mm]{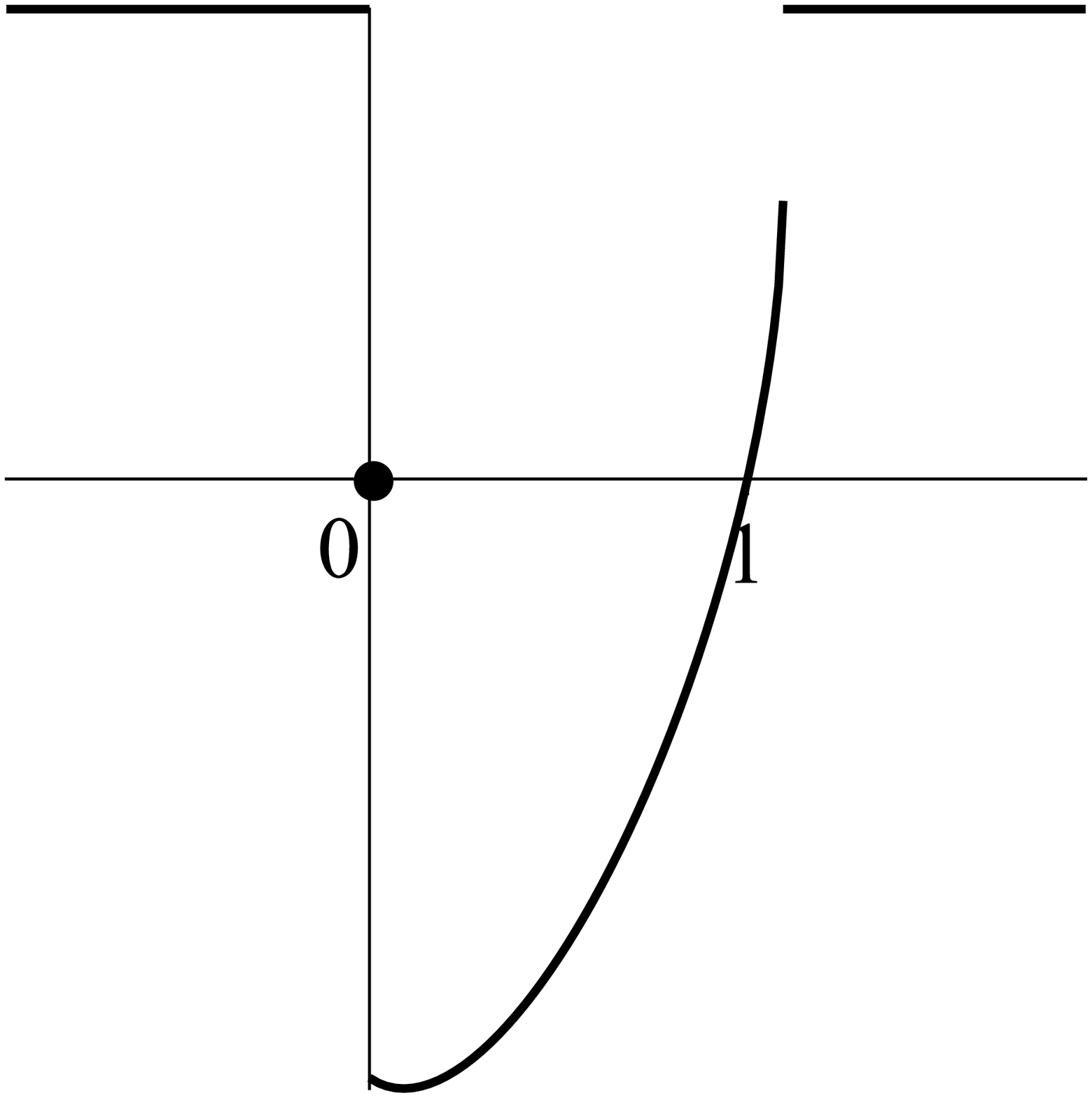}}
\hspace{15pt}
\subfigure[Case (ii)(b)]{\includegraphics[height=20mm, width=30mm]{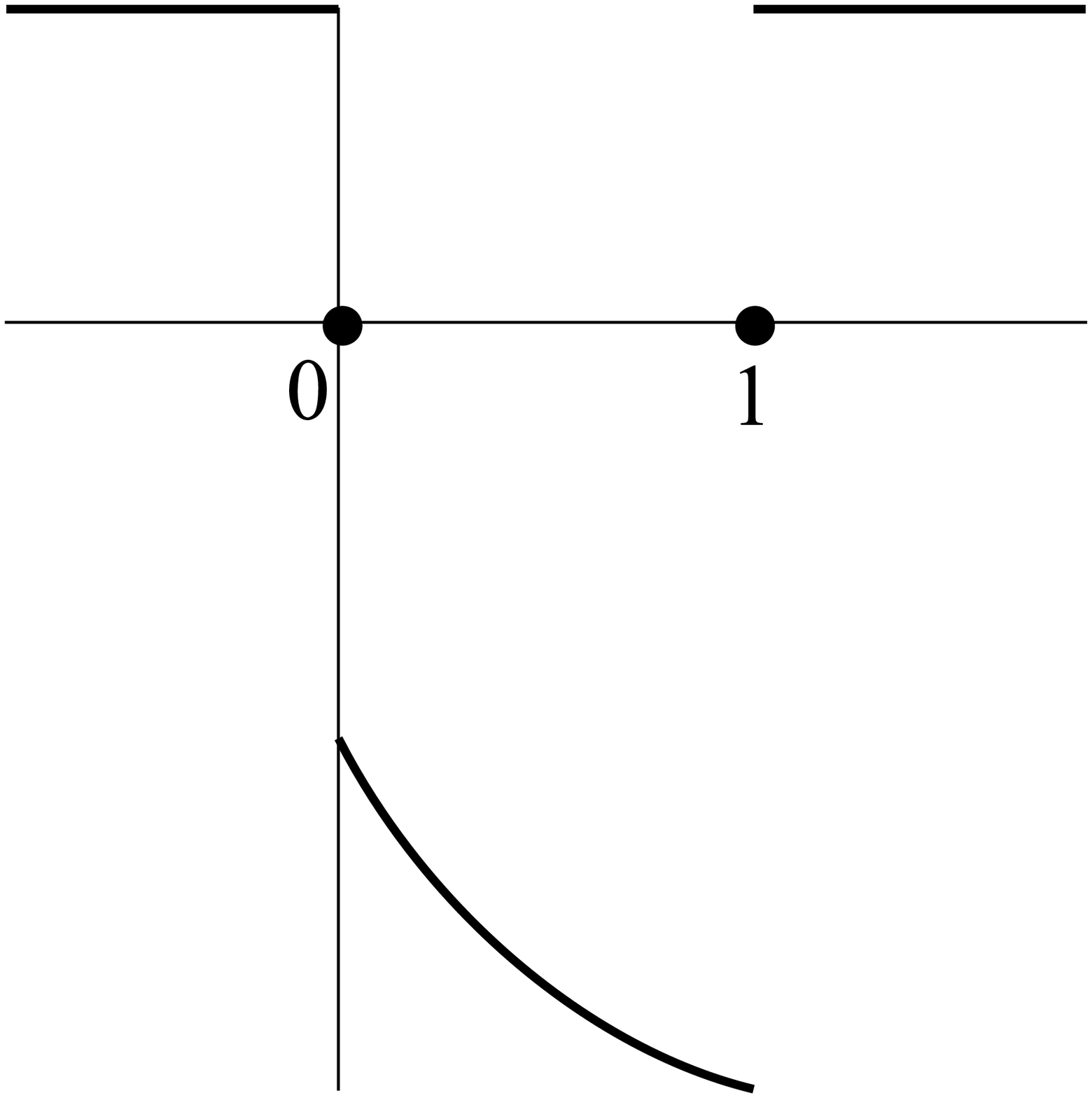}}
\hspace{15pt}
\subfigure[Case (ii)(a)]{\includegraphics[height=20mm, width=30mm]{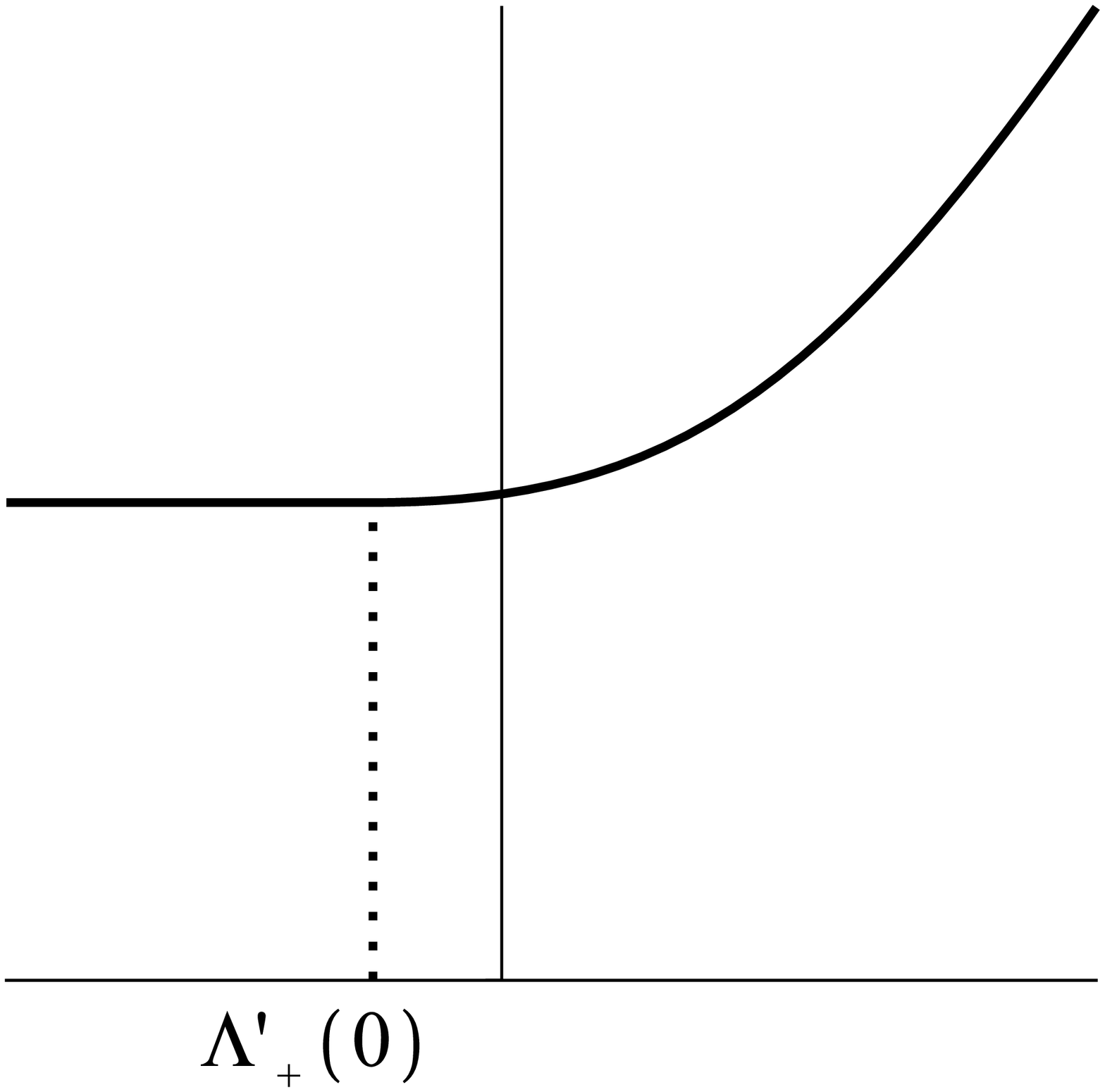}}
\hspace{15pt}
\subfigure[Case (ii)(b)]{\includegraphics[height=20mm, width=30mm]{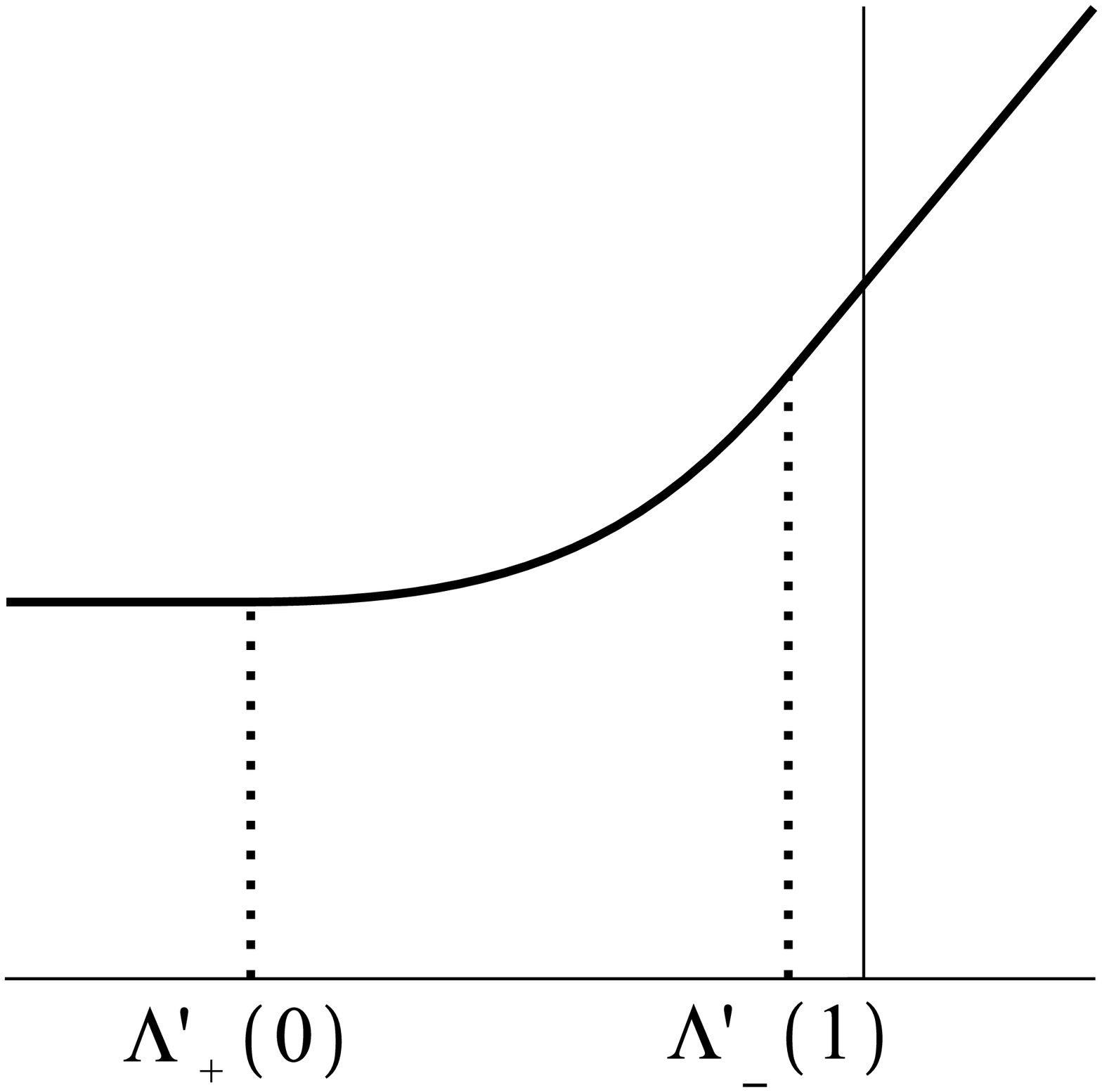}}
\caption{
The four figures on the left represent the function $\Lambda$ characterised in Corollary~\ref{cor:LimitLambda}.
The four figures on the right represent the Fenchel-Legendre $\Lambda^*$ determined in Proposition~\ref{prop:RateFunction}.
The dotted line on the graphs for $\Lambda^*$ represent the threshold $\Lambda'_-\left(1\right)$ and $\Lambda'_+\left(0\right)$ 
above or below which $\Lambda^*$ becomes linear.}
\end{figure}

\begin{remark}\label{rem:LDPLambda}
From Corollary~\ref{cor:LimitLambda}, the following facts can be deduced immediately 
for the large deviations behaviour of the family of random variables $(X_t/t)_{t\geq1}$.
\begin{enumerate} 
\item[(I)] In case (i)(a) the function $\Lambda$ is essentially smooth.
\item[(II)] In case (i)(b) (resp. (ii)(a)) the function $\Lambda$ 
is steep at the left boundary $u_-$ (resp. right boundary $u_+$) but not at the right (resp. left) boundary of the effective domain.
\item[(III)] In case (i)(b) (resp. (ii)(a)) the right (resp. left) boundary point of the effective domain
is strictly smaller (resp. greater) than $u_+$ (resp. $u_-$).
This is a consequence of~(II) and~(III).
\item[(IV)] In case (ii)(b) the function $\Lambda$ is not steep at either of the two boundaries of its effective domain. 
Furthermore $\mathcal D_\Lambda$ is contained in the interior of the interval $[u_-,u_+]$
by~(II) and~(III).
\item[(V)] As a consequence of (I)--(IV) the limiting cumulant generating function $\Lambda$
is steep at a boundary point of the effective domain if and only if this point is an element of the set $\{u_-,u_+\}$.
\end{enumerate} 
\end{remark}

Note that when $u_-$ (resp. $u_+$) is not in $\mathcal{D}_\Lambda$ then the function $\Lambda$ is discontinuous
at $0$ (resp. at $1$). 
We henceforth define the following extended real numbers
\begin{equation}
\label{eq:DefLimits}
\Lambda_-\left(1\right)  :=\lim\limits_{u\nearrow 1}\Lambda\left(u\right),
\quad
 \Lambda_+\left(0\right)  :=\lim\limits_{u\searrow 0}\Lambda\left(u\right), 
 \quad
\Lambda'_-\left(1\right)  :=\lim\limits_{u\nearrow 1}\Lambda'\left(u\right),
\quad
 \Lambda'_+\left(0\right)  :=\lim\limits_{u\searrow 0}\Lambda'\left(u\right).
\end{equation}
The functions $\Lambda$ and $\Lambda'$ are monotone on the intervals $\left(0,\varepsilon\right)$
and $\left(1-\varepsilon,1\right)$ for small enough $\varepsilon$, hence all the limits exist.
Note further that the limit $\Lambda'_+\left(0\right)$ (resp. $\Lambda'_-\left(1\right)$) is equal to $-\infty$ 
(resp. $\infty$) if and only if $\chi\left(0\right)=0$ (resp. $\chi\left(1\right)=0$).

\begin{remark}\label{rem:Values01}
At zero and one the following identities hold
\begin{align*}
\Lambda_+\left(0\right) & = 
-\frac{b}{\alpha}\left(\chi\left(0\right)+\left|\chi\left(0\right)\right|\right)
\quad \text{and}\quad
\Lambda'_+\left(0\right) = 
\left\{
\begin{array}{ll}
\displaystyle
\frac{1}{\left|\chi\left(0\right)\right|}
\left(\left(\chi\left(1\right)-\chi\left(0\right)\right)\Lambda_+\left(0\right)-\frac{b}{2}\right)-\frac{a}{2},
& \text{if } \chi\left(0\right)\ne 0,\\
-a/2,
& \text{if } \chi\left(0\right)=0,\  b=0,\\
-\infty,
& \text{if } \chi\left(0\right)=0,\  b\ne 0,
\end{array}
\right.
\\
\Lambda_-\left(1\right) & = 
-\frac{b}{\alpha}\left(\chi\left(1\right)+\left|\chi\left(1\right)\right|\right)
\quad\text{and}\quad
\Lambda'_-\left(1\right) =
\left\{
\begin{array}{ll}
\displaystyle
\frac{1}{\left|\chi\left(1\right)\right|}
\left(\left(\chi\left(1\right)-\chi\left(0\right)\right)\Lambda_-\left(1\right)+\frac{b}{2}\right)+\frac{a}{2},
& \text{if } \chi\left(1\right)\ne 0,\\
a/2,
& \text{if } \chi\left(1\right)=0,\  b= 0,\\
\infty,
& \text{if } \chi\left(1\right)=0,\  b\ne 0.
\end{array}
\right.
\end{align*}
Note that the inequalities $\Lambda_+\left(0\right)\leq0 $ and $\Lambda_-\left(1\right) \leq 0$ hold for any admissible set of parameters.
The case $\chi(0)=0$ and $b=0$ is rather degenerate, and we refer the reader to Remark~\ref{rem:DegenerateCase}
for further details.
\end{remark}

\begin{proposition}\label{prop:RateFunction}
The Fenchel-Legendre transform $\Lambda^*$ defined in~\eqref{eq:DefFenchelLegendreTransf} 
for the family of random variables $(X_t/t)_{t\geq1}$,
where $(X_t)_{t\geq0}$ is given by SDE~\eqref{eq:DefAffine},
can be represented as follows
\begin{equation}
\label{eq:FormLambdaStar}
\Lambda^*\left(x\right)=
\left\{ \begin{array}{ll}
x u_x-\Lambda\left(u_x\right),
\quad & \text{for all }x\in\Lambda'\left(\mathcal{D}^o_\Lambda\right),\\
x-\Lambda_-\left(1\right),
\quad & \text{for all }
x\in\left[\Lambda'_-\left(1\right),\infty\right)\cap\left(\mathbb{R}\backslash\Lambda'\left(\mathcal{D}^o_\Lambda\right)\right),\\
-\Lambda_+\left(0\right),
\quad & \text{for all }
x\in\left(-\infty,\Lambda'_+\left(0\right)\right]\cap\left(\mathbb{R}\backslash\Lambda'\left(\mathcal{D}^o_\Lambda\right)\right),\\
\end{array}
\right.
\end{equation}
where $u_x$ is the unique solution in $\mathcal{D}^o_\Lambda$ to the equation 
$\Lambda'\left(u\right)=x$ for all $x\in\Lambda'\left(\mathcal{D}^o_\Lambda\right)$.
Furthermore $\Lambda^*$ is continuously differentiable on 
its effective domain $\mathcal{D}_{\Lambda^*}$
and $\mathcal{D}_{\Lambda^*}=\mathbb{R}.$
\begin{itemize}
\item[(i)] The function $\Lambda^*$ attains its global minimal value $-\Lambda_+\left(0\right)$
at $\Lambda'_+(0)$.
If  $0\in\mathcal D_\Lambda^o$ then the minimum is attained at the unique point
$\Lambda'_+(0)=\Lambda'(0)$ and the minimal value is
$\Lambda^*(\Lambda'(0))=\Lambda_+\left(0\right)=0$.
If $0\notin\mathcal D_\Lambda^o$ the minimal value is attained at every
$x\in\left(-\infty,\Lambda'_+\left(0\right)\right]\cap\left(\mathbb{R}\backslash\Lambda'\left(\mathcal{D}^o_\Lambda\right)\right)$
\item[(ii)] The function $x\mapsto\Lambda^*(x)-x$ attains its global minimal value
$-\Lambda_-\left(1\right)$ at $\Lambda'_-(1)$.
If $1\in\mathcal D_\Lambda^o$ then the minimum value 
$\Lambda_-\left(1\right)=\Lambda(1)=0$
is attained at the unique point $\Lambda'_-(1)=\Lambda'(1)$
which is therefore the unique solution to the equation $\Lambda^*(x)=x$.
If $1\notin\mathcal D_\Lambda^o$ the function $x\mapsto\Lambda^*(x)-x$
attains the minimal value at every
$x\in\left[\Lambda'_-\left(1\right)\infty\right)\cap\left(\mathbb{R}\backslash\Lambda'\left(\mathcal{D}^o_\Lambda\right)\right)$.
\end{itemize}
\end{proposition}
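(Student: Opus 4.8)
The plan is to compute the supremum defining $\Lambda^*$ in~\eqref{eq:DefFenchelLegendreTransf} directly from the explicit expression for $\Lambda$ in Corollary~\ref{cor:LimitLambda}. First I would record the analytic structure of $\Lambda$ on the interior $\mathcal D_\Lambda^o=(\ell,r)$ of its effective domain, writing $\ell,r$ for the left and right endpoints of $\mathcal D_\Lambda$. There $\Lambda$ is smooth and, away from the degenerate situations of Remark~\ref{rem:DegenerateCase}, strictly convex, since $-(b/\alpha)\gamma$ is strictly convex (the polynomial $\gamma^2$ in~\eqref{eq:DefGamma} being a concave quadratic on $(u_-,u_+)$) while $(a/2)u(u-1)$ is convex; moreover $\Lambda$ extends continuously to the endpoints $0,1\in\mathcal D_\Lambda$ with $\Lambda(0)=\Lambda_+(0)$ and $\Lambda(1)=\Lambda_-(1)$, because $\gamma$ is continuous on $[u_-,u_+]$. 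Hence $\Lambda'$ is a strictly increasing homeomorphism of $(\ell,r)$ onto the open interval $\Lambda'(\mathcal D_\Lambda^o)=(L,R)$, where $L:=\lim_{u\searrow\ell}\Lambda'(u)$ and $R:=\lim_{u\nearrow r}\Lambda'(u)$. By Remark~\ref{rem:LDPLambda}(V), $L=-\infty$ exactly when $\ell=u_-$ and $R=+\infty$ exactly when $r=u_+$, so by Corollary~\ref{cor:LimitLambda} the endpoint $L$ is finite only in case~(ii) (where $\ell=0$ and $L=\Lambda'_+(0)$) and $R$ is finite only in cases~(i)(b) and~(ii)(b) (where $r=1$ and $R=\Lambda'_-(1)$).

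Next I would analyse, for fixed $x$, the concave map $g_x(u):=xu-\Lambda(u)$ on $\mathcal D_\Lambda$ (outside $\mathcal D_\Lambda$ one has $g_x\equiv-\infty$, so the supremum in~\eqref{eq:DefFenchelLegendreTransf} is a supremum over $\mathcal D_\Lambda$), whose derivative is $x-\Lambda'(u)$. When $x\in(L,R)$ there is a unique $u_x\in\mathcal D_\Lambda^o$ with $\Lambda'(u_x)=x$; $g_x$ increases on $(\ell,u_x)$ and decreases on $(u_x,r)$, so the supremum is the interior value $xu_x-\Lambda(u_x)$, which is the first line of~\eqref{eq:FormLambdaStar}. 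When $x\geq R$ — possible only if $r=1$, so $R=\Lambda'_-(1)$ — monotonicity of $\Lambda'$ gives $g_x'(u)=x-\Lambda'(u)\geq x-R\geq0$ throughout $\mathcal D_\Lambda$, so $g_x$ is non-decreasing and its supremum is attained at $u=1$: $\Lambda^*(x)=x-\Lambda(1)=x-\Lambda_-(1)$, the second line. Symmetrically, when $x\leq L$ — possible only in case~(ii), so $L=\Lambda'_+(0)$ — the map $g_x$ is non-increasing and its supremum is attained at $u=0$: $\Lambda^*(x)=-\Lambda(0)=-\Lambda_+(0)$, the third line. These three regimes cover $\mathbb R$ (the complement of $(L,R)$ is $(-\infty,L]\cup[R,\infty)$, nonempty only in the finite-endpoint cases just listed), and $\Lambda^*$ is finite in each of them, whence $\mathcal D_{\Lambda^*}=\mathbb R$.

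For the regularity of $\Lambda^*$ I would use the envelope identity on $(L,R)$: since $\tfrac{\D}{\D x}\bigl(xu_x-\Lambda(u_x)\bigr)=u_x+\bigl(x-\Lambda'(u_x)\bigr)\tfrac{\D u_x}{\D x}=u_x$, one has $(\Lambda^*)'(x)=u_x$, and $x\mapsto u_x=(\Lambda')^{-1}(x)$ is smooth with range $(\ell,r)$. As $x\uparrow R<\infty$ we get $u_x\uparrow1$, hence $(\Lambda^*)'(x)\to1$, matching the slope of the affine branch $x-\Lambda_-(1)$, and $\Lambda^*(x)=xu_x-\Lambda(u_x)\to R-\Lambda(1)=R-\Lambda_-(1)$ by continuity of $\Lambda$ at $1$, matching its value; the analogous limit as $x\downarrow L>-\infty$ (then $u_x\downarrow0$, $(\Lambda^*)'(x)\to0$, $\Lambda^*(x)\to-\Lambda_+(0)$) glues the left branch. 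Hence $\Lambda^*$ is $C^1$ on $\mathbb R$ (indeed smooth on each of the three pieces).

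Finally, parts~(i) and~(ii) follow by testing the supremum at $u=0$ and $u=1$, both of which lie in $\mathcal D_\Lambda$: for every $x$, $\Lambda^*(x)\geq0\cdot x-\Lambda(0)=-\Lambda_+(0)$ and $\Lambda^*(x)-x\geq(1\cdot x-\Lambda(1))-x=-\Lambda_-(1)$, with these constants identified in Remark~\ref{rem:Values01}. Since $\Lambda^*$ is convex and $C^1$ with $(\Lambda^*)'=u_x$ strictly increasing on $(L,R)$, $(\Lambda^*)'\equiv0$ on $(-\infty,L]$ and $(\Lambda^*)'\equiv1$ on $[R,\infty)$, the minimisers of $\Lambda^*$ are $\{\Lambda'(0)\}$ when $0\in\mathcal D_\Lambda^o$ (and then $\Lambda(0)=\Lambda_+(0)=0$) and the half-line $(-\infty,\Lambda'_+(0)]$ when $0=\ell$; likewise the minimisers of $x\mapsto\Lambda^*(x)-x$ are $\{\Lambda'(1)\}$ when $1\in\mathcal D_\Lambda^o$ (so $\Lambda(1)=\Lambda_-(1)=0$ and this is the unique fixed point of $\Lambda^*$) and $[\Lambda'_-(1),\infty)$ when $r=1$. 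The principal difficulty here is organisational rather than analytic: every assertion must be tracked across the four cases of Corollary~\ref{cor:LimitLambda} and the steep/non-steep dichotomy of their boundary points, and it is precisely the clean criterion of Remark~\ref{rem:LDPLambda}(V) — a boundary point is steep if and only if it equals $u_-$ or $u_+$ — that guarantees the affine branches of $\Lambda^*$ appear exactly when the relevant endpoint of $\mathcal D_\Lambda$ is $0$ or $1$.
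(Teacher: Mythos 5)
Your overall strategy (computing the supremum of $g_x(u)=xu-\Lambda(u)$ directly, branch by branch) is the same as the paper's, but there is a genuine error in the step where you handle the affine branches. You assert that ``$\Lambda$ extends continuously to the endpoints $0,1\in\mathcal D_\Lambda$ with $\Lambda(0)=\Lambda_+(0)$ and $\Lambda(1)=\Lambda_-(1)$,'' and then conclude for $x\geq R$ that the supremum is \emph{attained} at $u=1$ with value $x-\Lambda(1)=x-\Lambda_-(1)$. This contradicts the paper's own observation (the sentence immediately before Remark~\ref{rem:Values01}) that $\Lambda$ is \emph{discontinuous} at $1$ (resp.\ at $0$) precisely when $u_+\notin\mathcal D_\Lambda$ (resp.\ $u_-\notin\mathcal D_\Lambda$), i.e.\ when $\chi(1)>0$ (resp.\ $\chi(0)>0$). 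Indeed, in those cases the true pointwise values are $\Lambda(1)=\lim_t t^{-1}\Lambda_t(1)=0$ and $\Lambda(0)=\lim_t t^{-1}\Lambda_t(0)=0$ by the martingale normalisation $\Lambda_t(1)=\Lambda_t(0)=0$, whereas $\Lambda_-(1)=-2b\chi(1)/\alpha<0$ and $\Lambda_+(0)=-2b\chi(0)/\alpha<0$. (The explicit formula of Corollary~\ref{cor:LimitLambda} holds on $\mathcal D_\Lambda^o$ and at $u_\pm$ where $\gamma$ vanishes, but not at the boundary points $0,1$ when $\chi$ is positive there.) Consequently $g_x(1)=x-\Lambda(1)=x$ is \emph{strictly smaller} than $\lim_{u\nearrow1}g_x(u)=x-\Lambda_-(1)$: the supremum is a limit over the open interval, not attained at the endpoint. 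Your argument ``$g_x$ is non-decreasing so its supremum is attained at $u=1$'' therefore fails -- $g_x$ has a downward jump at $u=1$ -- and you reach the right formula only by misidentifying $\Lambda(1)$ with $\Lambda_-(1)$.

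The same confusion infects your proof of parts~(i)--(ii): testing the supremum literally at $u=0$ and $u=1$ gives only $\Lambda^*(x)\geq-\Lambda(0)=0$ and $\Lambda^*(x)-x\geq-\Lambda(1)=0$, which are strictly weaker than the claimed bounds $-\Lambda_+(0)$ and $-\Lambda_-(1)$ when $\chi(0)>0$, $\chi(1)>0$. The correct argument tests along a sequence $u_n\searrow0$ (resp.\ $u_n\nearrow1$) and passes to the limit, or, as the paper does, uses the mean-value inequality $\Lambda_-(1)-\Lambda(u)\leq x(1-u)$ for $u<1$ and $x\geq\Lambda'_-(1)$ (and its mirror at $0$) to establish the affine branches directly without touching the discontinuous endpoint values. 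Your envelope-derivative computation and the $C^1$-gluing of the branches are fine; it is only the boundary behaviour of $\Lambda$ itself that needs to be handled with care, since $\Lambda$ is not lower semicontinuous at $0$ or $1$ in the non-steep cases.
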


\begin{remark}\label{rem:LambdaStar}\text{}
\begin{enumerate}[(i)]
\item Since $\Lambda$ is a strictly convex smooth function on $\mathcal{D}^o_\Lambda$, 
the first derivative $\Lambda'$ is invertible on this interval and 
$u_x$ is a strictly increasing, differentiable function of $x$ on $\Lambda'\left(\mathcal{D}^o_\Lambda\right)$.
Furthermore the equality 
$(\Lambda^*)'\left(x\right)=u_x$ 
holds for any $x\in\Lambda'\left(\mathcal{D}^o_\Lambda\right)$.
\item Corollary~\ref{cor:LimitLambda}
implies the following form for the interval
$\Lambda'(\mathcal D^o_\Lambda)$:
\begin{equation}
\label{eq:FormOfDomain}
\Lambda'\left(\mathcal D^o_\Lambda\right)=
\left\{
\begin{array}{cl}
\mathbb{R},
\quad & \text{if }\chi(0)\leq0,\>\chi(1)\leq0,\\
\left(-\infty,\Lambda'_-(1)\right),
\quad & \text{if }\chi(0)\leq0,\>\chi(1)>0,\\
\left(\Lambda'_+(0),\infty\right),
\quad & \text{if }\chi(0)>0,\>\chi(1)\leq0,\\
\left(\Lambda'_+(0),\Lambda'_-(1)\right),
\quad & \text{if }\chi(0)>0,\>\chi(1)>0.
\end{array}
\right.
\end{equation}
Hence the second case in~\eqref{eq:FormLambdaStar} corresponds to $\chi\left(1\right)>0$ 
and the third case occurs when $\chi\left(0\right)>0$.
\item
When $a$ is null, the unique solution $u_x$ to the equation
$\Lambda'\left(u\right)=x$, when $x\in\Lambda'\left(\mathcal{D}^o_\Lambda\right)$ is given by
\begin{equation}\label{eq:Ux}
u_x=\frac{1}{2\left(1-\rho^2\right)\sqrt{\alpha}}\left(2\rho\beta+\sqrt{\alpha}
+\frac{p\left(x\right)\xi}{\sqrt{p\left(x\right)^2+b^2\left(1-\rho^2\right)}}\right),
\end{equation}
where
$$p\left(x\right):=b\rho+x\sqrt{\alpha},
\quad\text{and}\quad
\xi:=\sqrt{\left(2\rho\beta+\sqrt{\alpha}\right)^2+4\beta^2\left(1-\rho^2\right)}.
$$
This, together with~\eqref{eq:FormLambdaStar}, yields an explicit formula for the rate function $\Lambda^*$.
Note that $u_x$ is well defined as a limit when $\left|\rho\right|$ tends to $1$ and
\begin{equation}\label{eq:upmrho1}
u_{x}=
\frac{1}{4}\frac{b-2\beta x}{2\beta+\rho\sqrt{\alpha}}\frac{4b\beta+\rho(b+2\beta x)\sqrt{\alpha}}{\left(b\rho+x\sqrt{\alpha}\right)^2},
\qquad \text{whenever } \rho\in\left\{-1,1\right\}.
\end{equation}
\item When the parameter $a$ is not null, we do not have a closed-form representation for $u_x$, 
and hence not for the function $\Lambda^*$ either.
However computing $\Lambda^*$ is a simple root-finding exercise and the smoothness of the function $\Lambda$ makes it computationally quick.
\end{enumerate}
\end{remark}

\begin{proof}[Proof of Proposition~\ref{prop:RateFunction}]
Let $u_x\in\mathcal{D}^o_\Lambda$ be the unique solution of $\Lambda'\left(u\right)=x$,
which exists by Remark~\ref{rem:LambdaStar}~(i).
It is clear from definition~\eqref{eq:DefFenchelLegendreTransf} that, 
for $x\in\Lambda'\left(\mathcal{D}^o_\Lambda\right)$, the Fenchel-Legendre $\Lambda^*$ takes the form given in the proposition.

Assume now that $\Lambda_-'(1)$ is finite.
This is equivalent to $\chi(1)\neq0$ which implies that for every $u\in\mathcal D_\Lambda^o$  we have $u<1$.
Then for any 
$x\in\left[\Lambda'_-\left(1\right),\infty\right)\cap\left(\mathbb{R}\backslash\Lambda'\left(\mathcal{D}^o_\Lambda\right)\right)$
the inequality $\Lambda_-\left(1\right)-\Lambda\left(u\right)\leq x\left(1-u\right)$ 
holds by the Lagrange theorem (and the fact that $\Lambda'$ is strictly increasing).
Hence formula~\eqref{eq:FormLambdaStar} follows. 

If $\Lambda_+'(0)$ is finite, then for every $u\in\mathcal D_\Lambda^o$ we have $u>0$.
For any
$x\in\left(-\infty,\Lambda'_+\left(0\right)\right]\cap\left(\mathbb{R}\backslash\Lambda'\left(\mathcal{D}^o_\Lambda\right)\right)$
the inequality $ux-\Lambda\left(u\right)\leq-\Lambda_+\left(0\right)$ holds for all $u\in\mathcal D_\Lambda^o$.
Hence formula~\eqref{eq:FormLambdaStar} follows. 

The function $\Lambda^*$ is continuously differentiable on $\mathbb{R}$ by~\eqref{eq:FormLambdaStar} and Remark~\ref{rem:LambdaStar}~(i).
Note that, if $0\in\mathcal D_\Lambda^o$, at the minimum we have $u_x=0$.
This implies by definition that the minimum of $\Lambda^*$ is attained at $\Lambda'(0)=x$.
The case $0\notin\mathcal D_\Lambda^o$ follows in a similar way.

If $1\in\mathcal D_\Lambda^o$, then by differentiating the formula in~\eqref{eq:FormLambdaStar} 
we find that the minimum of $x\mapsto \Lambda^*(x)=x$ is attained if and only if $u_x=1$,
which is equivalent to $\Lambda'(1)=x$.
If $1\notin\mathcal D_\Lambda^o$, it is easy to see that the minimum is attained for all $x\geq \Lambda_-'(1)$.
This concludes the proof.
\end{proof}

Before stating the main theorem of this paper, let us define a probability measure $\widetilde{\mathbb{P}}$,
known as the Share measure\label{def:ShareMeasure}, via the Radon-Nikodym derivative
$\D \widetilde{\mathbb{P}}/\D\mathbb{P}$ which at time $t$ takes the form $\E^{X_t}$.
Since $(\E^{X_t})_{t\geq0}$ is a martingale,
$\widetilde{\mathbb{P}}$ is a well-defined probability measure. 
The cumulant generating functions and consequently the Fenchel-Legendre transforms of $X$ under $\mathbb{P}$ and $\widetilde{\mathbb{P}}$
are related by
\begin{equation}
\label{eq:LambdaStarInPTilde}
\widetilde{\Lambda}(u)=\Lambda(u+1),%\quad\text{ for }
\quad \text{for all $u$ such that }(1+u)\in\mathcal D_\Lambda,\quad\text{ and }\quad
\widetilde{\Lambda}^*(x)=\Lambda^*(x)-x,\quad \text{for all }x\in\mathbb{R}.
\end{equation}
We are now equipped to state the main theorem of this paper, the proof of which is postponed to Section~\ref{sec:ProofLDPThm}.
\begin{theorem}\label{thm:LDPAffineThm}
The family of random variables $(X_t/t)_{t\geq 1}$ where the process  $X$ is defined in~\eqref{eq:DefAffine}
satisfies a large deviations principle under $\mathbb{P}$ (respectively under $\widetilde{\mathbb{P}}$) with rate function $\Lambda^*$ described in Proposition~\ref{prop:RateFunction}
(resp. $\widetilde{\Lambda}^*$ in~\eqref{eq:LambdaStarInPTilde}) .
%\begin{itemize}
%\item[(i)] a large deviations principle under $\mathbb{P}$ with rate function $\Lambda^*$ described in %Proposition~\ref{prop:RateFunction};
%\item[(ii)]  a large deviations principle under $\widetilde{\mathbb{P}}$ with rate function %$\widetilde{\Lambda}^*$ defined in~\eqref{eq:LambdaStarInPTilde}.
%\end{itemize}
\end{theorem}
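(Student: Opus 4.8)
The plan is to prove Theorem~\ref{thm:LDPAffineThm} by verifying the hypotheses of the refined G\"artner-Ellis theorem (Theorem~\ref{thm:GartnerEllis}) and then bootstrapping the lower bound from the set of exposed points to all open sets. Corollary~\ref{cor:LimitLambda} already identifies the limiting cumulant generating function $\Lambda$ and shows it is finite and infinitely differentiable on $\mathcal D_\Lambda^o$; in particular the convergence~\eqref{eq:LDP_Assumption} holds for every $u\in\mathbb R$ as an extended real number, since for $u\notin\mathcal D_\Lambda$ one has $\Lambda_t(u)=\infty$ (or grows superlinearly) for $t$ large by Proposition~\ref{prop:DomainHeston}. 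The first order of business is therefore to check~\eqref{eq:0_in_Domain_Assumption}, i.e. $0\in\mathcal D_\Lambda^o$: this is immediate from Corollary~\ref{cor:LimitLambda} because $[0,1]\subset\mathcal D_\Lambda$ always and $\mathcal D_\Lambda$ is an interval with nonempty interior containing $[0,1]$, hence $0$ and $1$ are either interior points or left/right endpoints — and in cases (ii)(a),(ii)(b) where $0$ is a boundary point we still have $0\in\mathcal D_\Lambda$ with the interval extending to the right, so $0\in\mathcal D_\Lambda^o$ fails only if $\mathcal D_\Lambda=\{0\}$, which never happens. Wait — in case (ii) we have $\mathcal D_\Lambda=[0,u_+]$ or $[0,1]$, so $0$ is a genuine boundary point and $0\notin\mathcal D_\Lambda^o$; the point is that Theorem~\ref{thm:GartnerEllis} only needs $\Lambda_t^Z$ finite near the origin (true here, since $X_t$ has exponential moments of all orders near $0$ for each fixed $t$ by Proposition~\ref{prop:DomainHeston}) and $0\in\mathcal D_\Lambda^o$, and in case (ii) the latter genuinely fails, so the plain upper/lower bounds of Theorem~\ref{thm:GartnerEllis} apply but full essential smoothness does not. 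So the honest structure is: apply the two one-sided bounds of Theorem~\ref{thm:GartnerEllis} unconditionally, and then close the gap by hand.

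The heart of the proof is the lower bound: Theorem~\ref{thm:GartnerEllis} gives $\liminf t^{-1}\log\mathbb P[Z_t\in G]\ge-\inf\{\Lambda^*(x):x\in G\cap\mathcal E\}$, and we must upgrade $\mathcal E$ to all of $\mathbb R$ (equivalently, to $\mathcal D_{\Lambda^*}=\mathbb R$). The set $\mathcal E$ of exposed points with exposing hyperplane slope in $\mathcal D_\Lambda^o$ is exactly $\Lambda'(\mathcal D_\Lambda^o)$ by strict convexity of $\Lambda$ (Remark~\ref{rem:LambdaStar}(i)); from~\eqref{eq:FormOfDomain} this interval misses at most a left half-line $(-\infty,\Lambda'_+(0)]$ (when $\chi(0)>0$) and a right half-line $[\Lambda'_-(1),\infty)$ (when $\chi(1)>0$). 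On these leftover half-lines $\Lambda^*$ is affine — $\Lambda^*(x)=-\Lambda_+(0)$ on the left piece and $\Lambda^*(x)=x-\Lambda_-(1)$ on the right piece, by Proposition~\ref{prop:RateFunction} — so the rate function is constant (resp. has slope $1$) there. The standard device is a change of measure / tilting argument: on the right leftover half-line, pass to the Share measure $\widetilde{\mathbb P}$ under which $\widetilde\Lambda(u)=\Lambda(u+1)$ and $\widetilde\Lambda^*(x)=\Lambda^*(x)-x$ by~\eqref{eq:LambdaStarInPTilde}; the role of $0$ and $1$ swap, so the ``bad'' boundary at $u=1$ under $\mathbb P$ becomes the ``good'' interior-or-left-endpoint $0$ under $\widetilde{\mathbb P}$, and points $x\ge\Lambda'_-(1)$ that were non-exposed under $\mathbb P$ become exposed (or handled by the mass-at-the-minimum argument) under $\widetilde{\mathbb P}$. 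Concretely, I would show that for an open interval $G$ contained in the right leftover region, $\mathbb P[Z_t\in G]=\widetilde{\mathbb E}[\E^{-X_t}\mathbf 1_{Z_t\in G}]\ge\E^{-t(\sup G)}\widetilde{\mathbb P}[Z_t\in G]$, and the latter probability has the right exponential decay because under $\widetilde{\mathbb P}$ the relevant points are exposed; combining gives $\liminf t^{-1}\log\mathbb P[Z_t\in G]\ge-\sup G-\widetilde\Lambda^*(\text{something})=-\inf_{x\in G}(\Lambda^*(x))$ after optimizing, using that $\Lambda^*$ is affine with slope $1$ there so $\inf_{x\in G}\Lambda^*(x)$ is attained at the left endpoint and matches. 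The symmetric argument with the measure $\D\widehat{\mathbb P}/\D\mathbb P|_t=\E^{-0\cdot X_t}$... no — for the left leftover region $x\le\Lambda'_+(0)$, where $\Lambda^*$ is flat, one instead uses the trivial bound $\mathbb P[Z_t\in G]\ge$ (mass that the process puts near its law of large numbers limit gets transported) — more precisely, since $\Lambda^*$ is constant $=-\Lambda_+(0)$ there and $-\Lambda_+(0)$ is the global minimum of $\Lambda^*$, one shows $\liminf t^{-1}\log\mathbb P[Z_t\le c]\ge\Lambda_+(0)$ for any $c>\Lambda'_+(0)$ directly by Chernoff/tilting with parameter $u\searrow0$ and a Cram\'er-type lower bound in the tilted measure, or simply by noting $\mathbb P[Z_t\in G]\ge\mathbb P[Z_t\le\sup G]-\mathbb P[Z_t\le\inf G]$ and that the first term does not decay faster than $\E^{t\Lambda_+(0)}$.

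The remaining ingredients are comparatively routine. The upper bound is unconditional from Theorem~\ref{thm:GartnerEllis}, giving $\limsup t^{-1}\log\mathbb P[Z_t\in F]\le-\inf_{x\in F}\Lambda^*(x)$ for closed $F$, and one checks $\Lambda^*$ is a good rate function — lower semicontinuous as a Legendre transform, nonnegative since $\Lambda(0)=0$, with compact level sets because by Proposition~\ref{prop:RateFunction} $\Lambda^*$ is either genuinely superlinear in both directions (case (i)(a)) or, in the other cases, grows like $|x|$ or jumps to a linear branch, which still forces $\{\Lambda^*\le y\}$ to be a bounded interval — actually in cases (ii) the level sets are bounded because $\Lambda^*(x)\to\infty$ as $x\to\pm\infty$ along the exposed branch and the affine leftover branches have positive slope... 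I should double-check the left leftover branch where $\Lambda^*\equiv-\Lambda_+(0)$ is constant: there $\{\Lambda^*\le y\}$ contains a half-line, so $\Lambda^*$ is \emph{not} a good rate function in the strict sense unless $-\Lambda_+(0)>0$, i.e. unless $\chi(0)>0$ forces... hmm, $-\Lambda_+(0)=\tfrac{b}{\alpha}(\chi(0)+|\chi(0)|)$ which is $0$ when $\chi(0)\le0$ and $>0$ when $\chi(0)>0$; and the flat branch only occurs when $\chi(0)>0$, so on that branch $\Lambda^*\equiv-\Lambda_+(0)>0$ and the level sets $\{\Lambda^*\le y\}$ for $y<-\Lambda_+(0)$ are compact while for $y\ge-\Lambda_+(0)$ they are all of... no wait, they're a half-line unioned with a bounded piece, hence not compact. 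So $\Lambda^*$ is a rate function but not a \emph{good} one in cases (ii); the theorem statement should be read as: the LDP holds with rate function $\Lambda^*$, the goodness holding precisely in case (i)(a). I would state this caveat, cite~\cite[Lemma 2.3.9]{DemboZeitouni} for the parts that do hold, and present the change-of-measure lower bound as the one genuinely new step. The main obstacle, then, is making the tilting-to-$\widetilde{\mathbb P}$ lower bound rigorous when the target point is exactly the non-steep boundary — one must produce a Cram\'er-type lower bound in the tilted measure without assuming steepness there, which is where a careful direct estimate on $\mathbb E[\E^{uX_t}\mathbf 1_{X_t/t\in(x-\delta,x+\delta)}]$ (or an appeal to the technical lemmas promised in Section~\ref{app:TechLemma}) is unavoidable.
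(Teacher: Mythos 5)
Your high-level framing is right — split into the four cases of Corollary~\ref{cor:LimitLambda}, use G\"artner--Ellis directly where it applies, and close the lower-bound gap on the non-exposed branches by a change of measure — and you correctly flag the failure of goodness of $\Lambda^*$ and the upper-bound subtlety when $0\notin\mathcal D_\Lambda^o$. But the mechanism you propose for the non-exposed branches does not work, and this is the whole content of the theorem.

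The central error is the claim that passing to the Share measure $\widetilde{\mathbb{P}}$ cures the non-steepness at $u=1$. It does not: tilting by a \emph{fixed} exponential shifts $\mathcal D_\Lambda$ by $1$, so in case~(i)(b) where $\mathcal D_\Lambda=[u_-,1]$ one gets $\widetilde{\mathcal D}_\Lambda=[u_--1,0]$, and the origin is now the \emph{right} endpoint of the new domain — the non-steep boundary has merely been translated to sit at $0$, not removed. Moreover, since $\widetilde\Lambda^*(x)=\Lambda^*(x)-x$ is only a shear, the set of exposed points is invariant under this change of measure: the half-line $[\Lambda'_-(1),\infty)$ on which $\Lambda^*$ is affine with slope $1$ becomes the half-line on which $\widetilde\Lambda^*$ is flat, still non-exposed. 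So your statement that ``points $x\ge\Lambda'_-(1)$ that were non-exposed under $\mathbb P$ become exposed under $\widetilde{\mathbb P}$'' is false, and the lower bound you would derive this way is circular. Similarly, the left-half-line argument via $\mathbb P[Z_t\in G]\ge\mathbb P[Z_t\le\sup G]-\mathbb P[Z_t\le\inf G]$ does not establish anything: you have no control on the cancellation between the two terms (both may decay at the rate $\E^{t\Lambda_+(0)}$), and showing that the first term ``does not decay faster'' is precisely what needs to be proved.

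What is actually required — and what the paper does — is a \emph{time-dependent} tilt $u_t$ that approaches the non-steep boundary as $t\to\infty$, chosen as the critical point of an auxiliary convex function $\overline\Lambda_t(u)=\Lambda(u)t-\tfrac{2b}{\alpha}\log(1-u)$ which isolates the logarithmic singularity of $\Lambda_t$ near $u=1$ (Lemma~\ref{lem:LambdaExpansion}). The key analytic input is then a weak-convergence result (Lemma~\ref{lem:WeakConvergenceAt1}) showing that under the tilted measures $\overline{\mathbb P}_t$ the laws of $X_t/t$ (or $X_t/\sqrt t$ when $\Lambda'_-(1)=0$) converge to a nondegenerate distribution with full support (a convolution of a Dirac or Gaussian with a Gamma law), which is exactly what forces $t^{-1}\log\overline{\mathbb P}_t(X_t/t\in(x,x+\delta))\to0$. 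You gesture at this in your last sentence (``a careful direct estimate ... or an appeal to the technical lemmas ... is unavoidable''), which is the right instinct, but it is the main step, not a peripheral cleanup. Finally, the case $\Lambda'_-(1)=0$ (degenerate boundary) requires an additional exponential-equivalence argument with a perturbing L\'evy process, which your proposal does not touch; and the upper bound for general closed sets when $0\notin\mathcal D_\Lambda^o$ is not ``unconditional from Theorem~\ref{thm:GartnerEllis}'' (whose hypothesis~\eqref{eq:0_in_Domain_Assumption} then fails) — it needs either the O'Brien--Sun extension or the direct truncation argument given in the proof of Proposition~\ref{prop:NonSteep0}.
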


%%%%%%%%%%%%%%%%%%%%%%%%%%%%%%%%%%%%%%%%%%%%%%%%
%%%%%%%%%%%%%%%%%%%%%%%%%%%%%%%%%%%%%%%%%%%%%%%%
\section{Asymptotics of option prices and implied volatilities}
\label{sec:RateFunctionsAndOptionPrices}
In this section we relate the rate function $\Lambda^*$ governing the large deviations of the family $(X_t/t)_{t\geq1}$ 
to the option prices in the case of model~\eqref{eq:DefAffine} and the Black-Scholes model.
These asymptotic option prices will then be translated into implied volatility asymptotics.

\subsection{Asymptotics of option prices}
Theorem~\ref{thm:OptionPrices} and Corollary~\ref{cor:BS} below describe the limiting behaviour of European option prices 
respectively in the model~\eqref{eq:DefAffine} and in the Black-Scholes model
when the maturity tends to infinity.
These results were proved in~\cite{JKRM} and we recall them here 
to highlight the importance of proving a large deviations principle 
under both probability measures $\mathbb{P}$ and $\widetilde{\mathbb{P}}$.
\begin{theorem}
\label{thm:OptionPrices}
Let the Fenchel-Legendre transform $\Lambda^*$ be as in~\eqref{eq:DefFenchelLegendreTransf} 
for the family of random variables $(X_t/t)_{t\geq1}$, where $(X_t)_{t\geq0}$
is given by SDE~\eqref{eq:DefAffine}, 
and let $x\in\mathbb{R}$ be a fixed number.
\begin{enumerate}
\item[(i)] 
If $(X_t/t)_{t\geq1}$ satisfies the LDP under the measure $\mathbb{P}$ with the good rate function $\Lambda^*$,
the asymptotic behaviour of a put option with strike $\E^{xt}$ is given by the following formula
$$
\lim_{t\to\infty}\frac{1}{t}\log\mathbb{E}\left[\left(\E^{xt}-\E^{X_t}\right)^+\right] = 
\left\{
\begin{array}{ll}
x-\Lambda^*\left(x\right),\quad & \text{if }x\leq\Lambda_+'\left(0\right),\\
x-\Lambda_+\left(0\right),\quad & \text{if }x>\Lambda_+'\left(0\right),
\end{array}
\right.
$$
where $\Lambda_+\left(0\right)$ and $\Lambda'_+(0)$ are defined in~\eqref{eq:DefLimits}.
\item[(ii)] 
If $(X_t/t)_{t\geq1}$ satisfies the LDP under the measure $\widetilde{\mathbb{P}}$ with the good rate function $\widetilde{\Lambda}^*$,
the asymptotic behaviour of a call option, struck at $\E^{xt}$, is given by
$$
\lim_{t\to\infty}\frac{1}{t}\log\mathbb{E}\left[\left(\E^{X_t}-\E^{xt}\right)^+\right] = 
\left\{
\begin{array}{ll}
x-\Lambda^*\left(x\right),\quad & \text{if }x\geq\Lambda_-'\left(1\right),\\
-\Lambda_-\left(1\right),\quad & \text{if }x<\Lambda_-'\left(1\right), 
\end{array}
\right.
$$
\item[(iii)] 
If $(X_t/t)_{t\geq1}$ satisfies the LDP under both $\mathbb{P}$ and $\widetilde{\mathbb{P}}$ with the respective good rate functions
$\Lambda^*$ and $\widetilde{\Lambda}^*$, the asymptotic behaviour of a covered call option with payoff
$\E^{X_t}-\left(\E^{X_t}-\E^{xt}\right)^+$ is given by 
$$
\lim_{t\to\infty}t^{-1}\log\left(1-\mathbb{E}\left[\left(\E^{X_t}-\E^{xt}\right)^+\right]\right) = 
x-\Lambda^*\left(x\right),\quad  \text{if }x\in\left[\Lambda'_+\left(0\right),\Lambda_-'\left(1\right)\right].
$$
\end{enumerate}
\end{theorem}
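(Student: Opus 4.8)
The plan is to reduce each of the three asymptotics to the large deviations principle (LDP) postulated in the hypotheses, after rewriting the relevant payoff as the expectation of a \emph{bounded} functional of the rescaled log-price $Y_t:=X_t/t$. Passing to the Share measure $\widetilde{\mathbb P}$ in parts~(ii) and~(iii) is what makes this possible there: the call payoff $(\E^{X_t}-\E^{xt})^+$ is unbounded under $\mathbb P$, whereas $\E^{-X_t}(\E^{X_t}-\E^{xt})^+=\bigl(1-\E^{-t(Y_t-x)}\bigr)^+$ lies in $[0,1]$; and by~\eqref{eq:LambdaStarInPTilde}, under $\widetilde{\mathbb P}$ the family $(Y_t)_{t\geq1}$ is assumed to satisfy the LDP with good rate function $\widetilde\Lambda^*=\Lambda^*-\mathrm{id}$.

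For part~(i) I would write $\mathbb E\bigl[(\E^{xt}-\E^{X_t})^+\bigr]=\E^{xt}\,\mathbb E\bigl[g_t(Y_t)\bigr]$ with $g_t(y):=(1-\E^{t(y-x)})^+$, and use the sandwich $(1-\E^{-t\delta})\,\mathbf{1}_{\{y<x-\delta\}}\leq g_t(y)\leq\mathbf{1}_{\{y\leq x\}}$, valid for every $\delta>0$. The right-hand indicator is that of a closed half-line and the left-hand one that of an open half-line, so the LDP upper and lower bounds under $\mathbb P$ give, for each $\delta>0$,
$$
x-\inf_{y<x-\delta}\Lambda^*(y)\;\leq\;\liminf_{t\to\infty}\frac{1}{t}\log\mathbb E\bigl[(\E^{xt}-\E^{X_t})^+\bigr]\;\leq\;\limsup_{t\to\infty}\frac{1}{t}\log\mathbb E\bigl[(\E^{xt}-\E^{X_t})^+\bigr]\;\leq\;x-\inf_{y\leq x}\Lambda^*(y).
$$
Sending $\delta\downarrow0$ and invoking the continuity of $\Lambda^*$ (Proposition~\ref{prop:RateFunction}) collapses both sides to $x-\inf_{y\leq x}\Lambda^*(y)$; since $\Lambda^*$ is convex with minimum at $\Lambda'_+(0)$, this infimum is $\Lambda^*(x)$ for $x\leq\Lambda'_+(0)$ and the global minimum for $x>\Lambda'_+(0)$, and feeding this back yields the two-case formula in~(i). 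Part~(ii) runs in exactly the same way: with $\mathbb E\bigl[(\E^{X_t}-\E^{xt})^+\bigr]=\widetilde{\mathbb E}\bigl[h_t(Y_t)\bigr]$, $h_t(y):=(1-\E^{-t(y-x)})^+$, one has $(1-\E^{-t\delta})\,\mathbf{1}_{\{y>x+\delta\}}\leq h_t(y)\leq\mathbf{1}_{\{y\geq x\}}$, so the LDP under $\widetilde{\mathbb P}$ gives $\frac{1}{t}\log\mathbb E\bigl[(\E^{X_t}-\E^{xt})^+\bigr]\to-\inf_{y\geq x}\widetilde\Lambda^*(y)$, and the shape of $\widetilde\Lambda^*=\Lambda^*-\mathrm{id}$ (convex, minimised at $\Lambda'_-(1)$; Proposition~\ref{prop:RateFunction}(ii)) produces the dichotomy at the threshold $\Lambda'_-(1)$.

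For part~(iii) the cleanest route is to observe, via the change of measure $\D\widetilde{\mathbb P}/\D\mathbb P=\E^{X_t}$ at time $t$, that
$$
1-\mathbb E\bigl[(\E^{X_t}-\E^{xt})^+\bigr]=\widetilde{\mathbb E}\bigl[1\wedge\E^{-t(Y_t-x)}\bigr]=\widetilde{\mathbb E}\bigl[\E^{-t(Y_t-x)^+}\bigr].
$$
Here $y\mapsto-(y-x)^+$ is continuous and bounded above by $0$, so Varadhan's integral lemma applies (the upper half needs only boundedness from above, the lower half only continuity), yielding
$$
\lim_{t\to\infty}\frac{1}{t}\log\Bigl(1-\mathbb E\bigl[(\E^{X_t}-\E^{xt})^+\bigr]\Bigr)=\sup_{y\in\mathbb R}\bigl(-(y-x)^+-\widetilde\Lambda^*(y)\bigr)=\max\Bigl(-\inf_{y\leq x}\bigl(\Lambda^*(y)-y\bigr),\;x-\inf_{y>x}\Lambda^*(y)\Bigr).
$$
When $x\in[\Lambda'_+(0),\Lambda'_-(1)]$, monotonicity of $\Lambda^*$ on $[\Lambda'_+(0),\infty)$ and of $\Lambda^*-\mathrm{id}$ on $(-\infty,\Lambda'_-(1)]$ places both infima at $y=x$, so the expression reduces to $x-\Lambda^*(x)$. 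An alternative that avoids quoting Varadhan is to split $\widetilde{\mathbb E}\bigl[\E^{-t(Y_t-x)^+}\bigr]$ at the level $x$: the part over $\{Y_t\leq x\}$ is exactly $\widetilde{\mathbb P}[Y_t\leq x]$, and the part over $\{Y_t>x\}$ is handled by covering $(x,\infty)$ by dyadic cells, applying the LDP upper bound on each, and noting that only $O(t)$ cells matter while the far tail is suppressed by the weight $\E^{-t(Y_t-x)}$ (recall $\widetilde{\mathbb E}[\E^{-X_t}]=1$).

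The principal obstacle, I expect, is twofold. First, in~(iii) one must control the region $\{Y_t>x\}$, where the integrand decays to $0$ rather than being bounded below by a positive constant — this is precisely where Varadhan's lemma (or the hand-made Laplace estimate above) is indispensable, a naive domination by a probability being too lossy there. Second, one must carry out the elementary but somewhat delicate case analysis identifying $x-\inf_{y\leq x}\Lambda^*(y)$, $-\inf_{y\geq x}\widetilde\Lambda^*(y)$ and the maximum above with the closed-form right-hand sides; this rests on the convexity of $\Lambda^*$ and of $\Lambda^*-\mathrm{id}$, the locations of their minima, and the explicit values of $\Lambda_+(0)$, $\Lambda'_+(0)$, $\Lambda_-(1)$, $\Lambda'_-(1)$ recorded in Remark~\ref{rem:Values01} and Proposition~\ref{prop:RateFunction}.
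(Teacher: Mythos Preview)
The paper does not actually give a proof of this theorem: immediately after the statement it writes ``These results were proved in~\cite{JKRM} and we recall them here to highlight the importance of proving a large deviations principle under both probability measures $\mathbb{P}$ and $\widetilde{\mathbb{P}}$.'' So there is nothing to compare against beyond that citation.

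Your proposal is a correct and self-contained route, and it is essentially the standard one (and, to the best of my knowledge, the one in~\cite{JKRM}): rewrite each payoff as a bounded functional of $Y_t=X_t/t$ (switching to the Share measure $\widetilde{\mathbb P}$ for the call and the covered call), sandwich by indicator functions of half-lines to reduce (i)--(ii) to the LDP for tail probabilities, and invoke Varadhan's integral lemma for (iii) with the continuous, above-bounded integrand $y\mapsto-(y-x)^+$. The identification of $\inf_{y\leq x}\Lambda^*(y)$, $\inf_{y\geq x}\widetilde\Lambda^*(y)$ and the Varadhan supremum with the stated closed forms is exactly the place where Proposition~\ref{prop:RateFunction} and the limits of Remark~\ref{rem:Values01} are used, as you anticipate. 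One small comment: for the $\delta\downarrow0$ step in (i)--(ii) you use that $x\mapsto\inf_{y\leq x}\Lambda^*(y)$ is continuous; this follows from the continuity of $\Lambda^*$ on $\mathbb R$ (Proposition~\ref{prop:RateFunction}), so it is worth saying explicitly. Everything else is in order.
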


Let us consider the Black-Scholes model where the process $\left(X_t\right)_{t\geq 0}$ satisfies the SDE
$\D X_t=-\Sigma^2/2\D t+\Sigma\D W_t$, with $\Sigma>0$. 
Its limiting cumulant generating function reads
$\Lambda_{\mathrm{BS}}(u)=u\left(u-1\right)\Sigma^2/2$
for all $u\in\mathbb{R}$, 
and we define its Fenchel-Legendre transform~\eqref{eq:DefFenchelLegendreTransf} $\Lambda^*_{\mathrm{BS}}(\cdot,\Sigma)$.
Since the function $\partial_x \Lambda'_{\mathrm{BS}}(\cdot,\Sigma)$ is strictly increasing on the whole real line, 
the equation $\Lambda'_{\mathrm{BS}}\left(u\right)=x$ has a unique solution $u_x\in\mathbb{R}$ 
for any real number $x$.
It is straightforward to see that 
$u_x=x/\Sigma^2+1/2$ and hence
$\Lambda^*_{\mathrm{BS}}\left(x,\Sigma\right)=\left(x+\Sigma^2/2\right)^2/\left(2\Sigma^2\right)$ 
for all $x\in\mathbb{R}$.
From this characterisation it is immediate to see that 
$\partial_x \Lambda^*_{\mathrm{BS}}\left(x,\Sigma\right)=0$ if and only if $x=-\Sigma^2/2$
and
$\partial_x \Lambda^*_{\mathrm{BS}}\left(x,\Sigma\right)=1$ if and only if $x=\Sigma^2/2$.
\begin{corollary}\label{cor:BS}
Under the Black-Scholes model, we have the following option price asymptotics.
\begin{align*}
\lim_{t\to\infty}\frac{1}{t}\log\mathbb{E}\left(\E^{xt}-\E^{X_t}\right)_+ & = 
\left\{
\begin{array}{ll}
x-\Lambda^*_{\mathrm{BS}}\left(x,\Sigma\right),\quad & \text{if }x\leq-\Sigma^2/2,\\
x,\quad & \text{if }x>-\Sigma^2/2,
\end{array}
\right.\\
\lim_{t\to\infty}\frac{1}{t}\log\mathbb{E}\left(\E^{X_t}-\E^{xt}\right)_+ & = 
\left\{
\begin{array}{ll}
x-\Lambda^*_{\mathrm{BS}}\left(x,\Sigma\right),\quad & \text{if }x \geq \Sigma^2/2,\\
0,\quad & \text{if }x<\Sigma^2/2,
\end{array}
\right.\\
\lim_{t\to\infty}\frac{1}{t}\log\left(1-\mathbb{E}\left(\E^{X_t}-\E^{xt}\right)_+\right) & = 
\left\{
\begin{array}{ll}
2x+\Sigma^2,\quad & \text{if }x\leq-3 \Sigma^2/2,\\
x-\Lambda^*_{\mathrm{BS}}\left(x,\Sigma\right),\quad & \text{if }x\in\left(-3 \Sigma^2/2,\Sigma^2/2\right],\\
0,\quad & \text{if }x>\Sigma^2/2.
\end{array}
\right.
\end{align*}
\end{corollary}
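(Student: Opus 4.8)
The plan is to specialise Theorem~\ref{thm:OptionPrices} to the Black--Scholes model and to add a short direct argument for the ranges of $x$ that this theorem does not cover. The first step is the observation that the Black--Scholes dynamics $\D X_t=-\tfrac12\Sigma^2\,\D t+\Sigma\,\D W_t$ is the degenerate member of the family~\eqref{eq:DefAffine} obtained by setting $a=\Sigma^2$, $b=0$ and $v=0$ (so that $V_t\equiv0$); alternatively, and more directly, one notes that $\Lambda_{\mathrm{BS}}(u)=u(u-1)\Sigma^2/2$ is finite, convex and real-analytic on all of $\mathbb{R}$, hence trivially essentially smooth and lower semicontinuous, and that $0\in\mathcal D^o_\Lambda=\mathbb{R}$. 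By Theorem~\ref{thm:GartnerEllis} the family $(X_t/t)_{t\ge1}$ therefore satisfies an LDP under $\mathbb{P}$ with the good rate function $\Lambda^*_{\mathrm{BS}}(\cdot,\Sigma)=(\cdot+\Sigma^2/2)^2/(2\Sigma^2)$; since $\widetilde\Lambda_{\mathrm{BS}}(u)=\Lambda_{\mathrm{BS}}(u+1)=u(u+1)\Sigma^2/2$ is again finite and real-analytic on $\mathbb{R}$, the same theorem also yields an LDP under $\widetilde{\mathbb{P}}$ with rate function $\widetilde\Lambda^*_{\mathrm{BS}}(x)=\Lambda^*_{\mathrm{BS}}(x,\Sigma)-x$. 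Thus the hypotheses of all three parts of Theorem~\ref{thm:OptionPrices} are in force.

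The put asymptotics then follow by substitution into Theorem~\ref{thm:OptionPrices}(i): as $0\in\mathcal D^o_\Lambda$ we have $\Lambda_+(0)=\Lambda_{\mathrm{BS}}(0)=0$ and $\Lambda'_+(0)=\Lambda'_{\mathrm{BS}}(0)=-\Sigma^2/2$, which turns the formula of that theorem into the first display of the corollary. The call asymptotics follow in exactly the same way from Theorem~\ref{thm:OptionPrices}(ii), now using $1\in\mathcal D^o_\Lambda$, $\Lambda_-(1)=\Lambda_{\mathrm{BS}}(1)=0$ and $\Lambda'_-(1)=\Lambda'_{\mathrm{BS}}(1)=\Sigma^2/2$, giving the second display.

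For the covered call, the central range $x\in[-\Sigma^2/2,\Sigma^2/2]=[\Lambda'_+(0),\Lambda'_-(1)]$ is precisely the content of Theorem~\ref{thm:OptionPrices}(iii). For $x>\Sigma^2/2$, part~(ii) already shows that the call value $\mathbb{E}[(\E^{X_t}-\E^{xt})^+]$ decays exponentially, whence $1-\mathbb{E}[(\E^{X_t}-\E^{xt})^+]\to1$ and the limit is $0$. For the remaining left-hand range I would argue directly: since $\mathbb{E}[\E^{X_t}]=1$ one has $1-\mathbb{E}[(\E^{X_t}-\E^{xt})^+]=\mathbb{E}[\E^{X_t}\wedge\E^{xt}]$, and because $X_t$ is Gaussian with mean $-\Sigma^2t/2$ and variance $\Sigma^2t$ this expectation equals the explicit Black--Scholes quantity $\Phi\left((x-\Sigma^2/2)\sqrt{t}/\Sigma\right)+\E^{xt}\,\Phi\left(-(x+\Sigma^2/2)\sqrt{t}/\Sigma\right)$. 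Its logarithmic asymptotics then reduce to the elementary Gaussian estimates $t^{-1}\log\Phi(-c\sqrt{t})\to-c^2/2$ for $c>0$ and $t^{-1}\log\Phi(c\sqrt{t})\to0$ for $c\ge0$, applied to each of the two summands; retaining the dominant one and checking that the matching lower bound is of the same exponential order produces the remaining branch of the piecewise expression, the transition between the two competing rates taking place at an explicitly computable threshold. (Equivalently, these tails can be extracted from the LDP of the first paragraph via Varadhan's lemma applied to the bounded-above functional $y\mapsto\min(y,x)$, but the Gaussian computation is self-contained for this model.) The only genuinely delicate point in the whole proof is this last step, where one must identify which of the two summands dominates and confirm continuity of the resulting rate function at the crossover; everything else reduces to direct substitution into Theorem~\ref{thm:OptionPrices}.
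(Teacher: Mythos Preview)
Your approach coincides with the paper's: the paper does not include a separate proof of this corollary but simply records, in the paragraph preceding the statement, the Black--Scholes quantities $\Lambda_{\mathrm{BS}}(u)=u(u-1)\Sigma^2/2$, $\Lambda^*_{\mathrm{BS}}(x,\Sigma)=(x+\Sigma^2/2)^2/(2\Sigma^2)$, $\Lambda'(0)=-\Sigma^2/2$, $\Lambda'(1)=\Sigma^2/2$, and then attributes the result (together with Theorem~\ref{thm:OptionPrices}) to~\cite{JKRM}. Your derivation of the first two displays, and of the middle branch $x\in[-\Sigma^2/2,\Sigma^2/2]$ of the third, by direct substitution into Theorem~\ref{thm:OptionPrices} is exactly this specialisation. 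The paper offers nothing further, so on those parts the two arguments are identical.

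Where you go beyond the paper is the covered call for $x\notin[-\Sigma^2/2,\Sigma^2/2]$, and here a word of caution is warranted. Your argument for $x>\Sigma^2/2$ is fine. For $x<-\Sigma^2/2$, however, carry your own decomposition one step further: in
\[
1-\mathbb{E}\bigl[(\E^{X_t}-\E^{xt})^+\bigr]
=\Phi\!\left(\frac{(x-\Sigma^2/2)\sqrt{t}}{\Sigma}\right)
+\E^{xt}\,\Phi\!\left(\frac{-(x+\Sigma^2/2)\sqrt{t}}{\Sigma}\right),
\]
the second $\Phi$-factor tends to $1$ (its argument is positive), so the second summand has exponential rate $x$, while the first summand has rate $-(x-\Sigma^2/2)^2/(2\Sigma^2)=x-\Lambda^*_{\mathrm{BS}}(x,\Sigma)<x$. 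The dominant term therefore gives rate $x$ for \emph{all} $x<-\Sigma^2/2$, with no crossover at $-3\Sigma^2/2$. In other words, your outlined computation does not reproduce the left-tail branches printed in the corollary; you should flag this discrepancy rather than assume the ``explicitly computable threshold'' will materialise. (Only the middle branch $x\in[-\Sigma^2/2,\Sigma^2/2]$ of the covered-call display is actually used in Proposition~\ref{prop:ImpliedVol}, which is why this does not affect the rest of the paper.)
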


%%%%%%%%%%%%%%%%%%%%%%%%%%%%%%%%%%%%%%%%%%%%%%%%
%%%%%%%%%%%%%%%%%%%%%%%%%%%%%%%%%%%%%%%%%%%%%%%%
\subsection{Implied volatility asymptotics}
\label{subsec:ImpliedVol}
We now translate the large-maturity asymptotics for option prices proved above to the study of the implied volatility. 
Proposition~\ref{prop:ImpliedVol} provides the limit of the implied volatility 
for continuous affine stochastic volatility models~\eqref{eq:DefAffine}.
For any real number $x$, let $\sigma_t(x)$ represent the Black-Scholes implied volatility of a European call option with strike price 
$S_0\E^{xt}$ in the model~\eqref{eq:DefAffine}.
Let us further define the function $\sigma_\infty:\mathbb{R}\to\mathbb{R}_+$ by
\begin{equation}\label{eq:sigmainf}
\sigma^2_{\infty}(x):=
2 \left(2\Lambda^*\left(x\right)-x+
\mathcal{I}\left(x\right)\Big(\Lambda^*\left(x\right)\left(\Lambda^*\left(x\right)-x\right)\Big)^{1/2}\right),
\quad\text{for all } x\in\mathbb{R},
\end{equation}
where the function $\mathcal{I}:\mathbb{R}\to\mathbb{R}$ is given by 
$$\mathcal{I}\left(x\right)
:=2\left(\ind_{\left\{x\in\left(\Lambda'_+\left(0\right),\Lambda'_-\left(1\right)\right)\right\}}
+\sgn\left(\chi(0)\right)\ind_{\left\{x<\Lambda'_+(0)\right\}}
+\sgn\left(\chi(1)\right)\ind_{\left\{x>\Lambda'_-\left(1\right)\right\}}\right),$$
with $\sgn(x)=1$ if $x\geq 0$ and $-1$ otherwise,
and where the function $\Lambda^*$ is defined in~\eqref{eq:FormLambdaStar}.
The following proposition gives the behaviour of the implied volatility $\sigma_t$ 
as $t$ tends to infinity for all affine stochastic volatility models with continuous paths.
In~\cite{FJ09} and~\cite{JKRM}, the quantities $\chi(0)$ and $\chi(1)$ are assumed to be strictly negative, 
and hence the function $\mathcal{I}$ here is more general than the function $\mathcal{I}$ in these two papers.
\begin{proposition}\label{prop:ImpliedVol}
The function $\sigma_\infty$ defined in~\eqref{eq:sigmainf} is continuous and the equality
$\lim\limits_{t\to\infty}\sigma_t\left(x\right)=\sigma_\infty\left(x\right)$ 
holds for all $x\in\mathbb{R}$ if the parameter $b$ in the model~\eqref{eq:DefAffine} is not null.
\end{proposition}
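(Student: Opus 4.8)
The plan is to run the classical transfer argument from the option‑price asymptotics already available: by definition $\sigma_t(x)$ is the unique $\Sigma>0$ for which the Black--Scholes price of the call with log‑strike $xt$ and volatility $\Sigma$ equals its price under~\eqref{eq:DefAffine}, and the exponential rate of both sides is known --- the model side via Theorem~\ref{thm:OptionPrices} and the Black--Scholes side via Corollary~\ref{cor:BS}. Passing to the limit in this defining identity and inverting the elementary Black--Scholes rate function will pin down $\lim_t\sigma_t(x)$, and the admissible root turns out to be~\eqref{eq:sigmainf}. The continuity of $\sigma_\infty$ is then a direct inspection of that formula.

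First I would fix $x$ and, according to its position relative to the thresholds $\Lambda'_+(0)$ and $\Lambda'_-(1)$ of~\eqref{eq:DefLimits}, replace the call by the payoff that decays exponentially: the put (through put--call parity and the Share measure $\widetilde{\mathbb{P}}$) when $x<\Lambda'_+(0)$, the covered call (using Theorem~\ref{thm:OptionPrices}(iii) and Corollary~\ref{cor:BS}) when $x\in(\Lambda'_+(0),\Lambda'_-(1))$, and the call itself when $x>\Lambda'_-(1)$; the same payoff is used in the Black--Scholes world. Since $\Sigma\mapsto\Lambda^*_{\mathrm{BS}}(x,\Sigma)$ is continuous and, in the relevant branch of Corollary~\ref{cor:BS}, the Black--Scholes rate sweeps through a neighbourhood of the model rate $r(x)$ as $\Sigma$ increases, one can choose $0<m<M$ for which the Black--Scholes rate at $\Sigma=m$ (resp. $\Sigma=M$) is strictly below (resp. above) $r(x)$; combining this with strict monotonicity of the Black--Scholes price in $\Sigma$ and the comparison of exponential rates forces $m\le\sigma_t(x)\le M$ for all $t$ large, so $(\sigma_t(x))_t$ has subsequential limits, all lying in $[m,M]$.

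For any subsequence along which $\sigma_{t_k}(x)\to L$, I would sandwich the Black--Scholes price at volatility $\sigma_{t_k}(x)$ between those at $L\pm\varepsilon$ (monotonicity in $\Sigma$), take $t_k^{-1}\log$, invoke Corollary~\ref{cor:BS} and let $\varepsilon\downarrow0$ (continuity of $\Lambda^*_{\mathrm{BS}}$ in $\Sigma$) to conclude that the Black--Scholes rate at volatility $L$ equals $r(x)$, which by Theorem~\ref{thm:OptionPrices} equals the model rate. In each regime this identity simplifies, after cancelling the affine terms common to the two rates, to $\Lambda^*_{\mathrm{BS}}(x,L)=\Lambda^*(x)$; writing $s:=L^2$ and using $\Lambda^*_{\mathrm{BS}}(x,\Sigma)=(x+\Sigma^2/2)^2/(2\Sigma^2)$ turns it into the quadratic $\tfrac14 s^2+(x-2\Lambda^*(x))s+x^2=0$, whose roots $s_\pm=2\bigl(2\Lambda^*(x)-x\pm 2\sqrt{\Lambda^*(x)(\Lambda^*(x)-x)}\bigr)$ are both non‑negative since $\Lambda^*(x)\ge0$ and $\Lambda^*(x)-x=\widetilde{\Lambda}^*(x)\ge0$ by~\eqref{eq:LambdaStarInPTilde}. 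Exactly one of $s_\pm$ respects the admissibility window of the branch of Corollary~\ref{cor:BS} actually in force ($s\le2x$ for the call branch, $s\le-2x$ for the put branch, $s\ge2x$ for the covered‑call branch), and that root equals $\sigma_\infty^2(x)$ of~\eqref{eq:sigmainf}, the choice of sign being precisely what is recorded by the coefficient $\mathcal I(x)$; when $\Lambda^*(x)(\Lambda^*(x)-x)=0$ the two roots coincide and no choice is needed. Hence $L=\sigma_\infty(x)$, and since the subsequence was arbitrary, $\lim_t\sigma_t(x)=\sigma_\infty(x)$.

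It then remains to record that $\sigma_\infty$ is continuous: on each of the three open intervals cut out by $\Lambda'_+(0)$ and $\Lambda'_-(1)$ the map $\mathcal I$ is constant and $\Lambda^*$ is $C^1$ by Proposition~\ref{prop:RateFunction}, while across the two thresholds one checks directly from~\eqref{eq:sigmainf} that the pieces match --- the relevant facts being that at $x=\Lambda'_+(0)$ (resp. $x=\Lambda'_-(1)$) either the radicand $\Lambda^*(x)(\Lambda^*(x)-x)$ vanishes (which happens exactly when $\chi(0)\le0$, resp. $\chi(1)\le0$, by Proposition~\ref{prop:RateFunction}) or else $\mathcal I$ does not jump there, and that $b\neq0$ forces the threshold to be infinite precisely when $\chi(0)=0$ (resp. $\chi(1)=0$), which collapses the remaining degenerate sub‑cases. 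I expect the main obstacle to be exactly this bookkeeping: verifying, uniformly in the regime, that the chosen payoff genuinely decays so that its log‑rate carries information, that the Black--Scholes inversion has a unique root inside the admissibility window of the relevant branch of Corollary~\ref{cor:BS}, and that the three local descriptions glue into a single continuous $\sigma_\infty$ at $\Lambda'_+(0)$ and $\Lambda'_-(1)$.
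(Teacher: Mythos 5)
Your argument follows the same route the paper indicates. The paper's own proof is essentially a pointer to~\cite[Theorem 14]{JKRM} and~\cite{GaoLee}: once Theorem~\ref{thm:OptionPrices} and Corollary~\ref{cor:BS} yield the rate-matching identity $\Lambda^*(x)=\Lambda^*_{\mathrm{BS}}(x,\sigma_\infty(x))$, one must (a)~pick the correct root of the resulting quadratic in $\sigma_\infty^2$ and (b)~prove that $\sigma_t(x)$ actually converges to it; you carry out (a) and (b) in exactly the intended way, by selecting the exponentially decaying payoff in each of the three regimes cut out by $\Lambda'_+(0)$ and $\Lambda'_-(1)$ and running the monotonicity/sandwich argument on the Black--Scholes price.

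One small imprecision worth flagging. The middle branch of the covered call in Corollary~\ref{cor:BS} is valid on $x\in\left(-3\Sigma^2/2,\Sigma^2/2\right]$, so the admissibility window in $s=\Sigma^2$ is the intersection $\{s\ge 2x\}\cap\{s>-2x/3\}$, and it is the constraint $s>-2x/3$ (not $s\ge 2x$) that is binding when $x<0$ --- a situation that genuinely occurs since $\Lambda'_+(0)$ can be negative. One checks, using $\Lambda^*(x)\ge 0$ and $\Lambda^*(x)\ge x$, that the root $s_+$ selected by $\mathcal{I}(x)=2$ does satisfy both constraints whenever $x$ lies strictly between the two thresholds, so the conclusion stands, but the claim that ``exactly one of $s_\pm$ respects $s\ge 2x$'' is not accurate as stated for negative $x$ (in fact $s_->2x$ there too, and it is the $-2x/3$ constraint that excludes it). The continuity bookkeeping at the two thresholds is otherwise exactly the case analysis the paper intends.
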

\begin{proof}
From Theorem~\ref{thm:OptionPrices} and Corollary~\ref{cor:BS}, the implied volatility $\sigma_\infty$
satisfies the quadratic equation
\begin{equation}\label{eq:BSvolEquation}
\Lambda^*(x)=\Lambda^*_{\mathrm{BS}}\left(x,\sigma_\infty(x)\right),
\end{equation}
for all real number $x$.
The proof of the corollary therefore consists of (a) finding the correct root of this quadratic equation and 
(b) proving the the function $\sigma_t(x)$ converges to this root for all $x$ in the corresponding subset of the real line.
The proof is analogous to the proof of~\cite[Theorem 14]{JKRM}, and we therefore omit it for brevity.
We also refer the reader to the recent work~\cite{GaoLee} for the general methodology to transform option price asymptotics into implied volatility asymptotics.
\end{proof}

\begin{remark}\label{rem:DegenerateCase}
From Corollary~\ref{cor:LimitLambda}, the case $b=0$ can be handled directly since 
the limiting cumulant generating function reads $\Lambda(u)=\frac{1}{2}au\left(u-1\right)$,
for all $u\in\mathcal{D}_{\Lambda}=[0,u_+]$, where $u_+$ is given in~\eqref{eq:DefUP}.
Proposition~\ref{prop:RateFunction} also implies that 
$\Lambda^*(x)=0$ for all $x<\Lambda'_+(0)=-a/2$ and 
$\Lambda^*(x)=\Lambda^*_{\mathrm{BS}}\left(x,\sqrt{a}\right)$ otherwise.
Therefore the limiting implied variance $\sigma^2_{\infty}(x)$ is equal to $-2x$ for all 
$x<-a/2$ and is equal to $a$ for all $x\geq-a/2$.
Note that in the case $\chi(1)=0$, the effective domain $\mathcal{D}_\Lambda$ reads $[u_-,1]$, 
where $u_-$ is given in~\eqref{eq:DefUP}, but the function $\Lambda$ is steep at the right boundary of the domain.
\end{remark}

%%%%%%%%%%%%%%%%%%%%%%%%%%%%%%%%%%%%%%%%%%%%%%%%
%%%%%%%%%%%%%%%%%%%%%%%%%%%%%%%%%%%%%%%%%%%%%%%%
\subsection{Convergence of the implied volatility of the Heston model to SVI}
In~\cite{GatheralSVI}, Gatheral proposed the so-called `Stochastic Volatility Inspired' (SVI) parameterisation 
of the implied volatility smile.
Using the closed-form representation of the rate function $\Lambda^*$ 
(Proposition~\ref{prop:RateFunction} and Equation~\eqref{eq:Ux}) in the Heston model $a=0$,
Gatheral and Jacquier~\cite{GJ10} proved that this parameterisation
was indeed the true limit of the Heston implied volatility smile as the maturity tends to infinity 
for strikes of the form $S_0\E^{xt}$, 
whenever both conditions $\chi(0)<0$ and $\chi(1)<0$ are met.
Corollary~\ref{cor:SVIHeston} below extends their result without these conditions.
Its proof follows from straightforward manipulations of Formula~\eqref{eq:sigmainf} and we therefore omit it.
Recall that the SVI parameterisation for the implied variance reads
\begin{equation}\label{eq:SVI}
\sigma^2_\mathrm{SVI}\left(x\right)=\frac{\omega_1}{2}\,\left(1+\omega_2\rho x+\sqrt{\left(\omega_2
x+\rho\right)^2+1-\rho^2}\right),
\qquad\text{for all } x\in\mathbb{R},
\end{equation}
where $(\omega_1,\omega_2) \in\mathbb{R}^2$ and $\rho\in\left[-1,1\right]$.
Let us further define the mappings
\begin{equation}\label{eq:ChangeVar}
\omega_1 :=\frac{4b}{\alpha\left(1-\rho^2\right)}
\left(\sqrt{\left(2\beta+\rho\sqrt{\alpha}\right)^2+\alpha\left(1-\rho^2\right)}+\left(2\beta+\rho\sqrt{\alpha}\right)\right)
\qquad\text{and}\qquad
\omega_2 :=\frac{\sqrt{\alpha}}{b}.
\end{equation}

\begin{corollary}\label{cor:SVIHeston}
If $a=0$ and $b\ne 0$, the asymptotic implied volatility $\sigma_\infty$ in~\eqref{eq:sigmainf} simplifies as follows:
\begin{enumerate}[(i)]
\item for all $x\in\Lambda'\left(\mathcal{D}^o_\Lambda\right)$, under the mappings~\eqref{eq:ChangeVar}, 
$\sigma_\infty^2\left(x\right)=\sigma^2_{\mathrm{SVI}}\left(x\right)$;
\item if $\chi(1)>0$, define $\lambda_1:=\sqrt{2b\chi(1)}$, then
$$\sigma^2_\infty(x)=2x+\frac{4\lambda_1}{\alpha}\left(\lambda_1+\sqrt{\lambda_1^2+\alpha x}\right),
\qquad\text{for all }x>\Lambda'_-(1);$$
\item if $\chi(0)>0$, define $\lambda_0:=\sqrt{2b\chi(0)}$, then
$$\sigma^2_\infty(x)=-2x+\frac{4\lambda_0}{\alpha}\left(\lambda_0+\sqrt{-\left(\lambda_0^2+\alpha x\right)}\right),
\qquad\text{for all }x<\Lambda'_+(0);$$
\end{enumerate}
\end{corollary}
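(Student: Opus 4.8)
The plan is to extract everything from the closed form~\eqref{eq:sigmainf}. Since $a=0$ and $b\neq0$, Proposition~\ref{prop:RateFunction} gives $\Lambda^*$ explicitly, and the proof follows the three branches of~\eqref{eq:FormLambdaStar}. The one preliminary remark is that on each branch $\mathcal I$ is a constant in $\{-2,2\}$: it equals $+2$ on $\left(\Lambda'_+(0),\Lambda'_-(1)\right)$, $2\sgn(\chi(1))$ for $x>\Lambda'_-(1)$, and $2\sgn(\chi(0))$ for $x<\Lambda'_+(0)$. Hence~\eqref{eq:sigmainf} reads, branchwise, $\sigma^2_\infty(x)=4\Lambda^*(x)-2x\pm4\sqrt{\Lambda^*(x)\left(\Lambda^*(x)-x\right)}$, which is exactly the solution of the quadratic~\eqref{eq:BSvolEquation}; the sign $\mathcal I(x)/2$ selects the root along which, by the proof of Proposition~\ref{prop:ImpliedVol}, $\sigma_t$ converges.

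Parts (ii) and (iii) are then immediate substitutions. For (ii), on $\{x>\Lambda'_-(1)\}$ we have $\chi(1)>0$, so $\Lambda_-(1)=-2b\chi(1)/\alpha=-\lambda_1^2/\alpha$ by Remark~\ref{rem:Values01}, whence $\Lambda^*(x)=x+\lambda_1^2/\alpha$, $\Lambda^*(x)-x=\lambda_1^2/\alpha$ by~\eqref{eq:FormLambdaStar}, and $\mathcal I(x)=2$; plugging into~\eqref{eq:sigmainf} and using $\sqrt{(x+\lambda_1^2/\alpha)\,\lambda_1^2/\alpha}=\lambda_1\sqrt{\lambda_1^2+\alpha x}/\alpha$ yields $2x+\tfrac{4\lambda_1}{\alpha}\bigl(\lambda_1+\sqrt{\lambda_1^2+\alpha x}\bigr)$. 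Case (iii) is symmetric: for $x<\Lambda'_+(0)$, $\chi(0)>0$ gives $\Lambda^*(x)=-\Lambda_+(0)=2b\chi(0)/\alpha=\lambda_0^2/\alpha$ and $\mathcal I(x)=2$, so~\eqref{eq:sigmainf} collapses to $-2x+\tfrac{4\lambda_0}{\alpha}\bigl(\lambda_0+\sqrt{\lambda_0^2-\alpha x}\bigr)$; the radicand $\Lambda^*(x)-x=\lambda_0^2/\alpha-x$ is non-negative because it equals $\widetilde\Lambda^*(x)$. It remains to check, directly from Remark~\ref{rem:Values01} (using $\chi(1)-\chi(0)=\rho\sqrt\alpha$, and the identity $(\sqrt\alpha-2|\rho\beta|)^2+4\beta^2(1-\rho^2)\ge0$), that $\Lambda'_-(1)\ge-\lambda_1^2/\alpha$ and $\Lambda'_+(0)\le\lambda_0^2/\alpha$, so that the square roots are real on the stated half-lines.

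The substance is part (i), on $x\in\Lambda'(\mathcal D^o_\Lambda)$, where $\Lambda^*(x)=xu_x-\Lambda(u_x)=xu_x+\tfrac b\alpha\bigl(\chi(u_x)+\gamma(u_x)\bigr)$ with $u_x$ as in~\eqref{eq:Ux}. Writing $q(x):=\sqrt{p(x)^2+b^2(1-\rho^2)}$, one substitutes~\eqref{eq:Ux} and simplifies (equivalently, integrates $(\Lambda^*)'(x)=u_x$, using $q'(x)=\sqrt\alpha\,p(x)/q(x)$) to get $\Lambda^*(x)=\tfrac{(2\rho\beta+\sqrt\alpha)\,x}{2(1-\rho^2)\sqrt\alpha}+\tfrac{\xi\,q(x)}{2(1-\rho^2)\alpha}+C_0$, so $\Lambda^*$—and hence $4\Lambda^*(x)-2x$—is affine in $x$ plus a constant times $q(x)=b\sqrt{(\omega_2 x+\rho)^2+1-\rho^2}$ under~\eqref{eq:ChangeVar}. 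Next one checks that $\Lambda^*(x)\bigl(\Lambda^*(x)-x\bigr)$ is a perfect square $\bigl(Ex+F+Gq(x)\bigr)^2$—the cross terms in $q(x)$ match, and since $q(x)^2=\alpha x^2+2b\rho\sqrt\alpha\,x+b^2$ is a polynomial, the $q$-free remainder is a quadratic polynomial that closes up—with the linear-plus-$q$ factor $Ex+F+Gq(x)$ vanishing precisely where $\Lambda^*(x)\bigl(\Lambda^*(x)-x\bigr)=0$ inside $\Lambda'(\mathcal D^o_\Lambda)$, that is at $x=\Lambda'(0)$ and $x=\Lambda'(1)$. Therefore $2\mathcal I(x)\sqrt{\Lambda^*(x)(\Lambda^*(x)-x)}=2\mathcal I(x)\,\sgn\bigl(Ex+F+Gq(x)\bigr)\bigl(Ex+F+Gq(x)\bigr)$, and $\mathcal I(x)/2$ and $\sgn\bigl(Ex+F+Gq(x)\bigr)$ are seen to coincide on every subinterval of $\Lambda'(\mathcal D^o_\Lambda)$—this is the mechanism by which the branch-switching of the quadratic root is absorbed into the single, globally signed radical of~\eqref{eq:SVI}. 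Collecting the affine and $q(x)$ terms and pulling out $\omega_1/2$ then reproduces~\eqref{eq:SVI} with precisely the parameters~\eqref{eq:ChangeVar}.

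I expect the algebra in part (i)—verifying that $\Lambda^*(x)(\Lambda^*(x)-x)$ is a perfect square and that the emerging constants are exactly $\omega_1,\omega_2$ of~\eqref{eq:ChangeVar}, with nothing left over—to be the main (though elementary and lengthy) obstacle, which is why the statement defers it. A final technical point is the boundary $|\rho|=1$: there $u_x$ is read as the limit~\eqref{eq:upmrho1} and $\omega_1$ in~\eqref{eq:ChangeVar} as $\lim_{\rho^2\uparrow1}$ of its expression (which is finite after rationalising), and one checks that the identity $\sigma^2_\infty=\sigma^2_{\mathrm{SVI}}$ of~(i) survives this limit. Accordingly I would write (ii)--(iii) out in full and indicate (i) by recording the explicit $\Lambda^*(x)$, $4\Lambda^*(x)-2x$ and $\Lambda^*(x)(\Lambda^*(x)-x)$, leaving the term-by-term matching with~\eqref{eq:SVI} to the reader, as the statement itself does.
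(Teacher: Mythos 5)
Your strategy is exactly the intended one: substitute, branch by branch, the three cases of~\eqref{eq:FormLambdaStar} (with $a=0$) into~\eqref{eq:sigmainf}, after noting that $\mathcal I$ is constant equal to $2$ on each branch in question. The paper omits the proof, calling it ``straightforward manipulations'' and citing~\cite{GJ10} where the algebra of part~(i) was carried out under $\chi(0)<0$, $\chi(1)<0$; your sketch of~(i)---substitute~\eqref{eq:Ux} into $\Lambda^*(x)=xu_x+\tfrac b\alpha(\chi(u_x)+\gamma(u_x))$, observe that both $\Lambda^*$ and $\Lambda^*(\Lambda^*-\cdot)$ are ``affine plus a multiple of $q$'' and a perfect square respectively, and match with~\eqref{eq:SVI} under~\eqref{eq:ChangeVar}---is the right route and is the same computation as in~\cite{GJ10}, now valid on all of $\Lambda'(\mathcal D^o_\Lambda)$.

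Your computations for~(ii) and~(iii) are correct, and your verification that $\Lambda'_-(1)\geq-\lambda_1^2/\alpha$ and $\Lambda'_+(0)\leq\lambda_0^2/\alpha$ (so that the radicands are nonnegative on the asserted half-lines; both reduce to $(\sqrt\alpha+2\rho\beta)^2+4\beta^2(1-\rho^2)\geq0$) is a welcome addition. But you have silently derived in~(iii) a formula \emph{different} from the one you were asked to prove: you get
\[
\sigma^2_\infty(x)=-2x+\frac{4\lambda_0}{\alpha}\Bigl(\lambda_0+\sqrt{\lambda_0^2-\alpha x}\Bigr),
\]
whereas the statement has $\sqrt{-(\lambda_0^2+\alpha x)}$ under the root. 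Your derivation is the correct one: $\Lambda^*(x)=\lambda_0^2/\alpha$ and $\Lambda^*(x)-x=\lambda_0^2/\alpha-x$ give $\Lambda^*(\Lambda^*-x)=\lambda_0^2(\lambda_0^2-\alpha x)/\alpha^2$, and the stated radicand $-(\lambda_0^2+\alpha x)$ would in fact be negative on part of $(-\infty,\Lambda'_+(0))$ for generic parameters (e.g.\ $\rho=0$, $\alpha<4\beta^2$, where $\Lambda'_+(0)=-b/(2\beta)>-\lambda_0^2/\alpha=-2b\beta/\alpha$), so the corollary as printed cannot be right; the intended radicand is $\lambda_0^2-\alpha x$, the mirror of~(ii). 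You should state explicitly that the formula you obtain corrects a sign error in the statement, rather than leaving the reader to spot the mismatch. (As a minor aside, the ancillary formula for $\Lambda'_-(1)$ in the remark immediately following the corollary also has a sign slip---it should read $-\tfrac{b}{2\sqrt\alpha}\bigl(4\rho-\tfrac{\sqrt\alpha}{\beta+\rho\sqrt\alpha}\bigr)$, consistent with Remark~\ref{rem:Values01}, which is what you in effect use.)
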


\begin{remark}\text{}
\begin{enumerate}[(a)]
\item The case $b=0$ was treated in Remark~\ref{rem:DegenerateCase}.
\item The interval $\Lambda'\left(\mathcal{D}^o_\Lambda\right)$ corresponds to the subset of the real line where the function
$\Lambda^*$ is strictly convex.
When $\chi(1)<0$ and $\chi(0)<0$ (as in~\cite{GJ10}), this interval corresponds to the whole real line.
\item When $a=0$, the quantities in Remark~\ref{rem:Values01} simplify to
\begin{align*}
\Lambda_+(0) & = -\frac{2b\beta}{\alpha},\quad
& \Lambda_+'(0) & = -\frac{b}{2\sqrt{\alpha}}\left(4\rho+\frac{\sqrt{\alpha}}{\beta}\right),
\quad & \text{when }\chi(0)>0,\\
\Lambda_-(1) & = -\frac{2b}{\alpha}\left(\beta+\rho\sqrt{\alpha}\right),\quad
& \Lambda_-'(1) & = -\frac{b}{2\sqrt{\alpha}}\left(4\rho+\frac{\sqrt{\alpha}}{\beta+\rho\sqrt{\alpha}}\right),
\quad & \text{when }\chi(1)>0.
\end{align*}
\end{enumerate}
\end{remark}

\begin{remark}
Note that $\omega_1$ in~\eqref{eq:ChangeVar} is a continuous function of $\rho\in(-1,1)$ and has the following limits:
$$
\begin{array}{rlll}
\displaystyle 
\omega_1 & = -2b/\left(2\beta+\sqrt{\alpha}\right),
\quad & \text{if }2\beta+\rho\sqrt{\alpha}<0,
\quad & \text{when } \rho=1,\\
\omega_1 & = -2b/\left(2\beta-\sqrt{\alpha}\right),
\quad & \text{if }2\beta+\rho\sqrt{\alpha}<0,
\quad & \text{when } \rho=-1,
\end{array}
$$
It diverges to $\pm\infty$ in the other cases. 
In terms of the SVI implied volatility smile, whenever $\rho\in\{-1,1\}$, 
we can plug these limits when they exist into~\eqref{eq:SVI}, or simplify directly~\eqref{eq:sigmainf} using~\eqref{eq:upmrho1},
and we obtain
$$
\sigma^2_{\mathrm{SVI}}(x) = 
\left\{
\begin{array}{ll}
\displaystyle -2\frac{b+\rho x\sqrt{\alpha}}{2\beta+\rho\sqrt{\alpha}},
\quad & \text{if }b+\rho x\sqrt{\alpha}>0,\\
0,
\quad & \text{if }b+\rho x\sqrt{\alpha}\leq 0.
\end{array}
\right.
$$
When $\chi(0)<0$ and $\chi(1)<0$, this is consistent 
with the fact---see~\cite[Proposition 5]{Zeliade}---that for any maturity the implied volatility is decreasing 
(resp. increasing) whenever the correlation parameter $\rho$ is equal to $-1$ (resp. equal to $1$).
In the case $\rho=-1$, the proof of this statement in~\cite[Proposition 5]{Zeliade} is based on the following remark:
if $(X_t)_{t\geq 0}$ satisfies the SDE~\eqref{eq:DefAffine}, then It\^o's formula gives
$$X_t=X_0-\frac{1}{2}\int_{0}^{t}V_s \D s+\frac{bt+V_0}{\sqrt{\alpha}}-\frac{1}{\sqrt{\alpha}}\left(V_t-\beta\int_{0}^{t}V_s \D s\right),
\quad\text{for any }t\geq 0.$$
When $\beta\leq 0$, since the variance process $\left(V_t\right)_{t\geq 0}$ is not negative, it is clear that for any $t\geq 0$, the random variable $X_t$ is bounded above, and hence, the implied volatility is null above this level.
As soon as $\beta$ is strictly positive, this bound does not hold anymore and the implied volatility is not flat any more.
Note further than the condition $\chi(1)\geq 0$ implies the inequality $\chi(0)\geq 0$ when $\rho=-1$.
In the Heston model, this implies that only Case (i) in Corollary~\ref{cor:SVIHeston} applies, 
i.e. the SVI parameterisation holds on the whole real line.
The case $\rho=1$ is symmetric (under the Share measure) and we omit an analogous discussion.
\end{remark}

\section{Proof of Theorem~\ref{thm:LDPAffineThm}}\label{sec:ProofLDPThm}
We split the proof of the theorem according to the four cases arising in Corollary~\ref{cor:LimitLambda}.
In the case~(i)~(a), since the limiting cumulant generating function $\Lambda$ is differentiable and essentially smooth 
in the interior of its domain $\mathcal{D}_{\Lambda}$ and $0\in\mathcal{D}_{\Lambda}$ (Corollary~\ref{cor:LimitLambda}), 
then the theorem follows by a direct application of the G\"artner-Ellis theorem. 
This case was already proved when $a=0$ in~\cite{FJ09} and when $a\ne 0$---albeit in a more general framework---in~\cite{JKRM}.
In the case (i) (b), the effective domain $\mathcal{D}_\Lambda$ is 
$\left[u_-,1\right]$ with $u_-<0$, but the function $\Lambda$ is not steep at the right boundary, and hence the G\"artner-Ellis theorem
does not apply.  
Proposition~\ref{prop:NonSteep1} shows that a full LDP however still holds in this case.
The proof of this theorem relies on Lemma~\ref{lem:LambdaExpansion} and Lemma~\ref{lem:WeakConvergenceAt1}.
Lemma~\ref{lem:LambdaExpansion} concerns the behaviour of the function 
$\Lambda_t$ in~\eqref{eq:LambdaT} around $1$ as $t$ tends to infinity
and Lemma~\ref{lem:WeakConvergenceAt1} is a weak convergence result for the process $\left(X_t\right)_{t>0}$.
For sake of clarity, we postpone these lemmas and their proofs to Appendix~\ref{app:TechLemma}.
Proposition~\ref{prop:NonSteep0} deals with the case where $\mathcal{D}_\Lambda=[0,u_+]$ with $u_+>1$ 
and Proposition~\ref{prop:NonSteep0} states a LDP when $\mathcal{D}_\Lambda=[0,1]$.
By a shifting  argument, Theorem~\ref{thm:LDPAffineThm} clearly holds under $\widetilde{\mathbb{P}}$
as soon as a large deviations principle is satisfied in all cases under~$\mathbb{P}$.
We therefore state the three propositions below under the measure $\mathbb{P}$.

\begin{proposition}\label{prop:NonSteep1}
In case (i)(b), the family $\left(X_t/t\right)_{t>0}$ satisfies a LDP under $\mathbb{P}$ with rate function $\Lambda^*$.
\end{proposition}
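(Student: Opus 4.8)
The plan is to establish the large deviations upper and lower bounds separately, since the G\"artner--Ellis theorem as stated in Theorem~\ref{thm:GartnerEllis} does not deliver a full LDP here: $\Lambda$ is steep at $u_-$ but not at the right boundary $1$ of $\mathcal D_\Lambda=[u_-,1]$. The upper bound is automatic: Theorem~\ref{thm:GartnerEllis} already gives $\limsup_{t\to\infty}t^{-1}\log\mathbb P[X_t/t\in F]\leq-\inf\{\Lambda^*(x):x\in F\}$ for every closed set $F$, with no steepness assumption. So the work is entirely in the lower bound, and there Theorem~\ref{thm:GartnerEllis} gives it only on the set $\mathcal E$ of exposed points with exposing slope in $\mathcal D_\Lambda^o$. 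By Remark~\ref{rem:LambdaStar}(ii) and Proposition~\ref{prop:RateFunction}, $\Lambda'(\mathcal D_\Lambda^o)=(-\infty,\Lambda_-'(1))$, $\Lambda^*$ is strictly convex on this interval and affine (slope $1$) on $[\Lambda_-'(1),\infty)$; hence $\mathcal E\supseteq(-\infty,\Lambda_-'(1))$ and what is missing is the lower bound for open sets $G$ that meet only $[\Lambda_-'(1),\infty)$, i.e. roughly for $x\geq\Lambda_-'(1)$.

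To cover the missing region I would argue by a change of measure to the Share measure $\widetilde{\mathbb P}$. Under $\widetilde{\mathbb P}$ the relevant cumulant generating function is $\widetilde\Lambda(u)=\Lambda(u+1)$ with effective domain $[u_--1,0]$, so now the bad boundary $u=1$ of $\Lambda$ has become the boundary point $0$ of $\widetilde{\mathcal D}_\Lambda$ — and crucially $0$ is the right endpoint of $\widetilde{\mathcal D}_\Lambda$ rather than an interior point, so $\widetilde\Lambda$ fails assumption~\eqref{eq:0_in_Domain_Assumption}. This is where Lemma~\ref{lem:WeakConvergenceAt1} enters: it provides the weak convergence of $(X_t)_{t>0}$ (equivalently, tightness and identification of a limit under an appropriately tilted measure centred at the boundary), which plays the role that the missing interior-of-domain assumption would otherwise supply. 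Concretely, I would exponentially tilt by $\E^{X_t}$, use Lemma~\ref{lem:LambdaExpansion} to control $\Lambda_t(u)$ and its behaviour as $u\nearrow1$, $t\to\infty$ (in particular to show $t^{-1}\Lambda_t(t\cdot)$ and its derivative behave as predicted by $\Lambda$, with the right rate of the correction near the boundary), and then Lemma~\ref{lem:WeakConvergenceAt1} to show that under the tilted law $X_t$ (not rescaled) converges weakly, which forces $\mathbb P[X_t/t\geq x]$ to decay at exactly the rate $\E^{-t\Lambda^*(x)}=\E^{-t(x-\Lambda_-(1))}$ for $x\geq\Lambda_-'(1)$. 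Combined with the Gärtner--Ellis lower bound on $(-\infty,\Lambda_-'(1))$ this yields the lower bound $\liminf_{t\to\infty}t^{-1}\log\mathbb P[X_t/t\in G]\geq-\inf\{\Lambda^*(x):x\in G\}$ for all open $G$.

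The main obstacle is precisely this boundary analysis: showing that the non-steepness at $u=1$ does not cause the probabilities $\mathbb P[X_t/t\approx x]$ for $x$ near or above $\Lambda_-'(1)$ to decay \emph{faster} than $\Lambda^*$ predicts — i.e. that the linear piece of $\Lambda^*$ is genuinely achieved and not merely an upper-bound artefact. This is exactly what the tilting-plus-weak-convergence argument buys us, and it is the reason the two technical lemmas in Appendix~\ref{app:TechLemma} are needed; the remaining pieces (the upper bound, and the lower bound on exposed points) are immediate from Theorem~\ref{thm:GartnerEllis}. A final bookkeeping step checks that the rate function obtained this way is exactly the $\Lambda^*$ of Proposition~\ref{prop:RateFunction}, in particular that on $[\Lambda_-'(1),\infty)$ it equals $x-\Lambda_-(1)$, so that the LDP statement matches the claimed good rate function.
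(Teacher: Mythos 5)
Your high-level decomposition is right: the upper bound and the lower bound on exposed points are free from Theorem~\ref{thm:GartnerEllis}, and the entire difficulty lies in the lower bound for open sets meeting $[\Lambda_-'(1),\infty)$. You also correctly identify Lemmas~\ref{lem:LambdaExpansion} and~\ref{lem:WeakConvergenceAt1} as the tools. But the concrete mechanism you propose -- ``tilt by $\E^{X_t}$'' -- is not what the paper does, and it would not work as stated. Tilting by $\E^{X_t}$ produces the Share measure $\widetilde{\mathbb{P}}$, under which the limiting cumulant generating function is $\widetilde{\Lambda}(u)=\Lambda(u+1)$ with domain $[u_--1,0]$: the origin is now on the \emph{boundary}, not in the interior, so assumption~\eqref{eq:0_in_Domain_Assumption} fails again and you have simply relocated the obstruction, not removed it. Moreover, Lemma~\ref{lem:WeakConvergenceAt1} is not a statement about $X_t$ under $\widetilde{\mathbb{P}}$; it concerns $\pi_t^{(1)}X_t/t$ under the \emph{time-dependent} tilted measure $\overline{\mathbb{P}}_t$ of~\eqref{eq:ProbaPBar}.

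The key idea the paper uses, and which your proposal misses, is the time-dependent change of measure. One defines $\overline{\Lambda}_t(u):=\Lambda(u)t-\tfrac{2b}{\alpha}\log(1-u)$ (motivated by the expansion in Lemma~\ref{lem:LambdaExpansion}), which is convex, smooth on $(0,1)$, and \emph{steep at $1$}, hence admits a unique $u_t\in(0,1)$ with $\overline{\Lambda}_t'(u_t)=0$ and $u_t\nearrow1$. Tilting by $\exp(u_tzt-\Lambda_t(u_t))$ -- not by $\exp(zt)$ -- gives a measure under which $X_t/t$ concentrates near $\Lambda_-'(1)$; Lemma~\ref{lem:WeakConvergenceAt1} then identifies the limit as a translated Gamma distribution with full support on $[\Lambda_-'(1),\infty)$, so the residual $t^{-1}\log\overline{\mathbb{P}}_t(X_t/t\in(x,x+\delta))$ vanishes, yielding the lower bound. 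Your proposal also does not anticipate that the argument must split according to the sign of $\Lambda_-'(1)$: the tilted weak limit only lives at the scale $X_t/t$ when $\Lambda_-'(1)<0$; when $\Lambda_-'(1)=0$ the scale is $X_t/\sqrt t$, and the paper needs an extra step (exponential equivalence with $X_t$ perturbed by an independent L\'evy process with superexponential tails, invoking~\cite[Theorem 4.2.13]{DemboZeitouni} and~\cite[Theorem~26.1]{Sato}) to recover the LDP for $X_t/t$. The case $\Lambda_-'(1)>0$, where $\Lambda$ has an interior minimiser, is handled by a fixed tilt and goes through as in standard G\"artner--Ellis. These sub-cases and the time-dependent tilt are the actual content of the proof, and neither is present in your proposal.
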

\begin{proposition}\label{prop:NonSteep0}
In case (ii)(a), the family $\left(X_t/t\right)_{t>0}$ satisfies a LDP under $\mathbb{P}$ with rate function $\Lambda^*$.
\end{proposition}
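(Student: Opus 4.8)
The plan is to deduce case~(ii)(a) from case~(i)(b), already established in Proposition~\ref{prop:NonSteep1}, using the duality between these two regimes under the Share measure. First I would check, by Girsanov's theorem applied to the density $\D\widetilde{\mathbb{P}}/\D\mathbb{P}\big|_{\mathcal{F}_t}=\E^{X_t}$, that the reflected process $(-X_t)_{t\geq0}$, considered under $\widetilde{\mathbb{P}}$, is again a continuous affine stochastic volatility model of the form~\eqref{eq:DefAffine} with the parameters $a,b,\alpha$ unchanged, $\widetilde\beta:=\beta+\rho\sqrt{\alpha}$, $\widetilde\rho:=-\rho$ (the sign flip of the coefficient of $W^2$ being absorbed into a reflected Brownian motion, which remains independent of $W^1$) and initial data $v$ and $-x$. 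For this model $\widetilde\chi(0)=\beta+\rho\sqrt{\alpha}=\chi(1)\leq0$ and $\widetilde\chi(1)=\beta=\chi(0)>0$, so it lies precisely in case~(i)(b) and Proposition~\ref{prop:NonSteep1} applies to it.

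This furnishes a full LDP under $\widetilde{\mathbb{P}}$ for the family $(-X_t/t)_{t>0}$ with rate function the Fenchel--Legendre transform of $u\mapsto\Lambda(1-u)$, which by a one-line change of variables in~\eqref{eq:DefFenchelLegendreTransf} equals $y\mapsto y+\Lambda^*(-y)$ (it is the reflection $y\mapsto\widetilde\Lambda^*(-y)$ of the rate function recorded in~\eqref{eq:LambdaStarInPTilde}). It then remains to transfer this to $(X_t/t)_{t>0}$ under $\mathbb{P}$ through the tilted representation $\mathbb{P}[X_t/t\in B]=\widetilde{\mathbb{E}}[\E^{tY_t}\ind_{\{Y_t\in -B\}}]$ with $Y_t:=-X_t/t$. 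The lower bound is immediate: restricting the expectation to a small ball $(y_0-\delta,y_0+\delta)\subset-B$, bounding $\E^{tY_t}\geq\E^{t(y_0-\delta)}$ there and using the $\widetilde{\mathbb{P}}$--LDP lower bound yields $\liminf_{t\to\infty}t^{-1}\log\mathbb{P}[X_t/t\in B]\geq-\delta-\Lambda^*(-y_0)$, and optimising over $y_0\in-B$ and letting $\delta\downarrow0$ gives $-\inf_{x\in B}\Lambda^*(x)$. Over a bounded closed set the upper bound follows by covering it with finitely many short intervals, bounding $\E^{tY_t}$ from above on each, applying the $\widetilde{\mathbb{P}}$--LDP upper bound and refining the mesh, using the lower semicontinuity of $y\mapsto y+\Lambda^*(-y)$.

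The main obstacle is the upper bound over unbounded closed sets, equivalently controlling $\widetilde{\mathbb{E}}[\E^{tY_t}\ind_{\{Y_t>M\}}]$: the exponential density $\E^{tY_t}$ is too large to be integrated against the crude large deviations estimate for $\widetilde{\mathbb{P}}[Y_t>y]$, because $u\mapsto\Lambda(1-u)$ fails to be steep at the boundary point of its effective domain corresponding to $\{Y_t=+\infty\}$, so that Varadhan's integrability hypothesis is violated and the term-by-term bound over $\{Y_t\in(k,k+1)\}$ diverges when summed. Overcoming this requires the finer large-time expansion of the pre-limit cumulant generating function near the critical exponent---the analogue, for the auxiliary model above, of Lemma~\ref{lem:LambdaExpansion}, together with the weak-convergence statement of Lemma~\ref{lem:WeakConvergenceAt1}---to obtain a uniform, sub-exponentially sharp tail estimate for $\widetilde{\mathbb{P}}[Y_t>y]$; one then concludes that the tail contribution is at most $\E^{t\Lambda_+(0)+o(t)}$, which matches $-\inf_{(-\infty,-M]}\Lambda^*=\Lambda_+(0)$. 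Alternatively one can bypass the change of measure altogether and repeat the proof of Proposition~\ref{prop:NonSteep1} verbatim with the roles of the endpoints $0$ and $1$ (and of $u_-$ and $u_+$) interchanged, re-deriving Lemmas~\ref{lem:LambdaExpansion}--\ref{lem:WeakConvergenceAt1} at the left endpoint of the effective domain; I would choose whichever of the two is shorter to write out in full.
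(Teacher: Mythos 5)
Your reflection argument is a genuinely nice observation: under $\widetilde{\mathbb{P}}$ the process $-X$ is again a model of the form~\eqref{eq:DefAffine} with $\widetilde\beta=\chi(1)$, $\widetilde\rho=-\rho$, hence $\widetilde\chi(0)=\chi(1)\le0$, $\widetilde\chi(1)=\chi(0)>0$, so it falls exactly into case~(i)(b). Proposition~\ref{prop:NonSteep1} then gives a full LDP for $(-X_t/t)$ under $\widetilde{\mathbb{P}}$, and your transfer of the \emph{lower} bound and the \emph{compact-set upper} bound through the identity $\mathbb{P}[X_t/t\in B]=\widetilde{\mathbb{E}}[\E^{tY_t}\ind_{\{Y_t\in-B\}}]$ is clean. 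That part is correct and more economical than the paper's direct proof, which re-runs a fixed tilting for the exposed points $x>\Lambda'_+(0)$ and a time-dependent tilting for $x<\Lambda'_+(0)$ with a new Lemma~\ref{lem:WeakConvergenceAt0} at the left endpoint.

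The gap is in the upper bound over unbounded closed sets, which you rightly flag as the obstacle but do not actually close. Two remarks. First, the finer expansion of $\Lambda_t$ (Lemma~\ref{lem:LambdaExpansion}) and the weak convergence (Lemma~\ref{lem:WeakConvergenceAt1}) are pointwise statements at a single tilting scale and do not, by themselves, furnish a tail estimate for $\widetilde{\mathbb{P}}[Y_t>y]$ that is \emph{uniform in $y$} and sharp enough to make $\sum_{k\ge M}\E^{(k+1)t}\widetilde{\mathbb{P}}[Y_t>k]$ summable with the right constant; the crude $\widetilde{\mathbb{P}}$-LDP estimate gives $\sum_k\E^{t(1+\Lambda_+(0))}=\infty$, and the needed improvement must come from a genuine tail argument, not from these lemmas. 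Second, your proposed shortcut of ``repeating the proof of Proposition~\ref{prop:NonSteep1} verbatim with the roles of $0$ and $1$ interchanged'' fails for exactly the reason the paper makes explicit: the closed-set upper bound in Proposition~\ref{prop:NonSteep1} uses $0\in\mathcal{D}^o_\Lambda$, which holds in case~(i)(b) ($u_-<0<1$) but is \emph{precisely what is violated} in case~(ii)(a) ($\mathcal{D}_\Lambda=[0,u_+]$). The paper resolves this with two ingredients you do not mention: Theorem~4.5.3 of Dembo--Zeitouni for the compact upper bound when the origin is only on the boundary, and either the O'Brien--Sun result or a self-contained argument exploiting that $\Lambda^*$ is \emph{constant} on $(-\infty,\Lambda'_+(0))$, so the compact bound on $[y,x]$ does not degrade as $y\to-\infty$. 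Without one of these, neither branch of your proposal yields the upper bound.
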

\begin{proposition}\label{prop:LDPiib}
In case (ii)(b), the family $\left(X_t/t\right)_{t>0}$ satisfies a LDP under $\mathbb{P}$ with rate function $\Lambda^*$.
\end{proposition}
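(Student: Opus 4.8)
The plan is to combine the two boundary analyses already used for cases~(i)(b) and~(ii)(a). In case~(ii)(b) the effective domain is $\mathcal D_\Lambda=[0,1]$, and by Remark~\ref{rem:LDPLambda}~(IV) we have $u_-<0<1<u_+$ and $\Lambda$ is steep at \emph{neither} endpoint; moreover $0\notin\mathcal D_\Lambda^o$, so the G\"artner--Ellis theorem does not apply. By~\eqref{eq:FormLambdaStar} and~\eqref{eq:FormOfDomain} the rate function $\Lambda^*$ is affine with slope $0$ on $\left(-\infty,\Lambda'_+(0)\right]$ (equal to $-\Lambda_+(0)$), affine with slope $1$ on $\left[\Lambda'_-(1),\infty\right)$, and strictly convex on the bounded interval $\Lambda'\left(\mathcal D_\Lambda^o\right)=\left(\Lambda'_+(0),\Lambda'_-(1)\right)$ between them. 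We verify the two inequalities in~\eqref{eq:DefLDP} on these three regions; the point is that the work near $u=0$ and near $u=1$ is localised and does not interact, so the proof amounts to running the case~(ii)(a) argument at the left endpoint, the case~(i)(b) argument at the right endpoint, and the G\"artner--Ellis lower-bound tilt on the central exposed region.

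\emph{Upper bound.} Let $F\subset\mathbb R$ be closed. Since $[0,1]\subset\mathcal D_t$ for all $t\geq0$ (Proposition~\ref{prop:DomainHeston}) and $t^{-1}\Lambda_t(u)\to\Lambda(u)$ for $u\in(0,1)$ (Proposition~\ref{prop:LaplaceAffine} and the computation in Corollary~\ref{cor:LimitLambda}), Chebyshev's exponential inequality $\mathbb P\left[X_t/t\geq x\right]\leq\exp\left(-uxt+\Lambda_t(u)\right)$ with $u\in(0,1)$, optimised over $u$ and combined with the limits $\Lambda'_\pm$, yields $\limsup_{t\to\infty}t^{-1}\log\mathbb P\left[X_t/t\geq x\right]\leq-\Lambda^*(x)$ for every $x$; since $\Lambda^*$ is non-decreasing this controls $F\cap[\xi,\infty)$ for any $\xi$. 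The left piece $F\cap\left(-\infty,\xi\right]$ is handled as in the proof of Proposition~\ref{prop:NonSteep0}, whose rate function is likewise flat at $-\Lambda_+(0)$ on $\left(-\infty,\Lambda'_+(0)\right]$: one needs $\limsup_{t\to\infty}t^{-1}\log\mathbb P\left[X_t/t\leq x\right]\leq\Lambda_+(0)$ for $x\leq\Lambda'_+(0)$, which does \emph{not} follow from a naive exponential change of measure (that bound is not sharp, cf.\ the obstacle below) but from the precise behaviour of $\Lambda_t$ near $0$. Adding the two estimates gives $\limsup_{t\to\infty}t^{-1}\log\mathbb P\left[X_t/t\in F\right]\leq-\inf_{x\in F}\Lambda^*(x)$.

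\emph{Lower bound.} Let $G\subset\mathbb R$ be open and split it as $G_-\cup G_0\cup G_+$ with $G_-:=G\cap\left(-\infty,\Lambda'_+(0)\right)$, $G_0:=G\cap\left(\Lambda'_+(0),\Lambda'_-(1)\right)$, $G_+:=G\cap\left(\Lambda'_-(1),\infty\right)$; the two endpoints do not affect $\inf_G\Lambda^*$. On $G_0$ every point $y$ is exposed with dual $u_y\in(0,1)\subset\mathcal D_\Lambda^o$, so tilting by $\exp\left(u_yX_t\right)/\mathbb E\exp\left(u_yX_t\right)$ and using the weak law of large numbers for $X_t/t$ under the tilted law (as in the G\"artner--Ellis lower bound; here the strict convexity and smoothness of $\Lambda$ on $(0,1)$ from Corollary~\ref{cor:LimitLambda} are all that is required) gives $\liminf_{t\to\infty}t^{-1}\log\mathbb P\left[X_t/t\in G_0\right]\geq-\inf_{G_0}\Lambda^*$. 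On $G_+$, for $x_0\geq\Lambda'_-(1)$, we argue as in Proposition~\ref{prop:NonSteep1}: passing to the Share measure $\widetilde{\mathbb P}$, under which the variance process has mean-reversion coefficient $\chi(1)>0$, Lemma~\ref{lem:WeakConvergenceAt1} supplies the weak limit of a rescaled $X_t$ and Lemma~\ref{lem:LambdaExpansion} the behaviour of $\Lambda_t$ near $1$, and together they give $\liminf_{t\to\infty}t^{-1}\log\mathbb P\left[X_t/t>x_0\right]\geq\Lambda_-(1)-x_0=-\Lambda^*(x_0)$. On $G_-$, for $x_0\leq\Lambda'_+(0)$, the same mechanism at the left endpoint---now directly under $\mathbb P$ since $\chi(0)=\beta>0$---worked out as in Proposition~\ref{prop:NonSteep0} with the corresponding estimates near $u=0$, yields $\liminf_{t\to\infty}t^{-1}\log\mathbb P\left[X_t/t<x_0\right]\geq\Lambda_+(0)=-\Lambda^*(x_0)$. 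Taking the maximum over the three pieces completes the lower bound, and hence the LDP.

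\emph{Main obstacle.} The difficulty is the simultaneous degeneracy at both ends of $\mathcal D_\Lambda$. The proof of Proposition~\ref{prop:DomainHeston} shows that the pre-limit domain satisfies $\underline u(t)\uparrow0$ and $\overline u(t)\downarrow1$, in fact exponentially fast; hence for $u=u(t)$ approaching $0$ (or $1$) from inside $\mathcal D_t$ the elementary bound only certifies a decay rate of $0$, whereas the true rate is the strictly positive constant $-\Lambda_+(0)=2b\chi(0)/\alpha$ for the left tail (respectively $-\Lambda_-(1)+x_0$ for the right tail). Recovering this extra cost is exactly where the second-order expansion of $\Lambda_t$ about the boundary (Lemma~\ref{lem:LambdaExpansion} and its $u=0$ analogue) and the weak-convergence statement (Lemma~\ref{lem:WeakConvergenceAt1} and its $u=0$ analogue) enter, by identifying the non-degenerate limiting laws responsible for the flat pieces of $\Lambda^*$. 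The only genuinely new point compared with cases~(i)(b) and~(ii)(a), where one end is steep, is the bookkeeping needed to check that the two boundary contributions and the central exposed estimate reassemble into precisely the rate function $\Lambda^*$ of Proposition~\ref{prop:RateFunction}; this holds because each of the three estimates is sharp on its own region and $\Lambda^*$ is the concatenation of the two affine tails with the strictly convex middle.
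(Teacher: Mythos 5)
Your proposal is correct and follows essentially the same route as the paper: the paper's own proof is a three-line appeal to Propositions~\ref{prop:NonSteep1} and~\ref{prop:NonSteep0} for the two flat tails and to the exposed-point G\"artner--Ellis lower bound on the central interval, which is exactly the decomposition you carry out. You fill in more detail (and correctly flag that with $0\notin\mathcal D_\Lambda^o$ the upper bound must be obtained by the O'Brien--Sun/interval argument from Proposition~\ref{prop:NonSteep0} rather than the standard ``origin in the interior'' argument); the only small deviation is that you invoke the Share measure to treat the right boundary, whereas the paper's Proposition~\ref{prop:NonSteep1} works directly under $\mathbb{P}$ with the time-dependent tilt $\overline{\mathbb{P}}_t$ of~\eqref{eq:ProbaPBar}, though the two mechanisms are equivalent here.
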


\begin{remark}\text{}
\begin{enumerate}[(i)]
\item In the case $\chi(1)=0$, the domain $\mathcal{D}_\Lambda$ of the limiting cumulant generating function $\Lambda$ is $[u_-,1]$ 
and the function is steep at the right boundary $u_+=1$ and therefore the G\"artner-Ellis theorem holds.
However, under the Share measure defined on Page~\pageref{def:ShareMeasure}, the origin is in $\mathcal{D}_\Lambda$  but not in its interior.
\item In view of Remark~\ref{rem:SimpleFacts} (III) and Corollary~\ref{cor:LimitLambda}, 
the origin is not in the interior of $\mathcal{D}_\Lambda$ when $\chi\left(0\right)\geq 0$.
\end{enumerate}
\end{remark}

\begin{comment}
\begin{remark}
\text{}
\begin{itemize}
\item[(i)] Note that the theorem implies
\begin{align*}
\lim_{t\to\infty}t^{-1}\log\mathbb{P}\left(X_t/t\leq x\right) & = 
-\Lambda^*\left(\Lambda'_+\left(0\right)\right),\\
\lim_{t\to\infty}t^{-1}\log\mathbb{P}\left(X_t/t\geq x\right) & = 
-\Lambda^*\left(x\right)\ind_{\left\{x\geq\Lambda'_+\left(0\right)\right\}}
-\Lambda^*\left(\Lambda'_+\left(0\right)\right)\ind_{\left\{x<\Lambda'_+\left(0\right)\right\}}
\end{align*}
\item[(iii)] In the proof below, we have a full LDP in the set of exposed points $\left(\Lambda'_+(0),\infty\right)$.
\end{itemize}
\end{remark}
\end{comment}
\begin{notation}
For any $t>0$, we shall denote by $\mathbb{P}_t$ the law of the random variable $X_t$.
\end{notation}

\begin{proof}[Proof of Proposition~\ref{prop:NonSteep1}]
In the case $\mathcal{D}_\Lambda=\left[u_-,1\right]$ with $u_-<0$, 
the limiting cumulant generating function $\Lambda$ is not steep at the right boundary $1$ any more.
%The proof of the theorem follows similar steps as the standard G\"artner-Ellis theorem~\cite[Section 2.3]{DemboZeitouni}, 
%but some refinements are needed. 
The upper bound holds for compact sets in $\mathbb{R}$ by Chebychev inequality, 
and its extension to closed sets is a consequence of the origin being inside the interior of the domain 
of the limiting log Laplace transform $\Lambda$.
These arguments are the same as in the proof of the G\"artner-Ellis theorem~\cite[Section 2.3]{DemboZeitouni}.
\begin{comment}
We first prove the upper bound 
$\limsup_{t\to\infty}t^{-1}\log\mathbb{P}\left(X_t/t\in B\right)
  \leq - \inf_{x\in B}\Lambda^*\left(x\right)$ for closed sets $B\subset\mathbb{R}$.
For any real numbers $x$ and $y$ such that $y>x$, 
Chebycheff inequality gives an upper bound on the compact interval $\left[x,y\right]$, namely
\begin{equation}\label{eq:UpperBoundClosed}
\limsup_{t\to\infty}t^{-1}\log\mathbb{P}\left(X_t/t\in\left[x,y\right]\right)\leq-\inf_{z\in\left[x,y\right]}\Lambda^*\left(z\right).
\end{equation}
Note that on the real line, the only closed sets that are not compact are of the form $[x,\infty)$ or $(-\infty,x]$.
Therefore the equality 
$\mathbb{P}\left(X_t/t\geq x\right)=\lim_{y\to\infty}\mathbb{P}\left(X_t/t\in\left[x,y\right]\right)$ implies that
for all $\varepsilon>0$ there exists $t^*>0$ such that for all $t>t^*$,~\eqref{eq:UpperBoundClosed} implies
$$
t^{-1}\log\mathbb{P}\left(X_t/t\in\left[x,y\right]\right) 
\leq\varepsilon-\inf_{z\in\left[x,y\right]}\Lambda^*\left(z\right)
\leq\varepsilon-\inf_{z\geq x}\Lambda^*\left(z\right).
$$
Since the right-hand side does not depend on the value of $y$, we can take the limit on both sides as
$y$ tends to infinity and we obtain the desired upper bound on $[x,\infty)$.
An analogous argument holds for closed sets of the form $(-\infty,x]$, and the large deviations statement for the upper bound follows.
\end{comment}

We now prove the lower bound for the $\liminf$ on open sets in $\mathbb{R}$.
The set of exposed points of the function $\Lambda$ is the interval 
$\left(-\infty,\Lambda_-'\left(1\right)\right)$
so that the lower bound for open sets in this interval follows from the G\"artner-Ellis theorem.
We therefore consider $x\geq\Lambda'_-\left(1\right)$ from now on.
Since the function $\Lambda$ is continuously differentiable and convex on $\mathcal{D}^o_\Lambda$, 
two possible cases arise: either it attains its minimum at a unique point $u_0\in\mathcal{D}^o_\Lambda$,
and hence $\Lambda_{-}'(1)>0$, or
it is strictly decreasing on its effective domain, which implies $\Lambda_{-}'(1)\leq 0$.
In the case $\Lambda_{-}'(1)>0$, we can define a new probability measure $\mathbb{P}_t^0$ for each $t>0$ via
$$\frac{\D\mathbb{P}_t^0}{\D\mathbb{P}_t}(z)
:=\exp\Big(u_0 z t-\Lambda_t(u_0)\Big),
\qquad\text{for any }z\in\mathbb{R}.$$
The proof of the lower bound then follows exactly as in the standard 
G\"artner-Ellis theorem with this change of measure.
It can similarly be shown that since $\Lambda$ is strictly convex on $\mathcal{D}^o_{\Lambda}$,
the measure $\mathbb{P}_t^0$ converges weakly to a Gaussian random measure
with zero mean and variance $\Lambda''(u_0)$.

We now consider the case $\Lambda_{-}'(1)\leq 0$. 
As in the G\"artner-Ellis theorem, it suffices to prove the equality
$$\lim_{\delta\to 0}\liminf_{t\to\infty}t^{-1}\log\mathbb{P}\left(X_t/t\in\left(x,x+\delta\right)\right)
\geq -\Lambda^*\left(x\right).$$
In view of Lemma~\ref{lem:LambdaExpansion}, let us define the function 
$\overline{\Lambda}_t:\mathcal{D}_t\cap\left(-\infty,1\right)$ by
\begin{equation}\label{eq:LambdaBar}
\overline{\Lambda}_t(u):=\Lambda\left(u\right)t-\frac{2b}{\alpha}\log\left(1-u\right).
\end{equation}
The key ingredient now is to remark that, for each $t>0$, the function 
$u\mapsto t^{-1}\overline{\Lambda}_t(u)$ is smooth and convex in the interval $(0,1)$ and furthermore is steep at $1$.
Therefore for any $t>0$, there exists a unique solution $u_t$ to the equation
$\overline{\Lambda}'_t\left(u_t\right)=0$.
Using similar arguments as in~\cite{Florens}, it is clear that $u_t$ converges to $1$ from below as $t$ tends to infinity.
Let us further define a new measure $\overline{\mathbb{P}}_t$ by
\begin{equation}\label{eq:ProbaPBar}
\frac{\D\overline{\mathbb{P}}_t}{\D\mathbb{P}_t}\left(z\right)
:=\exp\Big(u_t z t-\Lambda_t\left(u_t\right)\Big),
\quad\text{for any }z\in\mathbb{R}.
\end{equation}
For any $\delta>0$ we then have
\begin{align}\label{eq:NonSteep1}
t^{-1}\log\mathbb{P}_t\Big(X_t/t\in\left(x,x+\delta\right)\Big)
 & = t^{-1}\log\int_{\left(x,x+\delta\right)}\exp\Big(\Lambda_t\left(u_t\right)-u_t zt\Big)\D\overline{\mathbb{P}}_t\left(z\right)\nonumber\\
 & = t^{-1}\Lambda_t\left(u_t\right)-u_t x+t^{-1}\log\int_{\left(x,x+\delta\right)}\E^{-u_t\left(z-x\right)t}\D\overline{\mathbb{P}}_t\left(z\right)\\
 & \geq t^{-1}\Lambda_t\left(u_t\right)-u_t\left(x+\delta\right)+t^{-1}\log\overline{\mathbb{P}}_t\Big(X_t/t\in\left(x,x+\delta\right)\Big),
\nonumber
\end{align}
for $t$ large enough so that $u_t>0$, and hence
\begin{align}\label{eq:LowerBoundNonSteep}
\lim_{\delta\to 0}\liminf_{t\to\infty}t^{-1}\log\mathbb{P}\Big(X_t/t\in\left(x,x+\delta\right)\Big)
  & \geq \liminf_{t\to\infty}\left(t^{-1}\Lambda_t\left(u_t\right)-u_t x\right)\\
  & + \lim_{\delta\to 0}\liminf_{t\to\infty}t^{-1}\log\overline{\mathbb{P}}_t
\Big(X_t/t\in\left(x,x+\delta\right)\Big)\nonumber
\end{align}
We now have to find a lower bound for both terms on the right-hand side of this inequality.
Since the function $\Lambda_t$ is convex for all $t>0$, 
we have 
$\Lambda_t\left(u_t\right)-\Lambda_t\left(u\right)
\geq\left(u_t-u\right)\Lambda'_t\left(u\right)$ for all $u<1$. 
From~\cite[Theorem 25.7]{Rockafellar} we have 
$\lim_{t\to\infty}t^{-1}\Lambda'_t\left(u\right)=\Lambda'\left(u\right)$ for all $u<1$ and
$\liminf_{t\to\infty}t^{-1}\Lambda_t\left(u_t\right)\geq\Lambda\left(u\right)+\left(1-u\right)\Lambda'\left(u\right)$,
which implies that $\liminf_{t\to\infty}t^{-1}\Lambda_t\left(u_t\right)\geq\Lambda_-\left(1\right)$. 
The fact that $u_t$ converges to $1$ as $t$ tends to infinity and the characterisation of 
the Fenchel-Legendre transform $\Lambda^*$ in Proposition~\ref{prop:RateFunction} gives
$$\liminf_{t\to\infty}\left(t^{-1}\Lambda\left(u_t\right)-u_t x\right)\geq \Lambda^*\left(x\right).$$
When $\Lambda'_-(1)<0$, Lemma~\ref{lem:WeakConvergenceAt1} implies that $\overline{\mathbb{P}}_t$ converges to a probability measure $\overline{\mathbb{P}}$ with full support
as $t$ tends to infinity, and therefore the last term on the right-hand side of the 
inequality~\eqref{eq:LowerBoundNonSteep} tends to zero as $t$ tends to infinity (for any $\delta>0$).
This proves the theorem in the case $\Lambda'_-(1)<0$.\\
When $\Lambda'_-(1)=0$, we cannot conclude immediately since Lemma~\ref{lem:WeakConvergenceAt1}
is a convergence result for the family $\left(X_t/\sqrt{t}\right)_{t>0}$ and we need a convergence property for
the family $\left(X_t/t\right)_{t>0}$.
However, we can argue as follows.
Let $(\xi_t)_{t\geq 0}$ be an independent L\'evy process with L\'evy exponent $\phi$ defined on a domain $\mathcal{D}_\phi$
strictly containing $\mathcal{D}_\Lambda$ and such that $\phi'(1)\ne 0$.
Consider now the random variable $Y_t:=X_t+\xi$.
The moment generating function of $Y$ is then
$$\Lambda_t^Y(u):=\log\mathbb{E}\left(\E^{uY_t}\right)
=\Lambda_t(u)+\phi(u)t,$$
for any $t\geq 0$ and any $u\in\mathcal{D}_t$.
Therefore 
$$\Lambda^Y(u):=\lim_{t\to\infty}t^{-1}\Lambda_t^Y(u)=\Lambda(u)+\phi(u),
\qquad\text{for all }u\in\mathcal{D}_{\Lambda},$$
where $\mathcal{D}_\Lambda$ is characterised in Corollary~\ref{cor:LimitLambda}.
In particular, note that 
$$\partial_u\Lambda^Y_- (1):=\lim_{u\nearrow 1}\partial_u\Lambda^Y(u) = \Lambda'_-(1)+\phi'(1).$$
Note that $\Lambda'_-(1)=0$ implies that $\partial_u\Lambda^Y_- (1)\ne 0$.
Since the effective domain of the limiting cumulant generating function of $Y$ is the same as that of $X$, 
we therefore obtain a large deviations principle for the family $\left(t^{-1}Y_t\right)_{t>0}$ as $t$ tends to infinity using the analysis above.
If the two families $\left(t^{-1}X_t\right)_{t>0}$ and $\left(t^{-1}Y_t\right)_{t>0}$
are exponentially equivalent, then the LDP for $\left(t^{-1}Y_t\right)_{t>0}$ implies the LDP for $\left(t^{-1}X_t\right)_{t>0}$ by~\cite[Theorem 4.2.13]{DemboZeitouni}.
Recall that two families are said to be exponentially equivalent if for all $\delta>0$, 
$$\limsup_{t\to \infty}t^{-1}\log\mathbb{P}\left(\frac{\left|X_t-Y_t\right|}{t}>\delta\right)=-\infty.$$
Since 
$\mathbb{P}\left(\frac{\left|X_t-Y_t\right|}{t}>\delta\right) = 
\mathbb{P}\left(|\xi_t|>\delta t\right)$, we simply need to find a (L\'evy process) 
satisfying $\mathbb{P}\left(|\xi_t|>\delta t\right)\sim t^{-\beta}\exp\left(-\alpha t^\gamma\right)$,
for some $\alpha>0$, $\beta>0$ and $\gamma>1$ as $t$ tends to infinity.
The existence of such a L\'evy process is given in~\cite[Theorem 26.1, case (i)]{Sato}.
\end{proof}

\begin{remark}
A similar issue arose in~\cite{Bercu} where the authors studied large deviations properties 
for the maximum likelihood estimator of an Ornstein-Uhlenbeck process.
When the limiting cumulant generating function is flat 
at the boundary of the domain, i.e. $\Lambda'_-(1)=0$, 
they showed that the same large deviations principle holds.
Only  the higher-order terms in the asymptotic expansion of the probability change.
\end{remark}

%%%%%%%%%%%%%%%%%%%%%%%%%%%%%%%%%%%%%%%%%%%%%%%%
%%%%%%%%%%%%%%%%%%%%%%%%%%%%%%%%%%%%%%%%%%%%%%%%
\begin{proof}[Proof of Proposition~\ref{prop:NonSteep0}]
Let us first consider open and closed sets in the set of exposed points $\left(\Lambda'_+(0),\infty\right)$.
By~\cite[Theorem 4.5.3]{DemboZeitouni} we know that the upper bound of the G\"artner-Ellis theorem
holds on compact sets even when the origin is not in the interior of the domain of the limiting cumulant generating function.
In the proof of the G\"artner-Ellis theorem the assumption ensuring that the origin
lies within the interior of $\mathcal{D}_{\Lambda}$ is required 
(i) to derive the upper bound for closed sets and not only for compact sets, and
(ii) to prove that the Fenchel-Legendre transform $\Lambda^*$ of $\Lambda$
is a good rate function.
We know that the function $\Lambda^*$ is not a good convex rate function,
and we shall see how to deal with this. 
Let us first prove (i).
Let $B$ be a Borel set in $\mathbb{R}$. 
We want to prove that
\begin{equation}\label{eq:Inequalities}
-\inf_{z\in B^o}\Lambda^*\left(z\right)
\leq \liminf_{t\to\infty}t^{-1}\log\mathbb{P}\left(X_t/t\in B\right)
\leq \limsup_{t\to\infty}t^{-1}\log\mathbb{P}\left(X_t/t\in B\right)
\leq -\inf_{z\in\overline{B}}\Lambda^*\left(z\right).
\end{equation}
The upper bound for compact subsets of the real line follows from Chebychev inequality, 
and~\cite[Proposition 5.2]{OBrien} shows that this extends to closed sets even when $0\notin\mathcal{D}_\Lambda$.
In the case where we are only interested in intervals 
(i.e. $\mathbb{P}\left(X_t/t\leq x\right)$ or $\mathbb{P}\left(X_t/t\in[y, x]\right)$), 
the following argument is self-contained and does not rely on~\cite{OBrien}:
let $x$ be a real number.
For any $y<x$,  Chebychev inequality implies the following upper bound on the compact interval $\left[y,x\right]$:
\begin{equation}\label{eq:UpperCompact}
\limsup_{t\to\infty}t^{-1}\log\mathbb{P}\left(X_t/t\in\left[y,x\right]\right)\leq -\inf_{z\in\left[y,x\right]}\Lambda^*\left(z\right).
\end{equation}
Since $\chi\left(0\right)>0$, the function $\Lambda^*$ is constant on~$\left(-\infty,\Lambda'_+\left(0\right)\right)$ and strictly increasing outside.
Since we are interested in the limit as $y$ tends to $-\infty$, 
we can consider $y\leq\Lambda^*\left(\Lambda'_+\left(0\right)\right)$ without loss of generality,
and hence 
$\inf_{z\in\left[y,x\right]}\Lambda^*\left(z\right)=\Lambda^*\left(\Lambda'_+\left(0\right)\right)$
always holds for such $y$.
Using the fact that
$\mathbb{P}\left(X_t/t\leq x\right)=\lim_{y\to-\infty}\mathbb{P}\left(X_t/t\in\left[y,x\right]\right)$, 
Inequality~\eqref{eq:UpperCompact} implies that for any $\varepsilon>0$ there exists $t^*\left(\varepsilon\right)>0$ such that
$$\mathbb{P}\left(X_t/t\in\left[y,x\right]\right)\leq
\exp\left(-\Lambda^*\left(\Lambda'_+\left(0\right)\right)t
+\varepsilon t\right),
\quad\text{for any } t>t^*\left(\varepsilon\right).$$
Since the right-hand side does not depend on~$y$ 
we can now take the limit on both sides as $y$ tends to~$-\infty$, and hence
$t^{-1}\log\mathbb{P}\left(X_t/t\leq x\right)\leq
-\Lambda^*\left(\Lambda'_+\left(0\right)\right)+\varepsilon$ holds.
Since $\varepsilon$ can be taken arbitrarily small we obtain 
$\limsup_{t\to\infty}t^{-1}\log\mathbb{P}\left(X_t/t\leq x\right)\leq-\Lambda^*\left(\Lambda'_+\left(0\right)\right)$.
A similar argument leads the upper bound
$\limsup_{t\to\infty}t^{-1}\log\mathbb{P}\left(X_t/t\geq x\right)\leq
-\Lambda^*\left(x\right)\ind_{\left\{x\geq\Lambda'_+\left(0\right)\right\}}
-\Lambda^*\left(\Lambda'_+\left(0\right)\right)\ind_{\left\{x<\Lambda'_+\left(0\right)\right\}}$.

We now want to prove lower bound estimates for the $\liminf$ on open sets of the real line.
Let us consider $x>\Lambda'_+\left(0\right)$ and let $\eta$ be the unique solution to $\Lambda'\left(\eta\right)=x$.
Let us define a new measure $\mathbb{Q}_t$ by
\begin{equation}\label{eq:ProbaQ}
\frac{\D\mathbb{Q}_t}{\D\mathbb{P}}\left(z\right):=\exp\left(\eta z t-\Lambda_t\left(\eta\right)\right),
\quad\text{for any }z\in\mathbb{R}.
\end{equation}
For any $\delta>0$ small enough, denote $B_{x,\delta}$ the open ball centered on $x$ with radius $\delta$, then we have
\begin{align*}
t^{-1}\log\mathbb{P}\Big(X_t/t\in B_{x,\delta}\Big)
 & = t^{-1}\log\int_{B_{x,\delta}}\exp\Big(\Lambda_t\left(\eta\right)-\eta zt\Big)\D\mathbb{Q}_t\left(z\right)\\
 & = t^{-1}\Lambda_t\left(\eta\right)-\eta x+t^{-1}\log\int_{B_{x,\delta}}\exp\left(-\eta\left(z-x\right)t\right)\D\mathbb{Q}_t\left(z\right)\\
 & \geq t^{-1}\Lambda_t\left(\eta\right)-\eta x-\left|\eta\right|\delta+t^{-1}\log\mathbb{Q}_t\Big(X_t/t\in B_{x,\delta}\Big),
\end{align*}
and hence
\begin{align}\label{eq:LowerBoundNonSteep2}
\lim_{\delta\to 0}\liminf_{t\to\infty}t^{-1}\log\mathbb{P}\Big(X_t/t\in B_{x,\delta}\Big)
  & \geq \liminf_{t\to\infty}\left(t^{-1}\Lambda\left(\eta\right)-\eta x\right)
 + \lim_{\delta\to 0}\liminf_{t\to\infty}t^{-1}\log\mathbb{Q}_t\Big(X_t/t\in B_{x,\delta}\Big)\nonumber\\
  & \geq -\Lambda^*\left(x\right)
 + \lim_{\delta\to 0}\liminf_{t\to\infty}t^{-1}\log\mathbb{Q}_t\Big(X_t/t\in B_{x,\delta}\Big).
\end{align}
We now have to find a lower bound for the last term on the right-hand side of this inequality
as $t$ tends to infinity and $\delta$ to zero.
Define the function $\widehat{\Lambda}\left(\cdot\right):=\Lambda\left(\cdot+\eta\right)-\Lambda\left(\eta\right)$.
$\widehat{\Lambda}$ is the limiting logarithmic moment generating function of $\mathbb{Q}_t$ and 
for each $t>0$, we denote $\widehat{\Lambda}_t$ the logarithmic mgf of $\mathbb{Q}_t$, and we have
$$
t^{-1}\widehat{\Lambda}_t\left(\lambda\right) 
 :=t^{-1}\log\int_{\mathbb{R}}\E^{\lambda zt}\D\mathbb{Q}_t\left(z\right)\\
 = t^{-1}\Lambda_t\left(\lambda+\eta\right)-t^{-1}\Lambda_t\left(\eta\right),
$$
which converges to $\widehat{\Lambda}\left(\lambda\right)$ as $t$ tends to infinity.
We now define the Fenchel-Legendre transform $\widehat{\Lambda}^*$ of $\widehat{\Lambda}$ as
$$
\widehat{\Lambda}^*\left(z\right)
:=\sup_{\lambda\in\mathbb{R}}\left\{\lambda z-\widehat{\Lambda}\left(\lambda\right)\right\}
=\Lambda^*\left(z\right)-\eta z+\Lambda\left(\eta\right).
$$
Let $B^c_{x,\delta}$ denote the complement in $\mathbb{R}$ of the open ball $B_{x,\delta}$.
We can now apply the upper bound estimate derived above for the measure $\mathbb{Q}_t$ on the closed set $B^c_{x,\delta}$:
$$\limsup_{t\to\infty}t^{-1}\log\mathbb{Q}_t\left(X_t/t\in B^c_{x,\delta}\right)
\leq-\inf_{z\in B^c_{x,\delta}}\widehat{\Lambda}^*\left(z\right).$$
Since $\Lambda^*\left(x\right)\geq \eta x-\Lambda\left(\eta\right)$ by definition of the Fenchel-Legendre transform, we have
$$
\inf_{z\in B^c_{x,\delta}}\widehat{\Lambda}^*\left(z\right)
=\inf_{z\in B^c_{x,\delta}}\left\{\Lambda^*\left(z\right)-\left(\eta z-\Lambda\left(\eta\right)\right)\right\}
=\inf_{z\in B^c_{x,\delta}}\left\{\sup_{\lambda\in\mathbb{R}}\Big(\lambda z-\Lambda\left(\lambda\right)\Big)-\left(\eta z-\Lambda\left(\eta\right)\right)\right\}
\geq 0.$$
The expression 
$\sup_{\lambda\in\mathbb{R}}\left(\lambda z-\Lambda\left(\lambda\right)\right)-\left(\eta z-\Lambda\left(\eta\right)\right)$
above is always not negative and is null if and only if $\eta=\lambda$.
The strict monotonicity of the function $\Lambda'$ on the interval $\left(\Lambda'_+\left(0\right),\infty\right)$
implies that $z=x$, which is not possible since $z$ takes values only in the complement of the open ball $B_{x,\delta}$.
Hence 
$\inf_{z\in B^c_{x,\delta}}\widetilde{\Lambda}^*\left(z\right)$ is strictly positive for all $x>\Lambda'_+\left(0\right)$.
This implies that 
$\limsup_{t\to\infty}t^{-1}\log\mathbb{Q}_t\left(X_t/t\in B^c_{x,\delta}\right)<0$ and therefore
$\mathbb{Q}_t\left(X_t/t\in B^c_{x,\delta}\right)$ tends to zero and 
$\mathbb{Q}_t\left(X_t/t\in B_{x,\delta}\right)$ tends to one as $t$ tends to infinity for all $\delta>0$.
In particular this implies
$$
\lim_{\delta\to 0}\liminf_{t\to\infty}t^{-1}\mathbb{Q}_t\left(X_t/t\in B^c_{x,\delta}\right)=0,
$$
and the result follows from~\eqref{eq:LowerBoundNonSteep2}.

Consider now open or closed sets in the interval $\left(-\infty,\Lambda'_+(0)\right)$.
The proof of the theorem follows analogous steps as the proof of Proposition~\ref{prop:NonSteep1} 
on sets in $\left(\Lambda'_-(1),\infty\right)$.
We consider a time-dependent change of measure, use an auxiliary convex function $\overline{\Lambda}_t$, 
steep at $0$ and well-defined on $\left(0,\infty\right)\cap\mathcal{D}_t$, for each $t>0$.
This function clearly exists since the function $\Lambda_t$ itself is steep at the left boundary of its effective domain
$\mathcal{D}_t$ which converges to the origin from below.
Lemma~\ref{lem:WeakConvergenceAt0} proves weak convergence results for the random variable $\left(\pi_t^{(0)} X_t/t\right)$, 
where $\pi_t^{(0)}$ is equal to $1$ if $\Lambda'_+(0)>0$ and is equal to $\sqrt{t}$ if $\Lambda'_+(0)=0$.
Therefore, using analogous arguments as in the proof of Proposition~\ref{prop:NonSteep1}, the proposition follows.
\end{proof}

\begin{proof}[Proof of Proposition~\ref{prop:LDPiib}]
The limiting cumulant generating function $\Lambda$ is not steep at either boundary $0$ or $1$.
A large deviations principle clearly holds on any subsets of $\left(\Lambda'_{+}(0),\Lambda'_-(1)\right)$.
For subsets of $\left(\Lambda'_-(1),\infty\right)$, we appeal to Proposition~\ref{prop:NonSteep1} and 
for subsets of $\left(-\infty,\Lambda'_+(0)\right)$, we appeal to Proposition~\ref{prop:NonSteep0}.
\end{proof}

%%%%%%%%%%%%%%%%%%%%%%%%%%%%%%%%%%%%%%%%%%%%%%%%
%%%%%%%%%%%%%%%%%%%%%%%%%%%%%%%%%%%%%%%%%%%%%%%%
\section{Technical lemmas}\label{app:TechLemma}
\begin{lemma}
\label{lem:LambdaExpansionAt0}
If $\chi\left(0\right)>0$,
the following holds for the function $\Lambda_t$ as $t$ tends to infinity:
$$
t^{-1}\Lambda_t\left(u\right) = \Lambda(u)-\frac{2b}{\alpha t}\log(u)+t^{-1}R_t^{(0)}(u),
\qquad\text{for any } u\in\mathcal{D}_t\cap\left(0,\infty\right),
$$
where for any $t\geq 0$, the function $R^{(0)}_t:\mathcal{D}_t\cap\left(0,\infty\right)\to\mathbb{R}$ is analytic
and converges on any compact subset of $\mathcal{D}_t\cap\left(1,\infty\right)$.
\end{lemma}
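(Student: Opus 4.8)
The plan is to work directly from the closed forms already established. Proposition~\ref{prop:LaplaceAffine} writes $\Lambda_t(u)=\Lambda_t^H(u)+\frac a2 u(u-1)t$ with $\Lambda_t^H$ given by~\eqref{eq:LambdaT}, and Corollary~\ref{cor:LimitLambda} gives $\Lambda(u)=-\frac b\alpha\bigl(\chi(u)+\gamma(u)\bigr)+\frac a2 u(u-1)$. Subtracting $t\Lambda(u)$ from $\Lambda_t(u)$, the quadratic terms cancel and the term $-\frac{2b}{\alpha}\cdot\frac{\chi(u)t}{2}=-\frac{bt}{\alpha}\chi(u)$ in~\eqref{eq:LambdaT} cancels against $\frac{bt}{\alpha}\chi(u)$. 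Denoting by $\psi_t(u):=\sinh(\gamma(u)t/2)\,u(u-1)/\bigl(\gamma(u)f_t(u)\bigr)$ the Riccati solution from the proof of Proposition~\ref{prop:LaplaceAffine} (the coefficient of $v$ in~\eqref{eq:LambdaT}), one is left with
\[
\Lambda_t(u)-t\Lambda(u)=-\frac{2b}{\alpha}\log\bigl(f_t(u)\,\E^{-\gamma(u)t/2}\bigr)+\psi_t(u)\,v .
\]
The whole statement therefore reduces to extracting a factor $u$ from $f_t(u)\,\E^{-\gamma(u)t/2}$.

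The key algebraic step is the identity $\gamma(u)^2-\chi(u)^2=\alpha u(1-u)$ from~\eqref{eq:DefGamma}. Expanding the hyperbolic functions in $f_t$ and pulling out $\E^{\gamma(u)t/2}$,
\[
f_t(u)\,\E^{-\gamma(u)t/2}=\tfrac12\Bigl(1-\tfrac{\chi(u)}{\gamma(u)}\Bigr)+\tfrac12\Bigl(1+\tfrac{\chi(u)}{\gamma(u)}\Bigr)\E^{-\gamma(u)t},
\]
and this is exactly where the hypothesis $\chi(0)>0$ enters: it gives $\gamma(0)=\chi(0)\neq0$, so $1-\chi(u)/\gamma(u)=\bigl(\gamma(u)-\chi(u)\bigr)/\gamma(u)$ vanishes at $u=0$, the vanishing being of order precisely one since $\gamma'(0)-\chi'(0)=\alpha/(2\chi(0))\neq0$. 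Hence $h(u):=u^{-1}\bigl(1-\chi(u)/\gamma(u)\bigr)$ extends analytically past $u=0$ with $h(0)=\alpha/(2\chi(0)^2)>0$, and, since $f_t>0$ and $u>0$ on the interior of $\mathcal D_t$, we may write $f_t(u)\,\E^{-\gamma(u)t/2}=u\,g_t(u)$ with
\[
g_t(u):=\tfrac12 h(u)+\frac1{2u}\Bigl(1+\tfrac{\chi(u)}{\gamma(u)}\Bigr)\E^{-\gamma(u)t}>0 .
\]
Taking logarithms gives exactly the decomposition of the lemma with $R_t^{(0)}(u):=-\frac{2b}{\alpha}\log g_t(u)+\psi_t(u)\,v$; the coefficient $2b/\alpha$ of $\log u$ is the unique choice making the $t\to\infty$ limit of $g_t$ non-degenerate at the origin. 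Analyticity of $R_t^{(0)}$ on $\mathcal D_t\cap(0,\infty)$ is then immediate: $\gamma(\cdot)^2$ is a polynomial, $\cosh(\gamma t/2)$ and $\gamma^{-1}\sinh(\gamma t/2)$ are entire in $\gamma^2$ so $f_t$ and $\psi_t$ are analytic wherever $f_t\neq0$, and $g_t>0$.

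For the convergence assertion I would fix a compact set $K$ in the interior of $\mathcal D_t\cap(1,\infty)$, i.e.\ a compact $K\subset(1,u_+)$ in Case (ii)(a); in Case (ii)(b) this interval shrinks to the empty set as $t\to\infty$ so nothing is required, and the case $|\rho|=1$ is handled identically with $\gamma(\cdot)^2$ affine and $u_+$ possibly infinite. On $K$ the function $\gamma$ is bounded away from $0$, hence $\E^{-\gamma(u)t}\to0$ uniformly; therefore $g_t\to\tfrac12 h$ uniformly and, $\tfrac12 h$ being bounded away from $0$ and $\infty$ on $K$ (there $\chi<0<\gamma$, using $\rho<0$, so $1-\chi/\gamma>1$), also $\log g_t\to\log(\tfrac12 h)$ uniformly. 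Likewise $\psi_t(u)=\frac{\bigl(1-\E^{-\gamma(u)t}\bigr)(u-1)}{2\,g_t(u)\,\gamma(u)}$ converges uniformly on $K$ to $u(u-1)/\bigl(\gamma(u)(1-\chi(u)/\gamma(u))\bigr)$. Hence $R_t^{(0)}$ converges uniformly on $K$, which is the claim.

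The step I expect to be the main obstacle is the factorisation in the second paragraph: one must verify that $1-\chi/\gamma$ has a zero of order exactly one at $u=0$ precisely under $\chi(0)>0$ (it would be of higher order, or absent, otherwise), and that the residual factor $h$, hence $g_t$, stays analytic and strictly positive throughout the relevant interval — in particular across $u=1$, where the convenient representation $1-\chi/\gamma=\alpha u(1-u)/\bigl(\gamma(\gamma+\chi)\bigr)$ has a removable $0/0$ in Case (ii)(a) but $1-\chi/\gamma$ itself is plainly regular and nonzero. Everything else is routine and runs exactly parallel to the analysis around $u=1$ in Lemma~\ref{lem:LambdaExpansion}.
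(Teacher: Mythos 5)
Your proposal is correct and follows essentially the same approach as the paper: subtract $t\Lambda(u)$ from $\Lambda_t(u)$, isolate the $\log f_t(u)$ term, and extract the factor $u$ from $1-\chi(u)/\gamma(u)$ using its simple zero at $u=0$, guaranteed by $\chi(0)>0$. Your version is more explicit—you write $f_t(u)\,\E^{-\gamma(u)t/2}$ exactly as the two-term sum $\tfrac12(1-\chi/\gamma)+\tfrac12(1+\chi/\gamma)\E^{-\gamma t}$ rather than asymptotically as $\cosh(\gamma t/2)\bigl(1-\chi/\gamma\bigr)$—which makes the analyticity and uniform-convergence claims cleaner, but the mechanism and the resulting $R_t^{(0)}$ are the same.
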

\begin{proof}
From~\eqref{eq:DefGamma}, we clearly have 
$f_t\left(u\right)\sim\cosh\left(\gamma\left(u\right)t/2\right)\left|1-\chi\left(u\right)/\gamma\left(u\right)\right|$
as $t$ tends to infinity, so that
$$\log\left(f_t\left(u\right)\right)=\log\left|1-\frac{\chi\left(u\right)}{\gamma\left(u\right)}\right|+\gamma\left(u\right)t/2-\log\left(2\right)
+o\left(\E^{-\gamma\left(u\right)t}\right),
\quad\text{as }t \text{ tends to infinity}.$$
The condition $\chi(0)>0$ implies that
$1-\frac{\chi\left(u\right)}{\gamma\left(u\right)}=\frac{\alpha u}{2\beta^2}+\mathcal{O}\left(u^2\right)$,
and the last term in the expression~\eqref{eq:LambdaT} for $\Lambda_t^H$ clearly satisfy the properties we need for the function 
$R_t^{(0)}$.
\end{proof}

\begin{lemma}
\label{lem:LambdaExpansion}
The following holds for the function $\Lambda_t$ as $t$ tends to infinity:
$$
t^{-1}\Lambda_t\left(u\right) = 
\Lambda\left(u\right)-\frac{2b}{\alpha t}\log\left(1-u\right)+t^{-1}R^{(1)}_t\left(u\right),
\qquad\text{for any } u\in\mathcal{D}_t\cap\left(-\infty,1\right),
$$
where for any $t\geq 0$, the function $R^{(1)}_t:\mathcal{D}_t\cap\left(-\infty,1\right)\to\mathbb{R}$ 
is analytic and converges on any compact subsets of $\mathcal{D}_t\cap\left(-\infty,1\right)$.
\end{lemma}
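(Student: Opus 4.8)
The plan is to imitate the proof of Lemma~\ref{lem:LambdaExpansionAt0}, isolating this time the logarithmic singularity at $u=1$ instead of at $u=0$. The starting point is Proposition~\ref{prop:LaplaceAffine}, which gives $\Lambda_t(u)=\Lambda^H_t(u)+\frac a2u(u-1)t$ and the closed form~\eqref{eq:LambdaT} for $\Lambda^H_t$, together with the expression $\Lambda(u)=-\frac b\alpha\left(\chi(u)+\gamma(u)\right)+\frac a2u(u-1)$ from Corollary~\ref{cor:LimitLambda}. I would carry out the analysis on the bulk of the domain, namely $\mathcal{D}_\Lambda^o\cap(-\infty,1)$ intersected with $\mathcal{D}_t$ (this is $(u_-,1)$ when $\chi(0)\leq0$ and $(0,1)$ when $\chi(0)>0$), where $\gamma(u)>0$, where $1-\chi(u)/\gamma(u)>0$, and where the expression for $\Lambda$ is real; the behaviour near the moving left endpoint of $\mathcal{D}_t$ is a shrinking region treated as in Lemma~\ref{lem:LambdaExpansionAt0}.

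First I would factor the dominant exponential out of $f_t$: for $\gamma(u)>0$,
$$
f_t(u)=\tfrac12\left(1-\tfrac{\chi(u)}{\gamma(u)}\right)\E^{\gamma(u)t/2}\left(1+\varepsilon_t(u)\right),\qquad \varepsilon_t(u):=\frac{\gamma(u)+\chi(u)}{\gamma(u)-\chi(u)}\,\E^{-\gamma(u)t},
$$
so that $\log f_t(u)=\frac{\gamma(u)t}{2}+\log\left[\tfrac12\left(1-\tfrac{\chi(u)}{\gamma(u)}\right)\right]+\log(1+\varepsilon_t(u))$. Next, using the identity $\gamma(u)^2-\chi(u)^2=\alpha u(1-u)$, I would extract the singular logarithm: $\tfrac12\left(1-\tfrac{\chi(u)}{\gamma(u)}\right)=(1-u)\,\frac{\alpha u}{2\gamma(u)(\gamma(u)+\chi(u))}$, the second factor being strictly positive on the bulk domain, whence $\log\left[\tfrac12(1-\tfrac{\chi(u)}{\gamma(u)})\right]=\log(1-u)+\log\frac{\alpha u}{2\gamma(u)(\gamma(u)+\chi(u))}$. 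Substituting into~\eqref{eq:LambdaT}, the terms $-\frac b\alpha\chi(u)t$, $-\frac b\alpha\gamma(u)t$ and $\frac a2u(u-1)t$ collect into $t\Lambda(u)$, the $\log(1-u)$ term yields $-\frac{2b}{\alpha}\log(1-u)$, and the remainder is
$$
R^{(1)}_t(u):=-\frac{2b}{\alpha}\log\frac{\alpha u}{2\gamma(u)(\gamma(u)+\chi(u))}-\frac{2b}{\alpha}\log\left(1+\varepsilon_t(u)\right)+\frac{u(u-1)v}{f_t(u)\gamma(u)}\sinh\left(\frac{\gamma(u)t}{2}\right),
$$
which after dividing by $t$ gives the claimed expansion. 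Analyticity is immediate summand by summand (the last one because $f_t>0$ by Proposition~\ref{prop:DomainHeston}); and on a compact $K$ bounded away from the endpoints, $\gamma$ is bounded below, so $\varepsilon_t\to0$ and $\sinh(\gamma(u)t/2)/f_t(u)\to(1-\chi(u)/\gamma(u))^{-1}$ uniformly on $K$, giving convergence of $R^{(1)}_t$ there (to $-\frac{2b}{\alpha}\log\frac{\alpha u}{2\gamma(u)(\gamma(u)+\chi(u))}-\frac{1}{\alpha}(\gamma(u)+\chi(u))v$), the first summand being $t$-independent.

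The hard part will be the boundary bookkeeping, which is handled exactly as in Lemma~\ref{lem:LambdaExpansionAt0}. Near $u=1$, if $\chi(1)\leq0$ the prefactor $\log(1-u)$ is spurious, and one must check it is cancelled inside $R^{(1)}_t$ by the $-\log(1-u)$ singularity that $\log\frac{\alpha u}{2\gamma(u)(\gamma(u)+\chi(u))}$ develops at $u=1$ (where $\gamma+\chi$ acquires a simple zero), keeping $R^{(1)}_t$ finite on compact subsets of the open set $\mathcal{D}_t\cap(-\infty,1)$; this is a routine removable-singularity computation on $\frac{\alpha u}{2\gamma(u)(\gamma(u)+\chi(u))}$ via $\gamma(u)^2-\chi(u)^2=\alpha u(1-u)$, and one checks likewise that the apparent singularity of this ratio at $u=0$ is removable whenever $0\in\mathcal{D}_\Lambda^o$. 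Near the moving left endpoint of $\mathcal{D}_t$ the exponential factorisation degenerates and $1-\chi(u)/\gamma(u)$ can change sign, so there the representation of $R^{(1)}_t$ must be reorganised—precisely the bookkeeping by which the term $-\frac{2b}{\alpha t}\log u$ is split off in Lemma~\ref{lem:LambdaExpansionAt0}. None of this affects the applications: in Proposition~\ref{prop:NonSteep1} only a left-neighbourhood of $u=1$ is used, since the critical point $u_t$ of $\overline{\Lambda}_t$ increases to $1$.
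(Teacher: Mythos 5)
Your proof is correct and follows essentially the same approach as the paper: factor the dominant exponential $\E^{\gamma(u)t/2}$ out of $f_t$, extract the logarithmic singularity at $u=1$, and show the remainder is analytic and convergent on compacts. The one place you improve on the paper's (rather terse) argument is that you extract $\log(1-u)$ via the exact identity $\frac{1}{2}\left(1-\chi(u)/\gamma(u)\right)=(1-u)\,\frac{\alpha u}{2\gamma(u)\left(\gamma(u)+\chi(u)\right)}$, whereas the paper only invokes a Taylor expansion of $1-\chi(u)/\gamma(u)$ near $u=1$; your route gives an explicit closed form for $R^{(1)}_t$ and makes the decomposition exact rather than asymptotic.
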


\begin{proof}
Let us consider the first case 
$\chi\left(0\right)\leq 0$ and $\chi\left(1\right)>0$.
From~\eqref{eq:DefGamma}, we clearly have that 
$f_t\left(u\right)\sim\cosh\left(\gamma\left(u\right)t/2\right)\left|1-\chi\left(u\right)/\gamma\left(u\right)\right|$
as $t$ tends to infinity, so that
$$\log\left(f_t\left(u\right)\right)=\log\left|1-\frac{\chi\left(u\right)}{\gamma\left(u\right)}\right|+\gamma\left(u\right)t/2-\log\left(2\right)
+o\left(\E^{-\gamma\left(u\right)t}\right),
\quad\text{as }t \text{ tends to infinity}.$$
Note further that the assumption $\chi\left(1\right)>0$ implies the expansion
$1-\frac{\chi\left(u\right)}{\gamma\left(u\right)}=-\frac{\alpha}{2}\left(u-1\right)/\chi\left(1\right)^2+\mathcal{O}\left(\left(u-1\right)^2\right)$.
It is straightforward to see that the last term in the expression~\eqref{eq:LambdaT} for $\Lambda_t^H$ 
satisfy the properties we need for the function 
$R_t^{(1)}$.
\end{proof}

Recall that the logarithmic Laplace transform of a 
Gamma-distributed random variable $Y$
with strictly positive parameters $\zeta_1$ and $\zeta_2$ ($Y\sim\Gamma\left(\zeta_1,\zeta_2\right)$) reads
$\log\mathbb{E}\left(\exp\left(uY\right)\right)=-\zeta_1\log\left(1-\zeta_2 u\right)$,
for all $u<1/\zeta_2$. 
We also denote $\delta\left(\zeta\right)$ the distribution of a Dirac random variable with parameter $\zeta$,
$\mathcal{N}\left(\mu,\nu\right)$ a standard Gaussian with mean $\mu$ and variance $\nu$,
and the symbol $\ast$ stands for the convolution operator.

\begin{lemma}\label{lem:WeakConvergenceAt1}
Under the measure $\overline{\mathbb{P}}_t$ defined in~\eqref{eq:ProbaPBar}, the sequence of random variables 
$\left(\pi_t^{(1)} X_t/t\right)_t>0$ 
converges weakly to the random variable $Y$ where
$$
Y\equalDistrib
\left\{
\begin{array}{lll}
\displaystyle 
\delta\left(\Lambda'_-\left(1\right)\right)
\ast\Gamma\left(\frac{2b}{\alpha},-\frac{2b}{\alpha\Lambda'_-\left(1\right)}\right),
\quad & \text{and}\quad \pi_t^{(1)}=1,
\quad & \text{if }\Lambda'_-\left(1\right)<0;\\
\\
\displaystyle 
\mathcal{N}\left(-\sqrt{\frac{2b\Lambda''_-\left(1\right)}{\alpha}},\Lambda''_-\left(1\right)\right)
\ast\Gamma\left(\frac{2b}{\alpha},\sqrt{\frac{2b}{\alpha\Lambda''_-\left(1\right)}}\right),
\quad & \text{and}\quad \pi_t^{(1)}=\sqrt{t},
\quad & \text{if }\Lambda'_-\left(1\right)=0.\\
\end{array}
\right.
$$
\end{lemma}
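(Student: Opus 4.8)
The plan is to show that the moment generating function of $\pi_t^{(1)}X_t/t$ under $\overline{\mathbb{P}}_t$ converges, on a neighbourhood of the origin, to that of the stated variable $Y$; since the limiting transform is finite there, this yields the claimed convergence in distribution. Since $\chi(1)>0$ in the situations where this lemma is used, $\mathcal{D}_t\supset(-\infty,\overline{u}(t))$ with $\overline{u}(t)>1$, while $u_t<1$, so the tilting identity~\eqref{eq:ProbaPBar} gives, for all large $t$ and all small $|\lambda|$,
$$
\mathbb{E}^{\overline{\mathbb{P}}_t}\!\left[\exp\!\left(\lambda\,\pi_t^{(1)}X_t/t\right)\right]
=\exp\!\left(\Lambda_t\!\left(u_t+\tfrac{\lambda\pi_t^{(1)}}{t}\right)-\Lambda_t(u_t)\right),
$$
and the whole problem reduces to the asymptotics of the right-hand exponent. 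The engine is Lemma~\ref{lem:LambdaExpansion}, which on $(-\infty,1)$ writes $\Lambda_t(u)=t\Lambda(u)-\tfrac{2b}{\alpha}\log(1-u)+R_t^{(1)}(u)$ with $R_t^{(1)}$ analytic and locally uniformly convergent; the term $t\Lambda(u)$ will generate the Dirac (resp. Gaussian) factor of $Y$, the logarithmic term the Gamma factor, and $R_t^{(1)}$ will be asymptotically negligible.

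First I would pin down the speed of $u_t\nearrow1$. Rewriting the defining relation $\overline{\Lambda}'_t(u_t)=0$ as $t\,\Lambda'(u_t)(1-u_t)=-\tfrac{2b}{\alpha}$ and using that the formula of Corollary~\ref{cor:LimitLambda} makes $\Lambda$ real-analytic on a neighbourhood of $1$ (so $\Lambda'(u_t)\to\Lambda'_-(1)$, $\Lambda''(u_t)\to\Lambda''_-(1)$, which is positive by strict convexity of $\Lambda$, and $\Lambda'''(u_t)=O(1)$), one obtains $t(1-u_t)\to-\tfrac{2b}{\alpha\Lambda'_-(1)}$ when $\Lambda'_-(1)<0$, and, inserting $\Lambda'(u_t)=\Lambda''_-(1)(u_t-1)+o(1-u_t)$, $\sqrt{t}\,(1-u_t)\to(2b/(\alpha\Lambda''_-(1)))^{1/2}$ when $\Lambda'_-(1)=0$ --- precisely the two regimes of $\pi_t^{(1)}$. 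Feeding Lemma~\ref{lem:LambdaExpansion} into the displayed identity and Taylor-expanding $\Lambda$ around $u_t$ then gives, for $\Lambda'_-(1)<0$ and $\pi_t^{(1)}=1$, the exponent $\lambda\Lambda'(u_t)-\tfrac{2b}{\alpha}\log\!\bigl(1-\lambda/(t(1-u_t))\bigr)+o(1)$, whose limit is the linear term $\lambda\Lambda'_-(1)$ plus the logarithm of a Gamma transform; and, for $\Lambda'_-(1)=0$ and $\pi_t^{(1)}=\sqrt{t}$, a second-order expansion gives $\sqrt{t}\,\lambda\Lambda'(u_t)+\tfrac12\lambda^2\Lambda''(u_t)+o(1)$ plus the logarithmic term, with limit the Gaussian exponent $-\lambda(2b\Lambda''_-(1)/\alpha)^{1/2}+\tfrac12\lambda^2\Lambda''_-(1)$ plus a Gamma exponent. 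In each case the limit is $\log\mathbb{E}[\exp(\lambda Y)]$ for $Y$ as in the statement, and the conclusion follows.

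The step I expect to be the main obstacle is showing that the remainder contribution $R_t^{(1)}(u_t+\lambda\pi_t^{(1)}/t)-R_t^{(1)}(u_t)$ tends to $0$: the evaluation point $u_t$ runs to the right endpoint $1$ of the interval on which Lemma~\ref{lem:LambdaExpansion} only asserts convergence on compacts. To handle this I would return to~\eqref{eq:LambdaT} and note that the apparent singularity of the limit of $R_t^{(1)}$ at $u=1$ is removable --- the prefactor $u(u-1)$ there cancels the vanishing of $\gamma(u)-\chi(u)$, since $\gamma(u)^2-\chi(u)^2=\alpha u(1-u)$ forces $\gamma(u)-\chi(u)\sim\alpha(1-u)/(2\chi(1))$ and hence $u(u-1)/(\gamma(u)-\chi(u))\to-2\chi(1)/\alpha$ as $u\to1$ --- so that $R_t^{(1)}$ in fact converges uniformly on a one-sided neighbourhood of $1$; Cauchy's estimates, applied to these analytic functions, then bound $(R_t^{(1)})'$ uniformly there, so the increment over a step of size $\lambda\pi_t^{(1)}/t=o(1)$ vanishes. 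The rest is the routine bookkeeping of the two Taylor expansions and the critical-point equation, together with the observation that $u_t+\lambda\pi_t^{(1)}/t$ indeed stays in $\mathcal{D}_t\cap(-\infty,1)$ for $|\lambda|$ small, which is immediate from the rates just found.
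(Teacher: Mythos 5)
Your proof follows the same route as the paper's: tilt by the time-dependent measure $\overline{\mathbb{P}}_t$, decompose $\Lambda_t$ via Lemma~\ref{lem:LambdaExpansion} into $t\Lambda$, the logarithmic correction and a remainder, derive the rate $u_t\to1$ from the critical-point identity $\overline{\Lambda}'_t(u_t)=0$, and Taylor-expand to recover the Dirac/Gaussian and Gamma factors. Your additional observation that the apparent singularity of $R_t^{(1)}$ at $u=1$ is removable (since $\gamma(u)-\chi(u)\sim\alpha(1-u)/(2\chi(1))$ cancels the $u(u-1)$ prefactor in~\eqref{eq:LambdaT}) is a welcome refinement: the paper invokes only convergence of $R_t^{(1)}$ on compact subsets of $\mathcal{D}_t\cap(-\infty,1)$, which does not by itself cover the evaluation points $u_t\to1$, so your argument fills in that step.
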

\begin{proof}
Consider first that $\Lambda'_-\left(1\right)<0$.
For all $t>0$ and all $\xi$ such that $u_t+\xi/t\in\mathcal{D}_t$, we can write
$$
\log\mathbb{E}_{\overline{\mathbb{P}}_t}\left(\exp\left(\xi\frac{X_t}{t}\right)\right)
=\log\mathbb{E}_{\overline{\mathbb{P}}_t}\left(\exp\left(\left(u_t+\frac{\xi}{t}\right)X_t-\Lambda_t\left(u_t\right)\right)\right)
=\Lambda_t\left(u_t+\frac{\xi}{t}\right)-\Lambda_t\left(u_t\right).
$$
From Lemma~\ref{lem:LambdaExpansion} and the fact that $\overline{\Lambda}'_t\left(u_t\right)=0$ 
(see~\eqref{eq:LambdaBar}),
a Taylor expansion around $1$ gives 
\begin{equation}\label{eq:Expansion}
\Lambda_-'\left(1\right)+\left(u_t-1\right)\Lambda_-''\left(1\right)+\frac{2b}{\alpha t\left(1-u_t\right)}+\mathcal{O}\left(\left(1-u_t\right)^2\right)=0.
\end{equation}
From~\eqref{eq:Expansion} we have 
\begin{equation}\label{eq:LimUt}
\lim_{t\to\infty}t\left(1-u_t\right)=-\frac{2b}{\alpha\Lambda'_-\left(1\right)},
\end{equation}
and hence using Lemma~\ref{lem:LambdaExpansion},
$$
\Lambda_t\left(u_t+\frac{\xi}{t}\right)-\Lambda_t\left(u_t\right)
=t\left(\Lambda\left(u_t+\frac{\xi}{t}\right)-\Lambda\left(u_t\right)\right)
-\frac{2b}{\alpha}\log\left(1-\frac{t^{-1}\xi}{1-u_t}\right)
+R_t\left(u_t+\frac{\xi}{t}\right)-R_t\left(u_t\right).
$$
Therefore the equality~\eqref{eq:Expansion} and the limit in~\eqref{eq:LimUt} imply the following behaviours
$$
t\left(\Lambda\left(u_t+\frac{\xi}{t}\right)-\Lambda\left(u_t\right)\right)
=\xi\Lambda'_-\left(1\right)+o\left(1\right)
,\quad\text{and}\quad
-\frac{2b}{\alpha}\log\left(1-\frac{t^{-1}\xi}{1-u_t}\right)
=-\frac{2b}{\alpha}\log\left(1+\frac{\alpha\xi}{2b}\Lambda'_-\left(1\right)\right)+o\left(1\right).$$

Since the function $R_t$ converges on any compact subset of $\mathcal{D}_t$ we eventually obtain
$$\lim_{t\to\infty}\mathbb{E}_{\overline{\mathbb{P}}_t}\left(\exp\left(\xi X_t/t\right)\right)
=\xi\Lambda'_-\left(1\right)-\frac{2b}{\alpha}\log\left(1+\frac{\alpha\xi}{2b}\Lambda'_-\left(1\right)\right),$$
which proves the lemma in the case $\Lambda'_-\left(1\right)<0$.

Let us now consider the case $\Lambda'_-\left(1\right)=0$.
For all $t>0$ and all $\xi$ such that $u_t+\xi/\sqrt{t}\in\mathcal{D}_t$, we have
$$
\log\mathbb{E}_{\overline{\mathbb{P}}_t}\left(\exp\left(\xi\frac{X_t}{\sqrt{t}}\right)\right)
=\log\mathbb{E}_{\overline{\mathbb{P}}_t}\left(\exp\left(\left(u_t+\frac{\xi}{\sqrt{t}}\right)X_t-\Lambda_t\left(u_t\right)\right)\right)
=\Lambda_t\left(u_t+\frac{\xi}{\sqrt{t}}\right)-\Lambda_t\left(u_t\right).
$$
The expansion~\eqref{eq:Expansion} now gives us
\begin{equation}\label{eq:LimUt2}
\lim_{t\to\infty}t\left(1-u_t\right)^2=\frac{2b}{\alpha\Lambda''_-\left(1\right)},
\end{equation}
and hence using Lemma~\ref{lem:LambdaExpansion} we again have
$$
\Lambda_t\left(u_t+\frac{\xi}{\sqrt{t}}\right)-\Lambda_t\left(u_t\right)
=t\left(\Lambda\left(u_t+\frac{\xi}{\sqrt{t}}\right)-\Lambda\left(u_t\right)\right)
-\frac{2b}{\alpha}\log\left(1-\frac{t^{-1/2}\xi}{1-u_t}\right)
+R_t\left(u_t+\frac{\xi}{\sqrt{t}}\right)-R_t\left(u_t\right).
$$
Therefore the equality~\eqref{eq:Expansion} and the limit in~\eqref{eq:LimUt2} imply the following behaviours
\begin{align*}
t\left(\Lambda\left(u_t+\frac{\xi}{\sqrt{t}}\right)-\Lambda\left(u_t\right)\right)
 & = \frac{\xi^2}{2}\Lambda''_-\left(1\right)-\xi\sqrt{\frac{2b\Lambda''_-\left(1\right)}{\alpha}}+o\left(1\right),\\
-\frac{2b}{\alpha}\log\left(1-\frac{t^{-1/2}\xi}{1-u_t}\right)
 & = -\frac{2b}{\alpha}\log\left(1-\xi\sqrt{\frac{\alpha\Lambda''_-\left(1\right)}{2b}}\right)+o\left(1\right).
\end{align*}
Since the function $R_t$ converges on any compact subset of $\mathcal{D}_t$ we obtain
$$\lim_{t\to\infty}\mathbb{E}_{\overline{\mathbb{P}}_t}\left(\exp\left(\xi X_t/\sqrt{t}\right)\right)
=
\frac{\xi^2}{2}\Lambda''_-\left(1\right)-\xi\sqrt{\frac{2b\Lambda''_-\left(1\right)}{\alpha}}
 -\frac{2b}{\alpha}\log\left(1-\xi\sqrt{\frac{\alpha\Lambda''_-\left(1\right)}{2b}}\right)
,$$
which proves the lemma in the case $\Lambda'_-\left(1\right)=0$.
\end{proof}

Following similar steps as in the proof of Proposition~\ref{prop:NonSteep1}, 
Lemma~\ref{lem:LambdaExpansionAt0} above implies that for any $t>0$, the function
$\overline{\Lambda}^{(o)}$ defined on $\mathcal{D}_t\cap\left(0,\infty\right)$ defined by
$$\overline{\Lambda}^{(o)}(u):=\Lambda(u) t-\frac{2b}{\alpha}\log(u),$$
is well defined, convex and steep at the origin.
Furthermore for any $t>0$, there exists a unique $u_t^{(o)}>0$ satisfying 
$\partial_u\overline{\Lambda}^{(o)}\left(u_t^{(o)}\right)=0$
and $u_t^{(o)}$ converges to zero from above as $t$ tends to infinity.
Similar to~\eqref{eq:ProbaPBar}, we can now define a new probability measure $\mathbb{P}^{(o)}$ by
\begin{equation}\label{eq:ProbaPBar0}
\frac{\D\mathbb{P}^{(o)}_t}{\D\mathbb{P}_t}\left(z\right)
:=\exp\Big(u_t^{(o)} z t-\Lambda_t\left(u_t^{(o)}\right)\Big),
\quad\text{for any }z\in\mathbb{R}.
\end{equation}
An analogue to Lemma~\ref{lem:WeakConvergenceAt1}---the proof of which follows similarly---brings the following weak convergence 
result under the probability measure $\mathbb{P}^{(o)}$.

\begin{lemma}\label{lem:WeakConvergenceAt0}
Under the measure $\mathbb{P}^{(o)}_t$ defined in~\eqref{eq:ProbaPBar0}, the sequence of random variables 
$\left(\pi_t^{(1)} X_t/t\right)_t>0$ 
converges weakly to the random variable $Y$ where
$$
Y\equalDistrib
\left\{
\begin{array}{lll}
\displaystyle 
\delta\left(\Lambda'_+\left(0\right)\right)
\ast\Gamma\left(\frac{2b}{\alpha},-\frac{2b}{\alpha\Lambda'_+\left(0\right)}\right),
\quad & \text{and}\quad \pi_t^{(1)}=1,
\quad & \text{if }\Lambda'_+\left(0\right)>0;\\
\\
\displaystyle 
\mathcal{N}\left(-\sqrt{\frac{2b\Lambda''_+\left(0\right)}{\alpha}},\Lambda''_+\left(0\right)\right)
\ast\Gamma\left(\frac{2b}{\alpha},\sqrt{\frac{2b}{\alpha\Lambda''_+\left(0\right)}}\right),
\quad & \text{and}\quad \pi_t^{(1)}=\sqrt{t},
\quad & \text{if }\Lambda'_+\left(0\right)=0.\\
\end{array}
\right.
$$
\end{lemma}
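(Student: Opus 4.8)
The plan is to mimic, \emph{mutatis mutandis}, the proof of Lemma~\ref{lem:WeakConvergenceAt1}, with the right boundary $1$ replaced by the left boundary $0$, the auxiliary function $\overline\Lambda_t$ replaced by $\overline\Lambda^{(o)}_t(u)=\Lambda(u)t-\frac{2b}{\alpha}\log(u)$, the tilted measure $\overline{\mathbb{P}}_t$ of~\eqref{eq:ProbaPBar} replaced by $\mathbb{P}^{(o)}_t$ of~\eqref{eq:ProbaPBar0}, and Lemma~\ref{lem:LambdaExpansion} replaced by Lemma~\ref{lem:LambdaExpansionAt0}. As in Proposition~\ref{prop:NonSteep0}, the relevant regime is $\chi(0)>0$, which is exactly the hypothesis of Lemma~\ref{lem:LambdaExpansionAt0}; in particular $\Lambda$ extends smoothly up to $0$ and $\overline\Lambda^{(o)}_t$ is convex and steep at $0$, with a unique minimiser $u^{(o)}_t\downarrow 0$.

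First I would unfold the change of measure as in the proof of Lemma~\ref{lem:WeakConvergenceAt1}: with $\pi^{(1)}_t\in\{1,\sqrt t\}$ and for all $\xi$ such that $u^{(o)}_t+\xi\pi^{(1)}_t/t\in\mathcal{D}_t$,
$$\log\mathbb{E}_{\mathbb{P}^{(o)}_t}\Bigl(\exp\bigl(\xi\,\pi^{(1)}_t X_t/t\bigr)\Bigr)=\Lambda_t\bigl(u^{(o)}_t+\xi\pi^{(1)}_t/t\bigr)-\Lambda_t\bigl(u^{(o)}_t\bigr).$$
Then I would determine the rate at which $u^{(o)}_t\downarrow 0$: the first-order condition $\partial_u\overline\Lambda^{(o)}_t(u^{(o)}_t)=0$ reads $\Lambda'(u^{(o)}_t)=\frac{2b}{\alpha t\,u^{(o)}_t}$, and expanding $\Lambda'$ about $0$ gives
$$\Lambda'_+(0)+u^{(o)}_t\Lambda''_+(0)+\mathcal{O}\bigl((u^{(o)}_t)^2\bigr)=\frac{2b}{\alpha t\,u^{(o)}_t}.$$
Hence $t\,u^{(o)}_t\to\frac{2b}{\alpha\Lambda'_+(0)}$ when $\Lambda'_+(0)>0$, in which case $\pi^{(1)}_t=1$; and $t\,(u^{(o)}_t)^2\to\frac{2b}{\alpha\Lambda''_+(0)}$ when $\Lambda'_+(0)=0$, in which case $\pi^{(1)}_t=\sqrt t$. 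This is the mirror image of the dichotomy $\Lambda'_-(1)<0$ versus $\Lambda'_-(1)=0$ of Lemma~\ref{lem:WeakConvergenceAt1}.

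The main computation is then to substitute $\Lambda_t(u)=t\Lambda(u)-\frac{2b}{\alpha}\log(u)+R^{(0)}_t(u)$ (Lemma~\ref{lem:LambdaExpansionAt0}) into the identity above and to pass to the limit term by term. The difference of the remainders, $R^{(0)}_t(u^{(o)}_t+\xi\pi^{(1)}_t/t)-R^{(0)}_t(u^{(o)}_t)$, tends to $0$ since $R^{(0)}_t$ converges locally uniformly and is smooth, while the two evaluation points tend to $0$ but stay $\Theta(1/t)$, resp.\ $\Theta(1/\sqrt t)$, away from it, whereas the sub-exponential corrections absorbed in $R^{(0)}_t$ decay like $\E^{-\chi(0)t}$. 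A Taylor expansion of $t\bigl(\Lambda(u^{(o)}_t+\xi\pi^{(1)}_t/t)-\Lambda(u^{(o)}_t)\bigr)$, combined with the rates above, produces $\xi\Lambda'_+(0)$ in the non-degenerate case and $\xi\sqrt{2b\Lambda''_+(0)/\alpha}+\frac{\xi^2}{2}\Lambda''_+(0)$ in the degenerate one (here the sign differs from Lemma~\ref{lem:WeakConvergenceAt1} because now $\Lambda'(u^{(o)}_t)=\frac{2b}{\alpha t u^{(o)}_t}>0$), while $-\frac{2b}{\alpha}\log\bigl(1+\xi\pi^{(1)}_t/(t u^{(o)}_t)\bigr)$ converges to $-\frac{2b}{\alpha}\log\bigl(1+\frac{\alpha\Lambda'_+(0)}{2b}\xi\bigr)$, resp.\ $-\frac{2b}{\alpha}\log\bigl(1+\xi\sqrt{\alpha\Lambda''_+(0)/(2b)}\bigr)$. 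Adding these limits one recognises the cumulant generating function of a Dirac mass (resp.\ a Gaussian) convolved with a Gamma law, finite near the origin, and L\'evy's continuity theorem yields the stated weak convergence.

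The two points requiring some care are precisely the ones flagged above, and both are settled by the first-order condition together with Lemma~\ref{lem:LambdaExpansionAt0}: controlling $R^{(0)}_t$ along the moving tilt $u^{(o)}_t$ (which works because $u^{(o)}_t$ leaves $0$ only polynomially fast, against an exponentially small error), and pinning down the scaling $\pi^{(1)}_t$ and the exact rate of $u^{(o)}_t\downarrow 0$ in the borderline case $\Lambda'_+(0)=0$. Everything else transcribes verbatim from the proof of Lemma~\ref{lem:WeakConvergenceAt1}.
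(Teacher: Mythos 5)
Your proposal is correct and mirrors the paper's own approach exactly: the paper merely states that the proof of this lemma ``follows similarly'' to Lemma~\ref{lem:WeakConvergenceAt1}, and your write-up carries out precisely that mirror argument, substituting $\overline\Lambda^{(o)}$ for $\overline\Lambda_t$, Lemma~\ref{lem:LambdaExpansionAt0} for Lemma~\ref{lem:LambdaExpansion}, and extracting the rate at which $u^{(o)}_t\downarrow 0$ from the first-order condition $\Lambda'(u^{(o)}_t)=2b/(\alpha t u^{(o)}_t)$. Your remark that the linear contribution now has the opposite sign (because $\Lambda'(u^{(o)}_t)>0$ whereas $\Lambda'(u_t)<0$ in the other lemma) is a correct and useful observation: it indicates that the Gaussian mean in the stated lemma should be $+\sqrt{2b\Lambda''_+(0)/\alpha}$ rather than $-\sqrt{2b\Lambda''_+(0)/\alpha}$, and that the Gamma scale should likewise be reflected and inverted in magnitude to $-\sqrt{\alpha\Lambda''_+(0)/(2b)}$ (resp.\ $\alpha\Lambda'_+(0)/(2b)$ in the non-degenerate case) — a discrepancy between the displayed Gamma scale and the limiting cumulant generating function that is in fact already present in the paper's proof of Lemma~\ref{lem:WeakConvergenceAt1}.
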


%%%%%%%%%%%%%%%%%%%%%%%%%%%%%%%%%%%%%%%%%%%%%%%%%%%%%%%%%%%%%%%%%%%%%%%%%%%%%%%%%%%%%%%%%%%%%%%%%%%%
%%%%%%%%%%%%%%%%%%%%%%%%%%%%%%%%%%%%%%%%%%%%%%%%%%%%%%%%%%%%%%%%%%%%%%%%%%%%%%%%%%%%%%%%%%%%%%%%%%%%

\end{document}